\documentclass[12pt]{article}
\usepackage[margin=1in]{geometry}
\usepackage[T1]{fontenc}
\usepackage[utf8]{inputenc}
\usepackage[authoryear,longnamesfirst]{natbib}
\usepackage{amsmath,amssymb,amsthm}
\usepackage{mathtools}
\usepackage{bm}
\usepackage{dsfont}
\usepackage{graphicx}
\usepackage{booktabs}
\usepackage[table]{xcolor}
\usepackage{threeparttable}
\usepackage{tabularx}
\usepackage{array}
\usepackage{enumitem}
\usepackage{needspace}
\usepackage{setspace}
\usepackage{etoolbox}
\usepackage{microtype}
\usepackage{caption}
\usepackage{longtable}
\usepackage{float}
\usepackage{xr-hyper}
\usepackage[colorlinks,
            linkcolor=blue, anchorcolor=blue, citecolor=blue,
            urlcolor=blue, filecolor=blue]{hyperref}
\usepackage[nameinlink,noabbrev]{cleveref}

\AtBeginEnvironment{table}{\setstretch{1}}
\AtBeginEnvironment{longtable}{\singlespacing}
\AtBeginEnvironment{thebibliography}{\singlespacing}
\hypersetup{
  pdfauthor={Peter Reinhard Hansen and Chen Tong},
  pdfsubject={A balanced parametrization of correlation matrices},
  pdfkeywords={correlation matrix, unconstrained parametrization,
    balanced correlation parametrization, asymptotic sparsity, high-dimensional correlation}
}
\newcommand{\R}{\mathbb{R}}
\newcommand{\one}{\iota}
\newcommand{\indic}{\mathds{1}}

\newcommand{\Sym}[1]{\mathbb{S}_{#1}}
\newcommand{\Symp}[1]{\mathbb{S}_{#1}^{++}}
\newcommand{\CorP}[1]{\mathcal{C}_{#1}}
\newcommand{\Hol}[1]{\mathbb{H}_{#1}}
\newcommand{\HolBlock}[1]{\mathbb{H}_{#1}^{\mathcal P}}
\newcommand{\CorBlock}[1]{\mathcal{C}_{#1}^{\mathcal P}}
\newcommand{\Bal}[1]{\mathbb{B}_{#1}}

\newcommand{\Diag}{\operatorname{Diag}}
\newcommand{\diag}{\operatorname{diag}}
\newcommand{\Off}{\operatorname{Off}}
\newcommand{\Cor}{\operatorname{Cor}}
\newcommand{\vecl}{\operatorname{vecl}}
\newcommand{\tr}{\operatorname{tr}}
\newcommand{\rank}{\operatorname{rank}}

\newcommand{\LogLS}{\operatorname{Log}_{\mathrm{LS}}}

\newcommand{\cinv}{\mathcal{I}}
\newcommand{\arsinh}{\operatorname{arsinh}}
\newcommand{\atanh}{\operatorname{atanh}}

\newcommand{\norm}[1]{\left\lVert #1\right\rVert}
\newcommand{\normF}[1]{\left\lVert #1\right\rVert_{\!F}}
\newcommand{\ip}[2]{\left\langle #1,#2\right\rangle}
\newcommand{\Had}{\mathbin{\circ}}
\newcommand{\dd}{\mathrm{d}}
\newcommand{\cO}{\mathcal{O}}

\newcommand{\sgnperm}{\mathfrak{S}^{\pm}_n}

\theoremstyle{plain}
\newtheorem{theorem}{Theorem}
\newtheorem{proposition}{Proposition}
\newtheorem{corollary}{Corollary}
\newtheorem{lemma}{Lemma}
\newtheorem{assumption}{Assumption}

\theoremstyle{definition}
\newtheorem{definition}{Definition}
\newtheorem{example}{Example}
\crefname{example}{example}{examples}
\Crefname{example}{Example}{Examples}

\theoremstyle{remark}
\newtheorem{remark}{Remark}

\makeatletter
\renewenvironment{proof}[1][\proofname]{\par
  \pushQED{\qed}%
  \normalfont \topsep6\p@\@plus6\p@\relax
  \trivlist
  \item[\hskip\labelsep\bfseries #1\@addpunct{.}]\ignorespaces
}{%
  \popQED\endtrivlist\@endpefalse
}
\makeatother

\crefname{theorem}{theorem}{theorems}
\Crefname{theorem}{Theorem}{Theorems}
\crefname{proposition}{proposition}{propositions}
\Crefname{proposition}{Proposition}{Propositions}
\crefname{corollary}{corollary}{corollaries}
\Crefname{corollary}{Corollary}{Corollaries}
\crefname{lemma}{lemma}{lemmas}
\Crefname{lemma}{Lemma}{Lemmas}
\crefname{assumption}{assumption}{assumptions}
\Crefname{assumption}{Assumption}{Assumptions}
\crefname{definition}{definition}{definitions}
\Crefname{definition}{Definition}{Definitions}
\crefname{remark}{remark}{remarks}
\Crefname{remark}{Remark}{Remarks}
\newcommand{\keeptheoremstart}[1]{%
  \AtBeginEnvironment{#1}{\Needspace{3\baselineskip}}}
\forcsvlist{\keeptheoremstart}{theorem,proposition,corollary,lemma,assumption,definition,remark,example}
\AtBeginEnvironment{proof}{\Needspace{3\baselineskip}}
\newcommand{\keywords}[1]{\par\smallskip\noindent\textbf{Keywords:} #1}
\newcommand{\msc}[1]{\par\smallskip\noindent\textbf{MSC 2020:} #1}

\author{Peter Reinhard Hansen$^{\mathsection}$ \quad and \quad Chen Tong$^{\ddagger}$\\[1ex]
  {\small $^{\mathsection}$Department of Economics, University of North Carolina at Chapel Hill}\\
  {\small $^{\ddagger}$School of Economics, Xiamen University}}
\date{October 2026}
\hypersetup{pdftitle={A Balanced Parametrization of Correlation Matrices}}

\title{\textbf{A Balanced Parametrization of Correlation Matrices}\thanks{Corresponding author: Peter Reinhard Hansen. Email: hansen@unc.edu. Chen Tong acknowledges financial support
from the National Natural Science
Foundation of China (72301227, 72673150) and the Fujian Provincial Natural Science
Foundation of China (2025J08008).
}}

\makeatletter
\let\BCPauthor\@author
\let\BCPdate\@date
\makeatother
\begin{document}
\maketitle

\onehalfspacing
 
\begin{abstract}
We introduce the \emph{balanced correlation parametrization} (BCP), which maps positive-definite correlation matrices one-to-one onto the set of symmetric hollow (zero-diagonal) matrices with unrestricted off-diagonal entries. A key feature of the parametrization is that both the transformation and its inverse are available in closed form. It can be viewed as a generalization of the transformation underlying the classical correlation $t$-statistic to higher dimensions. It has a direct interpretation as a scaled sum of marginal and partial correlations, is inverse-consistent and equivariant under variable permutations and sign changes, and preserves block structure. A novel property of BCP is that it preserves correlation sparsity as the dimension tends to infinity under suitable factor conditions, providing a basis for sparse high-dimensional correlation modeling. An empirical illustration using standardized idiosyncratic daily stock returns for 1,765 equities over 15 years shows that BCP retains the approximate sparsity pattern and magnitudes of shock correlations, and that it yields accurate positive-definite reconstructions under sparse coordinate restrictions.
\end{abstract}

\keywords{correlation matrix; unconstrained parametrization; balanced correlation parametrization; asymptotic sparsity; high-dimensional correlation}
\msc{15B48; 15A16; 15A24; 53B20; 62H20}

\newpage

\section{Introduction}

Correlation matrices are central to multivariate models in statistics,
econometrics, and finance. They enter likelihoods, copula models,
and the analysis of portfolio risk.\footnote{Examples include multivariate GARCH
models with dynamic correlations
\citep{Engle:2002,TseTsui:2002,CrealKoopmanLucas:2011,EngleKelly:2012,Aielli:2013,EngleLedoitWolf:2019,PakelShephardSheppardEngle:2021,ArchakovHansenLundeMRG,TongHansenArchakov:2026,TongHansen:2025FactorGARCH}
and dynamic copula models with time-varying correlations
\citep{CrealTsay2015,OhPatton:2018,OhPatton:2023,OpschoorLucasBarraVanDick:2021,HafnerWang:2023}.}
Positive definiteness constrains the individual correlation coefficients,
which prevents their use as unrestricted parameters.\footnote{The
constraints are severe in high dimensions, where valid correlation
matrices occupy a vanishing fraction of the cube of entries,
see \citet{BohmHornik:2014} and \citet{Hansen:CorrelationMatrix}.}
Covariance normalization, as in the DCC model of \citet{Engle:2002}, rescales
an auxiliary positive-definite matrix to unit diagonal. However, this matrix
is not uniquely identified, since positive diagonal rescalings leave its
correlations unchanged.

The literature has introduced several one-to-one parametrizations that
avoid both problems. These include Cholesky and modified Cholesky
factorizations
\citep{PinheiroBates:1996,Pourahmadi:1999,vanOest:2021,ThanwerdasPennec:2022},
spherical coordinates
\citep{RapisardaBrigoMercurio:2007,PourahmadiWang:2015,TsayPourahmadi:2017,ForresterZhang:2020,LucchettiPedini:2024},
and sequential or vine-based partial correlations
\citep{KurowickaCooke:2003,Joe:2006,DanielsPourahmadi:2009,LewandowskiKurowickaJoe:2009,SturmfelsTsukermanWilliams:2016}.
These constructions provide explicit, computationally convenient coordinates
that can be transformed componentwise into unrestricted real parameters.
However, the coordinates depend on the ordering of the variables.
Matrix-logarithmic covariance models avoid this dependence
\citep{LeonardHsu:1992,ChiuLeonardTsui:1996}. However, the matrix logarithms
of correlation matrices do not fill a linear space, because their diagonal
entries are constrained by the unit-diagonal normalization.

\citet{ArchakovHansen:Correlation} overcome this problem by retaining the
off-diagonal entries of the matrix logarithm. Their generalized Fisher
transformation (GFT) is a global parametrization that does not depend on variable ordering.
\citet{Zwiernik:2025} applies the same idea to the inverse correlation
matrix, whose off-diagonal entries give another unrestricted
parametrization (InvC). However, inversion of both GFT and InvC requires a
numerically determined diagonal correction. The log-scaling (LS)
transformation of \citet{Thanwerdas:2024} has a direct inverse and treats
normalized correlation inversion symmetrically, but its transformation from
correlation matrices to the LS coordinates is not in closed form.

In this paper, we introduce the balanced correlation parametrization (BCP), which
maps positive-definite correlation matrices one-to-one onto symmetric hollow
(zero-diagonal) matrices with unrestricted off-diagonal entries. This means
that positive-definite correlation matrices can be parametrized by an
unrestricted vector in $\R^{n(n-1)/2}$, as is the case for GFT, InvC, and LS.
For a positive-definite correlation matrix
$C$, the transformation $B(C)$ and its inverse are given by
\begin{equation*}
 \begin{aligned}
  B(C)&=\tfrac12\left(D^{1/4}CD^{1/4}-D^{-1/4}C^{-1}D^{-1/4}\right),\\
  C(B)&=\Cor ( B+\sqrt{I_n+B^2}),
 \end{aligned}
\end{equation*}
where $D=\Diag(C^{-1})$ is the diagonal part of $C^{-1}$, and
$\Cor(\Sigma)=\Diag(\Sigma)^{-1/2}\Sigma\Diag(\Sigma)^{-1/2}$
rescales a positive-definite symmetric matrix to have unit diagonal.
The key advantage of BCP over GFT, InvC, and LS is that both the
transformation and its inverse are available in closed form and require
no iterative scaling or diagonal completion. Moreover, BCP has a range of additional convenient properties, including:
explicit derivatives, which support estimation and likelihood optimization;
equivariance under variable permutations and sign changes; inverse
consistency; exact preservation of block structure; and interpretable
coordinates that combine marginal and partial correlations.
A novel property of BCP is that it preserves correlation sparsity as the
dimension tends to infinity under suitable factor conditions, providing
a basis for sparse high-dimensional correlation modeling.
The other parametrizations share some of these properties, as summarized
in \Cref{tab:method-comparison}.

In the bivariate case, applying BCP to the sample correlation yields
exactly the classical $t$-statistic up to a sample-size scaling factor. Let $\hat{\rho}$ denote
the sample correlation from $T$ i.i.d. bivariate Gaussian observations.
Under the null hypothesis of zero correlation,
\begin{equation*}
 \frac{\hat{\rho}}{\sqrt{1-\hat{\rho}^{\,2}}}
 \sim t_{T-2}\!\left(0,\frac{1}{T-2}\right),
 \qquad
 \atanh(\hat{\rho})
 \overset{a}{\sim}N\!\left(0,\frac{1}{T-3}\right).
\end{equation*}
The first transformation is the bivariate BCP, and the second is Fisher's
$z$-transformation. The parametrization of
\citet{ArchakovHansen:Correlation} recovers this transformation in the
bivariate case, hence the name generalized Fisher transformation.
Like GFT, BCP extends a familiar scalar transformation from classical
correlation inference to the multivariate setting, providing an
unrestricted parametrization of correlation matrices.

The term \emph{balanced} refers to the balancing condition underlying BCP's
one-to-one mapping from the set $\CorP{n}$ of $n\times n$ positive-definite
correlation matrices onto the space $\Hol{n}$ of symmetric hollow matrices.
The construction
uses the fact that positive diagonal rescaling leaves correlations unchanged.
With $D=\Diag(C^{-1})$, the matrix $D^{1/4}$ is the unique positive diagonal
scaling for which $A=D^{1/4}CD^{1/4}$ and $A^{-1}$ have the same diagonal. This condition makes $B(C)=\frac12(A-A^{-1})$ hollow, while $B+\sqrt{I_n+B^2}$ is
positive definite for every symmetric $B$ and yields a correlation matrix
after normalization. The resulting coordinates also have a statistical
interpretation. Write $K=C^{-1}$, let
$\pi_{ij}=-K_{ij}/\sqrt{K_{ii}K_{jj}}$ be the full partial correlation,
and let $\tau_i=K_{ii}^{-1/2}$ be the residual standard deviation from
projecting standardized variable $i$ on all the other variables.
Each coordinate combines marginal and partial correlations, scaled by
the residual risk of the two variables:
\begin{equation*}
 [B(C)]_{ij}
 =\frac{C_{ij}+\pi_{ij}}{2\sqrt{\tau_i\tau_j}},
 \qquad i\ne j.
\end{equation*}

The main contribution of the paper is to introduce BCP, establish that it
is a global real-analytic bijection, and characterize its construction
through a rigidity theorem. Choosing the diagonal scale before applying
the spectral transformation makes the complete image hollow, so inversion
requires no recovery of an omitted diagonal. For $n\ge3$, direct hollowness
and unit local normalization uniquely determine BCP among primary spectral
transformations of balanced representatives
(\Cref{thm:rigidity-balanced-spectral}). Block symmetry becomes linear
restrictions on the transformed entries and permits computation at the
block level. Exact preservation also yields a new
unrestricted parametrization of block correlation matrices, providing an
alternative to the GFT-based parametrization discussed by
\citet{TongHansen:2023}. We also derive simplified BCP expressions for arrowhead,
Gaussian-tree, moving-average, autoregressive, and factor-model correlation
matrices (\Cref{sec:structured-correlations}), together with explicit
derivatives and a closed-form Jacobian determinant for likelihood
optimization, inference, and density transformation.

The literature on parametrization and sparsity emphasizes that zero
patterns and their interpretation depend on the scale used to represent
dependence \citep{Battey:2017,Battey:2019,RybakBattey:2021,RybakBatteyBharath:2026}.
Related algebraic properties of logarithmic sparsity are studied by
\citet{Pavlov:2024}. \citet{SaxenaBhatPinjari:2023} and
\citet{HuangYeWang:2025} develop Cholesky-based methods for imposing
specified correlation zeros. However, both depend on variable ordering
and require parameter adjustments tailored to the zero pattern.
We show that, under suitable factor conditions, BCP preserves
marginal-correlation sparsity as the dimension tends to infinity: a
coordinate tends to zero if and only if its corresponding correlation
does. Each coordinate is a positively rescaled correlation up to a
uniform error of order $\mu_n^{-1}$, where $\mu_n$ is the strength of
the weakest factor direction relative to residual noise. A common magnitude
threshold can therefore recover the entire marginal-correlation support
under suitable signal and estimation-error conditions. Under the same factor conditions, all GFT and
InvC coordinates shrink uniformly at rates
$O(\log(1+\mu_n)/\mu_n)$ and $O(\mu_n^{-1})$, respectively.
GFT, InvC, and LS do not
share BCP's general support-recovery guarantee: a zero correlation can
produce a larger coordinate magnitude
than a fixed nonzero correlation, even after common rescaling.
For a bounded number of factors, zeroing BCP coordinates at population correlation zeros also yields a
positive-definite reconstruction whose maximum entrywise error tends to
zero. These results provide a basis for sparse high-dimensional
correlation modeling in unrestricted coordinates,
although exact entrywise zeros need not be preserved at finite dimension.

Our empirical illustration uses a high-dimensional balanced panel of
idiosyncratic daily stock returns for 1,765 equities over the 15-year period
2011--2025. We remove the Fama--French five factors and momentum from each
stock's daily return and fit a univariate GJR-GARCH model to the residuals.
We compare the correlations of the resulting standardized shocks
with their BCP, GFT, LS, and InvC coordinates. BCP retains the visible approximate sparsity
pattern, sector structure, and entry magnitudes of shock correlations,
whereas the alternative coordinates track marginal correlations less directly.
The smoothed conditional-coordinate plots support this comparison:
the BCP median tracks shock correlations more closely, while GFT and LS
medians remain closer to zero and the InvC median slopes downward.
We also impose a common zero pattern selected
by pointwise 5\% correlation significance tests. For this sample and these restrictions, BCP yields a
positive-definite estimate that more closely tracks both the original and
directly thresholded correlations and has a lower descriptive BIC than GFT,
LS, or InvC. These findings illustrate how BCP accommodates sparse coordinate
restrictions while yielding a positive-definite reconstruction that closely
approximates the original correlation matrix.

The paper is organized as follows. \Cref{sec:derivation} introduces the
notation, then constructs and characterizes BCP.
\Cref{sec:properties} establishes its structural
properties. \Cref{sec:statistics-geometry} presents differentiation,
inference, and the local relation to GFT.
\Cref{sec:block-symmetry} develops reduced block formulas, and
\Cref{sec:structured-correlations} gives explicit BCP expressions for
selected structured families. Building on the factor-model formulas,
\Cref{sec:high-dimensional-sparsity} establishes asymptotic sparsity
preservation and compares GFT, InvC, and LS.
\Cref{sec:stock-residual-sparsity} presents the standardized-shock illustration. The separate supplement contains
proofs, extensions, and computational details.
Its sections, equations, and tables carry an S prefix.

\begin{table}[htp]
 \centering
 \caption{Comparison of properties of correlation-matrix parametrizations.}
 \label{tab:method-comparison}
 \begin{threeparttable}
 \footnotesize
 \setlength{\tabcolsep}{2.2pt}
 \renewcommand{\arraystretch}{1.18}
 \begin{tabularx}{\textwidth}{@{}l *{6}{>{\centering\arraybackslash}X}@{}}
  \toprule
  Method
  & \shortstack{Closed form\\both ways}
  & \shortstack{Permutation-\\equivariant}
  & \shortstack{Sign-\\equivariant}
  & \shortstack{Inverse-\\consistent}
  & \shortstack{Preserves\\Block}
  & \shortstack{Asymptotic\\Sparsity} \\
  \midrule
  Partial correlation\tnote{$\dagger$} & Yes & No & Yes & No & No & No \\
  GFT & No & Yes & Yes & No & Yes & No \\
  InvC & No & Yes & Yes & No & Yes & No \\
  LS & No & Yes & No & Yes & Yes & No \\
  BCP & \textbf{Yes} & \textbf{Yes} & \textbf{Yes} & \textbf{Yes} & \textbf{Yes} & \textbf{Yes} \\
  \bottomrule
 \end{tabularx}
 \begin{tablenotes}[flushleft]\footnotesize
  \item \textit{Notes:} \tnote{$\dagger$}Fixed-order partial-correlation, Cholesky,
  and hyperspherical parametrizations in sign-equivariant unrestricted
  coordinates. Inverse consistency: $C$ and $\cinv(C)$ have opposite
  coordinates. Preserves block: block structure is preserved in both directions.
  Asymptotic sparsity: under the conditions of \Cref{thm:sparse-bft},
  a coordinate vanishes if and only if the corresponding correlation does
  as $n\rightarrow\infty$; exact zeros
  need not be preserved at finite $n$. ``No'' means not in general.
 \end{tablenotes}
 \end{threeparttable}
\end{table}

\section{Construction and characterization of BCP}\label{sec:derivation}

We construct BCP in two steps. First, we select a unique positive diagonal
rescaling of each correlation matrix whose diagonal matches that of its
inverse. We then characterize the spectral transformation that maps these
balanced matrices to hollow matrices. The resulting transformation and its
inverse are explicit.

\subsection{Notation and preliminaries}\label{sec:background}

Let $I_n$ be the $n\times n$ identity matrix and $\Sym{n}$ the vector space
of symmetric $n\times n$ matrices, with symmetric positive-definite (SPD)
cone $\Symp{n}$. Write
$\CorP{n}=\{C\in\Symp{n}:\Diag(C)=I_n\}$ for the set of
positive-definite correlation matrices and
$\Hol{n}=\{X\in\Sym{n}:\Diag(X)=0\}$ for the symmetric hollow matrices. Both spaces have dimension $d=n(n-1)/2$. For a matrix $X$ and vector $x$, $\diag(X)$ extracts the diagonal vector,
$\diag(x)$ forms a diagonal matrix, and
$\Diag(X)=\diag\{\diag(X)\}$. 

For symmetric $X$, the projection onto $\Hol{n}$ is $\Off(X)=X-\Diag(X)$.
We use $\tr(X)$, $\norm{X}_2$, and $\normF{X}$ for the trace, spectral norm,
and Frobenius norm, respectively, and $\one_n$ for the $n$-vector of ones,
omitting its subscript when dimension is clear.

For $\Sigma\in\Symp{n}$, correlation normalization
$\Cor(\Sigma)=\Diag(\Sigma)^{-1/2}\Sigma\Diag(\Sigma)^{-1/2}$ maps a
covariance matrix to the corresponding correlation matrix.
The \emph{normalized correlation inverse} $\cinv(C)=\Cor(C^{-1})$ gives
the full partial correlations
$\pi_{ij}=-[\cinv(C)]_{ij}=-K_{ij}/\sqrt{K_{ii}K_{jj}}$, $i\ne j$,
where $K=C^{-1}$. A transformation $\Phi$ is \emph{inverse-consistent} if
$\Phi\{\cinv(C)\}=-\Phi(C)$.
The signed permutation group $\sgnperm$ consists of matrices $Q=EP$,
where $P$ is a permutation matrix and
$E=\diag(\varepsilon_1,\ldots,\varepsilon_n)$, $\varepsilon_i\in\{-1,1\}$.

\subsection{The balanced representative}

For each $C\in\CorP{n}$, we seek a positive-definite representative $A_C$
satisfying $\Cor(A_C)=C$ and a scalar spectral transformation $g$ such that
\begin{equation*}
 B=g(A_C)\in\Hol{n},
 \qquad
 C=\Cor\!\left(g^{-1}(B)\right)\in\CorP{n}.
\end{equation*}
Every such representative has the form $A=\Delta C\Delta$ for a positive
diagonal matrix $\Delta$. We select the scale by requiring the matrix and
its inverse to have the same diagonal, a condition preserved by inversion.

\begin{definition}[Balanced matrices]\label{def:balanced}
A symmetric positive-definite matrix $A\in\Symp{n}$ is \emph{balanced} if
its diagonal equals that of its inverse,
\begin{equation}\label{eq:balanced-condition}
 \Diag(A)=\Diag(A^{-1}).
\end{equation}
The set of balanced $n\times n$ matrices is
$\Bal{n}=\{A\in\Symp{n}:\Diag(A)=\Diag(A^{-1})\}$.
\end{definition}

\begin{theorem}[Unique balanced representative]\label{thm:balanced-section}
For every $C\in\CorP{n}$,
there is a unique positive diagonal matrix $\Delta$ such that
$A_C=\Delta C\Delta\in\Bal{n}$. Writing
$D=\Diag(C^{-1})$, the scale and representative are
\begin{equation}\label{eq:AC}
 \Delta=D^{1/4},
 \qquad
 A_C=D^{1/4}CD^{1/4},
\end{equation}
and
\begin{equation}\label{eq:balanced-diagonal}
 \Diag(A_C)=\Diag(A_C^{-1})=D^{1/2}.
\end{equation}
Moreover, for every signed permutation $Q\in\sgnperm$,
\begin{equation}\label{eq:balanced-inversion}
 A_{QCQ'}=QA_CQ',
 \qquad
 A_{\cinv(C)}=A_C^{-1}.
\end{equation}
\end{theorem}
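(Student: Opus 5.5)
The proof is a direct computation, because diagonal scaling acts on the inverse in a transparent way. Take any positive diagonal $\Delta$ and put $A=\Delta C\Delta$. Then $A^{-1}=\Delta^{-1}C^{-1}\Delta^{-1}$. Diagonal matrices commute with $\Diag(\cdot)$ under two-sided multiplication, so
\begin{equation*}
 \Diag(A)=\Delta\,\Diag(C)\,\Delta=\Delta^2,
 \qquad
 \Diag(A^{-1})=\Delta^{-1}\Diag(C^{-1})\Delta^{-1}=\Delta^{-2}D,
\end{equation*}
using $\Diag(C)=I_n$. The balancing condition \eqref{eq:balanced-condition} therefore reduces to the diagonal equation $\Delta^4=D$. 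The entries of $D$ are $K_{ii}>0$ because $C^{-1}$ is SPD, so this equation has the unique positive diagonal solution $\Delta=D^{1/4}$. This gives existence, uniqueness, and \eqref{eq:AC} in one step. Substituting back gives $\Diag(A_C)=\Delta^2=D^{1/2}$ and $\Diag(A_C^{-1})=D^{-1/2}D=D^{1/2}$, which is \eqref{eq:balanced-diagonal}. Positive definiteness of $A_C$ is immediate, since $A_C$ is a congruence of $C$.

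For the signed-permutation equivariance, let $Q=EP\in\sgnperm$. Then $QCQ'$ is a correlation matrix, $(QCQ')^{-1}=QC^{-1}Q'$, and
\begin{equation*}
 \Diag(QC^{-1}Q')=PDP'=QDQ'.
\end{equation*}
This holds because signs cancel on the diagonal and $E$ commutes with diagonal matrices. Hence $D_{QCQ'}^{1/4}=QD^{1/4}Q'$, since conjugating a diagonal matrix by $Q$ permutes its entries, and $QQ'=I$. It follows that
\begin{equation*}
 A_{QCQ'}=QD^{1/4}Q'\,QCQ'\,QD^{1/4}Q'=QA_CQ'.
\end{equation*}
As an alternative argument, $QA_CQ'$ is balanced and is a diagonal rescaling of $QCQ'$, so uniqueness already forces the identity.

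For inverse consistency I would rely on uniqueness rather than compute $D$ for $\cinv(C)$ directly. First, $A_C^{-1}$ is balanced, because condition \eqref{eq:balanced-condition} is symmetric under inversion. Second, $A_C^{-1}=D^{-1/4}C^{-1}D^{-1/4}$ is a positive diagonal rescaling of $C^{-1}$, and $C^{-1}=D^{1/2}\cinv(C)D^{1/2}$. So $A_C^{-1}=D^{1/4}\cinv(C)D^{1/4}$, which is a positive diagonal rescaling of $\cinv(C)\in\CorP{n}$. By the uniqueness part already proved, applied to the correlation matrix $\cinv(C)$, we get $A_{\cinv(C)}=A_C^{-1}$.

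There is no real obstacle here. The only points needing care are the identity $\Diag(\Delta X\Delta)=\Delta\Diag(X)\Delta$ for diagonal $\Delta$, and checking that $\cinv(C)$ is a positive-definite correlation matrix so that uniqueness can be applied to it.
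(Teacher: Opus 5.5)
Your proof is correct and follows the paper's argument: balancing reduces to the entrywise equation $\delta_i^4=(C^{-1})_{ii}$, whose positive solution is unique, and signed permutations only permute the diagonal of $C^{-1}$. The one small difference is in the inversion identity. The paper computes $\Diag(\cinv(C)^{-1})=\Diag(D^{1/2}CD^{1/2})=D$ and applies the formula \eqref{eq:AC} directly, whereas you invoke uniqueness after observing that $A_C^{-1}=D^{1/4}\cinv(C)D^{1/4}$ is balanced; both routes are equally short.
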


The identity $A_{\cinv(C)}=A_C^{-1}$ shows that normalized correlation
inversion corresponds to inversion of the balanced representative.
The componentwise scaling is the
correlation-matrix specialization of the construction in Lemma~2.4 of
\citet{Fiedler:1964}, which rescales a positive-definite matrix so that its
diagonal equals that of its inverse. The proof of
\Cref{thm:balanced-section}, including the scaling formula and its behavior
under inversion, is given in \Cref{app:balancing-characterization}.

\subsection{Direct hollowness determines the spectral transformation}

With the balanced representative fixed, it remains to choose the spectral
transformation. Let $g:(0,\infty)\to\mathbb R$ be a scalar function.
For an eigendecomposition $A=Q\diag(a_1,\ldots,a_n)Q'$, define its primary
spectral matrix function
by
$g(A) =Q\diag\{g(a_1),\ldots,g(a_n)\}Q'$.
Such a transformation treats all eigendirections symmetrically and, when $g$
is a scalar diffeomorphism, can be inverted by applying $g^{-1}$ to the
eigenvalues.

We require the complete matrix $g(A_C)$ to be hollow for every $C$, so
inversion does not require completing a missing diagonal. We call this
property \emph{direct hollowness}. The normalization $g'(1)=1$ sets the
local scale at independence. For $n\geq3$, these two requirements uniquely
determine $g$.

\begin{theorem}[Rigidity of balanced spectral transformations]
\label{thm:rigidity-balanced-spectral}
Fix $n\geq3$.  Suppose that $g:(0,\infty)\to\mathbb R$ is differentiable at
$1$ and that
\begin{equation}\label{eq:rigidity-hollow}
 \Diag\{g(A)\}=0
 \qquad\text{for every }A\in\Bal{n}.
\end{equation}
Then
\begin{equation*}
 g(a)=\frac{g'(1)}{2}\left(a-a^{-1}\right),
 \qquad a>0.
\end{equation*}
Consequently, under the unit local normalization $g'(1)=1$,
\begin{equation*}
 g(A)=\frac12(A-A^{-1}),
 \qquad
 g(a)=\frac12(a-a^{-1})=\sinh(\log a).
\end{equation*}
\end{theorem}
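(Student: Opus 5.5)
The plan is to turn the hollowness requirement into a functional equation for $g$, working in the eigenbasis. Write $A=Q\diag(a_1,\ldots,a_n)Q'\in\Symp{n}$ and set $W_{ik}=Q_{ik}^2$, which is a doubly stochastic (orthostochastic) matrix. Then
\[
\diag(A)-\diag(A^{-1})=W\,h(a)
\qquad\text{and}\qquad
\diag\{g(A)\}=W\,g(a),
\]
where $h(a)=a-a^{-1}$ and $h$, $g$ act entrywise on $a=(a_1,\ldots,a_n)'$. So \eqref{eq:rigidity-hollow} says that $Wh(a)=0$ implies $Wg(a)=0$ for every orthostochastic $W$ that arises from a balanced matrix.

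\textbf{Step 1: reduce to $n=3$ and obtain basic identities.} A block-diagonal matrix is balanced exactly when each of its blocks is balanced, and $I_m$ is balanced. Taking $A=I_n$ in \eqref{eq:rigidity-hollow} gives $g(1)=0$. Hence for $n\ge3$ I can embed any $3\times3$ balanced block as $\operatorname{blockdiag}(A_3,I_{n-3})$, whose identity part contributes zero diagonal to $g(A)$. It therefore suffices to treat $n=3$.

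Next take a $2\times2$ block $\begin{pmatrix}p&q\\ q&p\end{pmatrix}$ with $p^2-q^2=1$; this is balanced, with eigenvalues $a$ and $a^{-1}$ and eigenvectors $(1,\pm1)'/\sqrt2$. Its hollowness gives $g(a)+g(a^{-1})=0$. Now define $\varphi=g\circ h^{-1}$ on $\R$, which is well defined because $h$ is an increasing bijection $(0,\infty)\to\R$. Then $\varphi(0)=0$ and $\varphi$ is odd.

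\textbf{Step 2: a two-point relation.} The target is to show that for every $x>0>y$ and every $p\in(0,1)$ with $px+(1-p)y=0$, we have $p\varphi(x)+(1-p)\varphi(y)=0$. This says $\varphi(x)/x=\varphi(y)/y$ for all $x>0>y$, so $\varphi(x)=\lambda x$. Consequently $g=\lambda h$, and differentiating at $1$ with $h'(1)=2$ gives $\lambda=g'(1)/2$, which is the claim. Differentiability at $1$ is only used to pin the constant.

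To get the two-point relation I need a $3\times3$ balanced matrix whose eigenvalues have $h$-values $(x,y,z)$ and whose first row of $W$ is $(p,1-p,0)$. Then row $1$ of $Wg(a)=0$ is exactly the desired relation.

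\textbf{Step 3: construct that matrix (the main obstacle).} The column sums of $W$ force rows $2$ and $3$ to sum to $(1-p,p,1)$. Orthogonality of both rows to $v=(x,y,z)'$ then forces
\[
z=-(1-p)x-py .
\]
The third eigenvector must be $q_3=(0,\cos\theta,\sin\theta)'$. Given $q_3$, the first two eigenvectors are determined: they are $(\sqrt p,\,\cdot,\,\cdot)'$ and $(\sqrt{1-p},\,\cdot,\,\cdot)'$ completing an orthonormal basis, unique up to signs. This leaves a one-parameter family in $\theta$.

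Let $f(\theta)$ be the inner product of row $2$ of $W$ with $v$. Row $1$ is already orthogonal to $v$, and by the choice of $z$ the rows of $W$ sum to a vector orthogonal to $v$, so row $3$ is orthogonal to $v$ as soon as row $2$ is. It therefore suffices to find $\theta$ with $f(\theta)=0$.

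At $\theta=0$ versus $\theta=\pi/2$, the roles of rows $2$ and $3$ swap. Hence $f(\pi/2)=-f(0)$, since the row-$2$ and row-$3$ inner products sum to zero. By the intermediate value theorem $f$ has a zero, which gives the required balanced $A$.

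The delicate part is verifying that the completion of $q_1,q_2$ depends continuously on $\theta$ and that the swap symmetry holds exactly. If that fails, I would fall back on an explicit parametrization of $SO(3)$ by Euler angles and solve the single scalar equation $f=0$ directly.
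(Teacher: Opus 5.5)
Your proof is correct, and it takes a genuinely different route from the paper's.

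\textbf{The paper's route.} It works directly in dimension $n$ with a single family, $B_t=tI_n-ntP$ with $P=\one\one'/n$. The balanced lift $A_t=B_t+\sqrt{I_n+B_t^2}$ has only two spectral projectors. Hollowness of $g(A_t)$ then gives the dilation equation $f(-mt)+mf(t)=0$, where $f=g\circ r$, $r(t)=t+\sqrt{1+t^2}$ and $m=n-1$. Iterating yields $f(m^2t)=m^2f(t)$. In that argument, differentiability of $g$ at $1$ is indispensable. The limit $f(t)/t=f(t/m^{2k})/(t/m^{2k})\to f'(0)$ is what excludes solutions such as $f(t)=t\,\psi(\log_m|t|)$ with $\psi$ one-periodic.

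\textbf{Your route.} You reduce to $n=3$ and use balanced matrices whose orthostochastic matrix has a zero in the first row. This gives $\varphi(x)/x=\varphi(y)/y$ for every pair $x>0>y$, so $\varphi$ is linear with no regularity assumption. Differentiability at $1$ is used only to identify the constant. You therefore prove a slightly stronger rigidity: for $n\ge3$, hollowness alone forces $g=c\,(a-a^{-1})$. The cost is a less immediate construction. The paper's argument is shorter but genuinely needs its regularity hypothesis.

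\textbf{The step you flag as delicate.} It is fully explicit, so no continuity argument, intermediate value theorem, or Euler-angle fallback is needed. Take $q_3=(0,\cos\theta,\sin\theta)'$, $q_1=(\sqrt p,-\sqrt{1-p}\sin\theta,\sqrt{1-p}\cos\theta)'$ and $q_2=(\sqrt{1-p},\sqrt p\sin\theta,-\sqrt p\cos\theta)'$. Rows $2$ and $3$ of $W$ are then $((1-p)\sin^2\theta,\,p\sin^2\theta,\,\cos^2\theta)$ and $((1-p)\cos^2\theta,\,p\cos^2\theta,\,\sin^2\theta)$. Your function is therefore $f(\theta)=-\{(1-p)x+py\}\cos2\theta$. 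At $\theta=\pi/4$ both rows equal $((1-p)/2,\,p/2,\,1/2)$ and are orthogonal to $v=(x,y,-(x+y))'$.

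Your oddness step $g(a)+g(a^{-1})=0$ is correct but unnecessary, since the two-point relation already covers both signs.
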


Differentiability at $1$ is the only regularity assumption. When
$g'(1)\ne0$, analyticity, global bijectivity, and the inversion identity
$g(a^{-1})=-g(a)$ follow from the characterization. The proof in
\Cref{app:balancing-characterization} reduces direct hollowness to a scalar
functional equation. The restriction $n\geq3$ is essential: in dimension
two, direct hollowness only requires $g(a^{-1})=-g(a)$, leaving infinitely
many normalized choices, including $g(a)=\log a$.

The characterization concerns primary spectral transformations of balanced
representatives. Entrywise odd analytic reparametrizations can preserve the
properties in \Cref{tab:method-comparison} without belonging to this class;
\Cref{rem:postcomposition-family} gives an explicit family.

Up to the factor $\frac12$, $g(A)=\frac12(A-A^{-1})$ is the covariance
transformation $\Sigma\mapsto\Sigma-\Sigma^{-1}$ that \citet{Zwiernik:2025}
studies as an algebraic alternative to the matrix logarithm. The unscaled
map $C-C^{-1}$ has diagonal entries $1-(C^{-1})_{ii}\le0$, which are
constrained by the correlations. Applying $g$ to the balanced representative
$A_C$ yields a hollow matrix.

\subsection{The transformation and its inverse}

Balance selects $A_C$, and the rigidity theorem determines
$g(a)=(a-a^{-1})/2$. Its scalar inverse, $g^{-1}(b)=b+\sqrt{1+b^2}$,
gives the inverse matrix transformation directly.

\begin{theorem}[Balanced correlation parametrization]\label{thm:global-diffeo}
For $C\in\CorP{n}$, let $A_C$ be its balanced representative in
\eqref{eq:AC}. The \emph{balanced correlation parametrization} is the global
real-analytic diffeomorphism $C\mapsto B(C)$ from $\CorP{n}$ onto $\Hol{n}$,
with inverse $B\mapsto C(B)$, given by
\begin{equation}\label{eq:BFT-main-definition}
 \begin{aligned}
  B(C)
  &=\tfrac12(A_C-A_C^{-1})=\sinh(\log A_C),\\[2pt]
  C(B)
  &=\Cor(B+\sqrt{I_n+B^2})
   =\Cor(e^{\arsinh B}),\qquad B\in\Hol{n}.
 \end{aligned}
\end{equation}
Here $\sinh$ and $\arsinh$ denote the corresponding matrix functions
defined by spectral calculus. The matrix $B+\sqrt{I_n+B^2}$ is the unique
balanced SPD matrix $A$ satisfying
$\frac12(A-A^{-1})=B$. Thus the balanced-to-hollow map is itself a
real-analytic bijection.
\end{theorem}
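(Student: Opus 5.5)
We factor the map $C\mapsto B(C)$ as the composition of two bijections, $\CorP{n}\to\Bal{n}$ given by $C\mapsto A_C$ and $\Bal{n}\to\Hol{n}$ given by $A\mapsto g(A)=\tfrac12(A-A^{-1})$. We then check that each factor is a real-analytic bijection with a real-analytic inverse.

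For the first factor, \Cref{thm:balanced-section} gives existence and uniqueness of $A_C=D^{1/4}CD^{1/4}$ with $D=\Diag(C^{-1})$. The map $C\mapsto A_C$ is real-analytic on $\CorP{n}$, since inversion and positive diagonal powers are analytic there. Its inverse on $\Bal{n}$ is the restriction of $\Cor$, which is analytic. Indeed, $\Cor(A_C)=C$, and if $A\in\Bal{n}$ with $\Cor(A)=C$, then $A=\Delta C\Delta$ for some positive diagonal $\Delta$, so uniqueness forces $A=A_C$. Hence $\CorP{n}\leftrightarrow\Bal{n}$ is a bijection, analytic in both directions, with $\Bal{n}$ regarded as a subset of $\Symp{n}$.

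For the second factor, the scalar map $g(a)=\tfrac12(a-a^{-1})$ is an analytic diffeomorphism $(0,\infty)\to\R$ with inverse $h(b)=b+\sqrt{1+b^2}$. By spectral calculus, $A\mapsto g(A)$ is therefore an analytic bijection $\Symp{n}\to\Sym{n}$ with inverse $B\mapsto h(B)=B+\sqrt{I_n+B^2}=e^{\arsinh B}$. The identity $\sinh(\log A)=g(A)$ holds eigenvalue by eigenvalue. Analyticity of primary matrix functions of analytic scalar functions on open spectral domains is standard, for example via the Cauchy integral representation. The remaining task is to show that this bijection sends $\Bal{n}$ exactly onto $\Hol{n}$. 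If $A\in\Bal{n}$, then $\Diag g(A)=\tfrac12\{\Diag(A)-\Diag(A^{-1})\}=0$. Conversely, if $B\in\Hol{n}$, set $A=h(B)\in\Symp{n}$. Then $g(A)=B$, so $\Diag(A)-\Diag(A^{-1})=2\Diag(B)=0$ and $A\in\Bal{n}$. This gives the "unique balanced SPD $A$ with $\tfrac12(A-A^{-1})=B$" clause: uniqueness holds even within all of $\Symp{n}$, because $g$ is injective on SPD matrices.

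Composing the two factors gives $B(C)=g(A_C)$ and $C(B)=\Cor(h(B))$, which are mutually inverse bijections $\CorP{n}\leftrightarrow\Hol{n}$. Both maps are compositions of analytic maps between open subsets of the affine spaces $\{\Diag=I\}$ and $\Hol{n}$. So both are real-analytic, and $B(\cdot)$ is a real-analytic diffeomorphism; a separate Jacobian argument is unnecessary because the inverse is explicitly analytic. The main point requiring care is the regularity framing. $\Bal{n}$ is not an open set, but we never need its manifold structure, since $B(C)$ and $C(B)$ are written directly as analytic expressions on the open sets $\CorP{n}\subset\{C:\Diag(C)=I_n\}$ and $\Hol{n}$. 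The key substantive ingredient is the uniqueness in \Cref{thm:balanced-section}, which makes $\Cor\circ h$ a left inverse and not merely a right inverse.
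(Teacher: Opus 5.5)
Your proof is correct and follows essentially the same route as the paper's: both show that $B+\sqrt{I_n+B^2}=A$ for every SPD $A$ with $B=\frac12(A-A^{-1})$, show that this lift of a hollow $B$ is balanced, and conclude from the analyticity of inversion, diagonal powers, the principal square root, and $\Cor$. The only difference is cosmetic: you obtain $A_{\Cor(A)}=A$ for balanced $A$ from the uniqueness in \Cref{thm:balanced-section}, whereas the paper checks it directly by computing $\Diag(C^{-1})=E^2$ with $E=\Diag(A)$.
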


Correlation normalization removes the scale used to construct $A_C$,
so both directions are explicit and require no iterative diagonal correction.
For vector coordinates, $\vecl$ stacks the strict lower-triangular entries,
giving $\beta(C)=\vecl\{B(C)\}\in\mathbb R^d$.
The reconstruction identities and proof appear in
\Cref{app:global-diffeo-proof} of the supplement.

\begin{remark}[Computation]
The forward map uses an SPD solve, diagonal extraction, and
matrix products.  The inverse uses one symmetric eigendecomposition and a
diagonal normalization.  Neither map requires an iterative scaling or diagonal-completion step.  For numerical stability, if
$B=Q\diag(\lambda_i)Q'$, evaluate
$B+\sqrt{I_n+B^2}=Q\diag\!\left(e^{\arsinh\lambda_i}\right)Q'$, which avoids
cancellation for large negative eigenvalues.
\end{remark}

\section{Structural properties}\label{sec:properties}

The balanced construction preserves the natural symmetries of correlation
matrices and gives an exact relation between correlation and precision.
We first establish equivariance and inverse consistency, then interpret
individual coordinates and study block structure, separability, and
behavior near the singular boundary. Proofs appear in
\Cref{app:proofs} of the supplement.

\subsection{Signed-permutation equivariance}

\begin{theorem}[Signed-permutation equivariance]\label{thm:signed-equivariance}
For every $Q\in\sgnperm$, every $C\in\CorP{n}$, and every $B\in\Hol{n}$,  we have
\begin{equation}\label{eq:BFT-equivariance}
 B(QCQ')=QB(C)Q',
 \qquad C(QBQ')=QC(B)Q'.
\end{equation}
\end{theorem}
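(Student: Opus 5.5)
The plan is to derive both identities from the balanced-representative equivariance $A_{QCQ'}=QA_CQ'$ in \Cref{thm:balanced-section}, combined with the facts that primary spectral matrix functions commute with orthogonal conjugation and that correlation normalization commutes with signed permutations. Every $Q\in\sgnperm$ is orthogonal, since $Q'Q=P'E^2P=I_n$. Hence, for any SPD $A$ and any scalar function $g$, we have $g(QAQ')=Qg(A)Q'$: if $A=U\diag(a_i)U'$, then $QAQ'=(QU)\diag(a_i)(QU)'$ is an eigendecomposition of $QAQ'$. We also note that $QCQ'\in\CorP{n}$ and $QBQ'\in\Hol{n}$. Indeed, $(QXQ')_{ii}=X_{\sigma(i)\sigma(i)}$ for the underlying permutation $\sigma$, because the sign factors square to one. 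So conjugation by $Q$ permutes the diagonal, which preserves both the unit diagonal and hollowness, and it preserves positive definiteness.

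For the forward map we write
\[
B(QCQ')=\tfrac12\bigl(A_{QCQ'}-A_{QCQ'}^{-1}\bigr)=\tfrac12\bigl(QA_CQ'-QA_C^{-1}Q'\bigr)=QB(C)Q'.
\]
Here we use \eqref{eq:balanced-inversion} together with $(QA_CQ')^{-1}=QA_C^{-1}Q'$. If one prefers not to rely on \eqref{eq:balanced-inversion}, it can be checked directly. We have $(QCQ')^{-1}=QC^{-1}Q'$, so $\Diag((QCQ')^{-1})=QDQ'$, since $QDQ'$ is diagonal with permuted entries. Then $(QDQ')^{1/4}=QD^{1/4}Q'$, which gives $A_{QCQ'}=QD^{1/4}Q'QCQ'QD^{1/4}Q'=QA_CQ'$.

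For the inverse map, set $g^{-1}(b)=b+\sqrt{1+b^2}$. Then $QBQ'+\sqrt{I_n+QB^2Q'}=Q\bigl(B+\sqrt{I_n+B^2}\bigr)Q'$ by the spectral-calculus commutation above. It then remains to show $\Cor(Q\Sigma Q')=Q\Cor(\Sigma)Q'$ for SPD $\Sigma$. This holds because $\Diag(Q\Sigma Q')=Q\Diag(\Sigma)Q'$, which is diagonal with permuted entries. Its inverse square root is therefore $Q\Diag(\Sigma)^{-1/2}Q'$, and the factors $Q'Q$ cancel in the product.

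Alternatively, the second identity follows from the first together with the bijectivity in \Cref{thm:global-diffeo}. Let $C=C(B)$. Then $B(QCQ')=QB(C)Q'=QBQ'$, and applying the inverse map gives $C(QBQ')=QCQ'$. I expect no real obstacle here. The only delicate point is justifying that conjugation of a diagonal matrix by a signed permutation is again diagonal and commutes with taking fractional powers, and both facts are immediate from $QXQ'$ having entries $\varepsilon_i\varepsilon_jX_{\sigma(i)\sigma(j)}$.
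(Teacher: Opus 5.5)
Your proposal is correct and follows essentially the same route as the paper's proof: you obtain $A_{QCQ'}=QA_CQ'$ from the fact that a signed permutation only permutes the diagonal of $C^{-1}$, and you handle the inverse map by orthogonal functional calculus for $\sqrt{I_n+B^2}$ together with $\Cor(Q\Sigma Q')=Q\Cor(\Sigma)Q'$. Your alternative derivation of the second identity from the first via bijectivity is also valid.
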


Reordering variables or changing their signs applies the same permutation
or sign changes to the transformed entries. The definition therefore
respects the labeling and sign conventions of the original variables.

\subsection{Exact inverse consistency}

\begin{theorem}[Correlation inversion becomes sign reversal]\label{thm:inverse-consistency}
For every $C\in\CorP{n}$ and every $B\in\Hol{n}$, we have
\begin{equation}\label{eq:inverse-consistency}
 B(\cinv(C))=-B(C),\qquad C(-B)=\cinv(C(B)).
\end{equation}
\end{theorem}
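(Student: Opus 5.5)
The first identity will follow from \Cref{thm:balanced-section} together with the definition in \Cref{thm:global-diffeo}. By \eqref{eq:balanced-inversion}, the balanced representative of $\cinv(C)$ is $A_{\cinv(C)}=A_C^{-1}$. Substituting this into \eqref{eq:BFT-main-definition} gives
\begin{equation*}
 B(\cinv(C))=\tfrac12\bigl(A_C^{-1}-(A_C^{-1})^{-1}\bigr)=\tfrac12\bigl(A_C^{-1}-A_C\bigr)=-B(C).
\end{equation*}
If one prefers not to rely on the inversion statement of \Cref{thm:balanced-section} as a black box, it can be checked directly. Write $K=C^{-1}$ and $D=\Diag(K)$. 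Then $\cinv(C)=D^{-1/2}KD^{-1/2}$, and its inverse is $D^{1/2}CD^{1/2}$, whose diagonal is $D$. The balancing scale for $\cinv(C)$ is therefore $D^{1/4}$. Its balanced representative is $D^{1/4}D^{-1/2}KD^{-1/2}D^{1/4}=D^{-1/4}KD^{-1/4}=A_C^{-1}$, as claimed.

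For the second identity, let $B\in\Hol{n}$ and set $A=B+\sqrt{I_n+B^2}=e^{\arsinh B}$. By \Cref{thm:global-diffeo}, $A$ is the unique balanced SPD matrix with $\tfrac12(A-A^{-1})=B$, and $C(B)=\Cor(A)$. The scalar identity $\arsinh(-x)=-\arsinh(x)$ lifts through spectral calculus, so $e^{\arsinh(-B)}=e^{-\arsinh B}=A^{-1}$. Hence $C(-B)=\Cor(A^{-1})$. It remains to show $\Cor(A^{-1})=\cinv(\Cor(A))$ for any SPD $A$. Write $A=\Delta C\Delta$ with $\Delta=\Diag(A)^{1/2}$ and $C=\Cor(A)$. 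Then $A^{-1}=\Delta^{-1}C^{-1}\Delta^{-1}$, and correlation normalization is invariant under positive diagonal congruence, so $\Cor(A^{-1})=\Cor(C^{-1})=\cinv(C)$.

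A shorter route would be to apply the first identity to $C(B)$ and then invert with the bijection of \Cref{thm:global-diffeo}: $B(\cinv(C(B)))=-B$, so $\cinv(C(B))=C(-B)$. I would present this as the main argument and keep the direct computation as a check. There is no real obstacle. The only step that needs care is the invariance of $\Cor$ under diagonal congruence and the use of uniqueness in the balanced representative, both of which are immediate from the earlier results.
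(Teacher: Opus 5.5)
Your proposal is correct and matches the paper's proof: the first identity follows from $A_{\cinv(C)}=D^{-1/4}KD^{-1/4}=A_C^{-1}$ and then applying $A\mapsto\frac12(A-A^{-1})$, and the second follows from bijectivity of the global map. Your extra direct check, $C(-B)=\Cor(A^{-1})=\cinv(\Cor(A))$ using the oddness of $\arsinh$ and the invariance of $\Cor$ under positive diagonal congruence, is also valid; it simply confirms the second identity without invoking bijectivity.
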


This distinguishes BCP from GFT, which
is not inverse-consistent when $n\ge3$ \citep{Thanwerdas:2024}. Inverse consistency is not required for a valid parametrization.  It matters
when transformed matrices are averaged, interpolated, or regularized.  The matrices $C$ and $\cinv(C)$ describe the same dependence
system through correlation and normalized precision.  BCP places them at
opposite points, $B$ and $-B$, in the same Euclidean space. Penalties
symmetric about zero therefore treat the two descriptions alike, while
averaging and interpolation commute with sign reversal.  In particular, along the
straight line in coordinates from $B(C)$ to $B(\cinv(C))=-B(C)$, the midpoint is
$B=0$, which corresponds to $C=I_n$.

\subsection{Marginal and partial correlations}

Each BCP entry has a direct interpretation in terms of marginal and full
partial correlations.

\begin{proposition}[Marginal--partial representation]\label{prop:partial-correlation}
Let $C\in\CorP{n}$ and $K=C^{-1}$. Write
$\pi_{ij}=-K_{ij}/\sqrt{K_{ii}K_{jj}}$ for the full partial correlation
and $\tau_i=K_{ii}^{-1/2}$ for the residual standard deviation of variable
$i$ after linear projection on the remaining variables. Then
\begin{equation}\label{eq:B-R2}
 [B(C)]_{ij}
 =\frac{C_{ij}+\pi_{ij}}{2\sqrt{\tau_i\tau_j}},\quad i\ne j.
\end{equation}
\end{proposition}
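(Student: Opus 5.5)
The plan is a direct entrywise computation from the closed form in \Cref{thm:global-diffeo}, $B(C)=\tfrac12(A_C-A_C^{-1})$, with $A_C=D^{1/4}CD^{1/4}$ and $D=\Diag(K)$, $K=C^{-1}$, as given by \Cref{thm:balanced-section}.

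First, I write both terms entrywise. Since $D$ is diagonal with entries $D_{ii}=K_{ii}=\tau_i^{-2}$, the first term is
\begin{equation*}
 [A_C]_{ij}=K_{ii}^{1/4}K_{jj}^{1/4}C_{ij}.
\end{equation*}
Because $A_C^{-1}=D^{-1/4}KD^{-1/4}$, the second term is
\begin{equation*}
 [A_C^{-1}]_{ij}=K_{ii}^{-1/4}K_{jj}^{-1/4}K_{ij}.
\end{equation*}

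Second, I rewrite both terms in the variables of the statement. In the second term I substitute $K_{ij}=-\pi_{ij}\sqrt{K_{ii}K_{jj}}$, which gives
\begin{equation*}
 [A_C^{-1}]_{ij}=-\pi_{ij}(K_{ii}K_{jj})^{1/4}.
\end{equation*}
Both terms now carry the common factor $(K_{ii}K_{jj})^{1/4}=(\tau_i\tau_j)^{-1/2}$. Hence, for $i\ne j$,
\begin{equation*}
 [B(C)]_{ij}=\tfrac12(K_{ii}K_{jj})^{1/4}(C_{ij}+\pi_{ij})=\frac{C_{ij}+\pi_{ij}}{2\sqrt{\tau_i\tau_j}},
\end{equation*}
which is the claimed identity.

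I expect no real obstacle. The only care needed is the sign convention in the definition of $\pi_{ij}$, which is what turns the subtraction of $A_C^{-1}$ into the sum $C_{ij}+\pi_{ij}$, and the bookkeeping of the exponents $1/4$ versus $-1/4$ when converting to $\tau_i$. Finally, I will note that $\tau_i=K_{ii}^{-1/2}$ is indeed the residual standard deviation: the standard regression identity gives $1/K_{ii}=1-R_i^2$, the residual variance of standardized variable $i$ after projection on the others.
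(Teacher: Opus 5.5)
Your proposal is correct and is essentially the paper's argument: the paper rewrites $B(C)=\tfrac12 D^{1/4}\{C-\cinv(C)\}D^{1/4}$ using $\cinv(C)=D^{-1/2}C^{-1}D^{-1/2}$, and then reads off entries with $[\cinv(C)]_{ij}=-\pi_{ij}$ and $(K_{ii}K_{jj})^{1/4}=(\tau_i\tau_j)^{-1/2}$. This is exactly your entrywise computation, written in matrix form.
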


The sign of $[B(C)]_{ij}$ is the sign of $C_{ij}+\pi_{ij}$.  Holding the residual standard deviations fixed, its magnitude increases
with the absolute value of this sum; holding the sum fixed, it increases
as the residual standard deviations decrease. Thus each BCP coordinate is a
residual-risk-adjusted average of marginal and partial correlations, rather
than an entrywise transformation of the marginal correlation alone.

\subsection{Preservation of blocks and symmetries}

\begin{proposition}[Direct sums and fixed-point classes]\label{prop:blocks}
Let $\oplus$ denote the block direct sum.  BCP and its
inverse preserve direct sums:
$B(C_1\oplus C_2) =B(C_1)\oplus B(C_2)$,
and
$C(B_1\oplus B_2) =C(B_1)\oplus C(B_2)$.
More generally, for any subgroup $\mathfrak G\subseteq\sgnperm$,
\[
 QCQ'=C\ \forall Q\in\mathfrak G
 \quad\Leftrightarrow\quad
 QB(C)Q'=B(C)\ \forall Q\in\mathfrak G.
\]
\end{proposition}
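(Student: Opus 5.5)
The plan has two parts: direct sums by direct computation, and fixed-point classes as a consequence of \Cref{thm:signed-equivariance} together with bijectivity from \Cref{thm:global-diffeo}.

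\emph{Direct sums.} Let $C=C_1\oplus C_2$ with $C_k\in\CorP{n_k}$.

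1. Inversion respects the block structure: $C^{-1}=C_1^{-1}\oplus C_2^{-1}$.

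2. Hence $D=\Diag(C^{-1})=D_1\oplus D_2$, where $D_k=\Diag(C_k^{-1})$.

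3. By \eqref{eq:AC}, $A_C=D^{1/4}CD^{1/4}=A_{C_1}\oplus A_{C_2}$.

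4. Primary matrix functions of a block-diagonal matrix act blockwise. Therefore $A_C^{-1}=A_{C_1}^{-1}\oplus A_{C_2}^{-1}$, and \eqref{eq:BFT-main-definition} gives $B(C)=B(C_1)\oplus B(C_2)$.

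For the inverse map, take $B=B_1\oplus B_2$ with $B_k\in\Hol{n_k}$.

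1. The spectral function gives $e^{\arsinh B}=e^{\arsinh B_1}\oplus e^{\arsinh B_2}$, again blockwise.

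2. $\Cor$ rescales by the diagonal, which is block-diagonal. So $\Cor$ of a direct sum is the direct sum of the $\Cor$'s.

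3. This yields $C(B_1\oplus B_2)=C(B_1)\oplus C(B_2)$. Alternatively, this direction follows from the forward one via bijectivity in \Cref{thm:global-diffeo}.

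\emph{Fixed-point classes.} Fix a subgroup $\mathfrak G\subseteq\sgnperm$ and any $Q\in\mathfrak G$.

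Forward implication: if $QCQ'=C$, then \Cref{thm:signed-equivariance} gives
\[
QB(C)Q'=B(QCQ')=B(C).
\]

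Reverse implication: if $QB(C)Q'=B(C)$, then the second identity in \eqref{eq:BFT-equivariance}, applied to $B(C)$, gives
\[
QCQ'=QC(B(C))Q'=C(QB(C)Q')=C(B(C))=C,
\]
using $C(B(C))=C$ from \Cref{thm:global-diffeo}.

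These arguments hold elementwise in $Q$, so the subgroup structure is not needed beyond quantifying over all $Q\in\mathfrak G$.

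The only point needing care is that $QB(C)Q'$ stays in $\Hol{n}$, so that $C(\cdot)$ is defined on it. This holds because conjugation by a signed permutation maps diagonal entries to diagonal entries, up to sign squared. The equivariance theorem already covers this. The real work lies in \Cref{thm:signed-equivariance} and \Cref{thm:global-diffeo}, which may be assumed.
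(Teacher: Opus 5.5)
Your proof is correct and takes essentially the same route as the paper. Direct sums are preserved because inversion, positive diagonal powers, primary spectral functions, and $\Cor$ all act blockwise, and the fixed-point equivalence follows from signed-permutation equivariance (\Cref{thm:signed-equivariance}) together with bijectivity of $B(\cdot)$ and $C(\cdot)$.
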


Thus equicorrelation and other exchangeable or block-symmetric classes map to
linear fixed-point subspaces of $\Hol{n}$. Permutation-generated symmetry
restrictions also arise in graphical Gaussian models
\citep{HojsgaardLauritzen:2008}. \Cref{sec:block-symmetry} gives the
corresponding block formulas and a reduced $G\times G$ implementation based on
the canonical representation of \citet{ArchakovHansen:CanonicalBlockMatrix}.

\subsection{Separable correlation matrices}

Kronecker products are widely used for separable panel, spatial, and
spatio-temporal dependence.  BCP preserves their low Kronecker complexity.

\begin{proposition}[Kronecker-product identity]\label{prop:kronecker}
Let $C_k\in\CorP{n_k}$, $S_k=B(C_k)$, and
$H_k=\sqrt{I_{n_k}+S_k^2}$ for $k=1,2$.
The balanced representative and BCP satisfy
\begin{equation}\label{eq:BFT-kronecker}
 A_{C_1\otimes C_2}=A_{C_1}\otimes A_{C_2},
 \qquad B(C_1\otimes C_2)=S_1\otimes H_2+H_1\otimes S_2.
\end{equation}
Thus a separable correlation matrix maps to a sum of two Kronecker products.
\end{proposition}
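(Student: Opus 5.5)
The proof has two parts: first identify the balanced representative of $C_1\otimes C_2$, then push it through $g(A)=\tfrac12(A-A^{-1})$.

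\textbf{Step 1: the balanced representative.} Since $C_1\otimes C_2$ has unit diagonal and is positive definite, it lies in $\CorP{n_1n_2}$. Its inverse is $C_1^{-1}\otimes C_2^{-1}$, and the diagonal of a Kronecker product is the Kronecker product of the diagonals. Hence
\[
D=\Diag\{(C_1\otimes C_2)^{-1}\}=D_1\otimes D_2,\qquad D_k=\Diag(C_k^{-1}).
\]
For positive diagonal matrices, $(D_1\otimes D_2)^{1/4}=D_1^{1/4}\otimes D_2^{1/4}$. The mixed-product rule then gives
\[
A_{C_1\otimes C_2}=(D_1^{1/4}\otimes D_2^{1/4})(C_1\otimes C_2)(D_1^{1/4}\otimes D_2^{1/4})=A_{C_1}\otimes A_{C_2},
\]
using formula \eqref{eq:AC} of \Cref{thm:balanced-section}. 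Alternatively, one can check directly that $A_{C_1}\otimes A_{C_2}$ is balanced, since its diagonal and that of its inverse are Kronecker products of equal diagonals. Uniqueness in \Cref{thm:balanced-section} then identifies it as the balanced representative.

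\textbf{Step 2: express $A_k$ and $A_k^{-1}$ through $S_k$.} Write $A_k=A_{C_k}$. By \Cref{thm:global-diffeo}, $A_k=e^{\arsinh S_k}=S_k+H_k$. The inverse is $A_k^{-1}=e^{-\arsinh S_k}=H_k-S_k$. At the matrix level this holds because $S_k$ and $H_k$ commute (both are functions of $S_k$) and $H_k^2-S_k^2=I$, so
\[
(H_k+S_k)(H_k-S_k)=I.
\]

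\textbf{Step 3: expand.} By \Cref{thm:global-diffeo}, $B(C_1\otimes C_2)=\tfrac12(A-A^{-1})$ with $A=A_1\otimes A_2$. Then
\[
A=(S_1+H_1)\otimes(S_2+H_2),\qquad A^{-1}=(H_1-S_1)\otimes(H_2-S_2).
\]
Expanding both products, the $S_1\otimes S_2$ and $H_1\otimes H_2$ terms cancel in the difference. The cross terms double, so
\[
A-A^{-1}=2(S_1\otimes H_2+H_1\otimes S_2),
\]
and halving gives \eqref{eq:BFT-kronecker}.

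The main point needing care is Step 1: the balanced scaling must factor as a Kronecker product. This is immediate once we note that the diagonal and the entrywise fourth root of a diagonal matrix commute with $\otimes$, with uniqueness from \Cref{thm:balanced-section} as a backup. Steps 2 and 3 are routine algebra.
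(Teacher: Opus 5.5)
Your proposal is correct and matches the paper's proof: both factor $\Diag\{(C_1\otimes C_2)^{-1}\}=D_1\otimes D_2$ and pass the fourth root through the Kronecker product, then write $A_k=H_k+S_k$ and $A_k^{-1}=H_k-S_k$, and expand $\tfrac12(A_1\otimes A_2-A_1^{-1}\otimes A_2^{-1})$. The only minor difference is that the paper reads off $H_k=\tfrac12(A_k+A_k^{-1})$ directly rather than invoking \Cref{thm:global-diffeo} for $A_k=S_k+H_k$.
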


\subsection{Boundary behavior}

The next result quantifies how approaching the singular boundary affects
the size of the BCP coordinates.

\begin{proposition}[Properness and a two-sided bound]\label{prop:properness}
The map $C\mapsto B(C)$ from $\CorP{n}$ to $\Hol{n}$ is proper.  Quantitatively, write $\ell=\lambda_{\min}(C)$ and $B=B(C)$.  Then
\begin{equation}\label{eq:properness-bound}
 \frac12\left(\ell^{-1/2}-\ell^{1/2}\right)
 \leq\norm{B}_2
 \leq\frac12\left(\frac n\ell-\frac\ell n\right),
\end{equation}
and the left inequality is equivalent to
\begin{equation}\label{eq:lambda-min-bound}
 \lambda_{\min}(C)\geq e^{-2\arsinh\norm{B}_2}.
\end{equation}
In particular,
\begin{equation}\label{eq:boundary-divergence}
 \lambda_{\min}(C_k)\rightarrow0
 \quad\Rightarrow\quad
 \normF{B(C_k)}\rightarrow\infty.
\end{equation}
Conversely, a uniform bound on the BCP norm keeps the reconstructed
correlation matrices in a compact subset of the open elliptope and hence
uniformly away from singularity.
\end{proposition}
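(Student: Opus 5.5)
The plan is to reduce both inequalities to eigenvalue bounds for the balanced representative $A=A_C=D^{1/4}CD^{1/4}$ from \Cref{thm:balanced-section}. By \Cref{thm:global-diffeo}, $B=\sinh(\log A)$. Hence $\norm{B}_2=\max_i \sinh\lvert\log a_i\rvert$ over the eigenvalues $a_i$ of $A$. Since $\sinh$ is increasing on $[0,\infty)$, it suffices to control $\lambda_{\max}(A)$ and $1/\lambda_{\min}(A)$ in terms of $\ell=\lambda_{\min}(C)$ and $n$.

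\emph{Lower bound.} Write $C=D^{-1/4}AD^{-1/4}$. For a unit vector $x$, set $y=D^{-1/4}x$. Then $x'Cx=y'Ay\ge\lambda_{\min}(A)\lvert y\rvert^2\ge\lambda_{\min}(A)/\max_i D_{ii}^{1/2}$. By \eqref{eq:balanced-diagonal}, $D_{ii}^{1/2}=(A^{-1})_{ii}\le\lambda_{\max}(A^{-1})=1/\lambda_{\min}(A)$. Combining the two displays gives $\ell\ge\lambda_{\min}(A)^2$, that is, $1/\lambda_{\min}(A)\ge\ell^{-1/2}\ge1$. Therefore
\[
\norm{B}_2\ \ge\ \sinh\!\bigl(\log \ell^{-1/2}\bigr)=\tfrac12\bigl(\ell^{-1/2}-\ell^{1/2}\bigr).
\]
Because $\sinh$ and $\arsinh$ are increasing, this inequality is equivalent to $-\tfrac12\log\ell\le\arsinh\norm{B}_2$. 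That is exactly \eqref{eq:lambda-min-bound}.

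\emph{Upper bound.} Since $C$ has unit diagonal and $C^{-1}\succeq\lambda_{\max}(C)^{-1}I_n$, each diagonal entry satisfies $D_{ii}=(C^{-1})_{ii}\ge 1/\lambda_{\max}(C)$. More simply, $D_{ii}\ge1$ by Cauchy--Schwarz, because $1=e_i'Ce_i\,e_i'C^{-1}e_i\cdot(\ldots)$ gives $(C^{-1})_{ii}\ge 1/C_{ii}=1$. We also have $D_{ii}\le1/\ell$. Hence:
\begin{itemize}
\item $\lambda_{\min}(A)\ge\min_iD_{ii}^{1/2}\,\ell\ge\ell$;
\item $\lambda_{\max}(A)\le\max_iD_{ii}^{1/2}\lambda_{\max}(C)\le\ell^{-1/2}n\le n/\ell$, using $\lambda_{\max}(C)\le\tr C=n$.
\end{itemize}
Both $\lambda_{\max}(A)$ and $1/\lambda_{\min}(A)$ are therefore at most $n/\ell$. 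This yields $\norm{B}_2\le\sinh(\log(n/\ell))=\tfrac12(n/\ell-\ell/n)$.

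\emph{Properness and divergence.} If $\normF{B(C)}\le r$, then $\norm{B}_2\le r$. By \eqref{eq:lambda-min-bound}, $\lambda_{\min}(C)\ge e^{-2\arsinh r}$. So the preimage of a bounded closed set lies in $\{C\in\CorP{n}:\lambda_{\min}(C)\ge c\}$ for some $c>0$, and this set is compact. The preimage is also closed, by continuity of $B$, so it is compact and $B$ is proper. The same inequality, together with $\normF{\cdot}\ge\norm{\cdot}_2$, gives \eqref{eq:boundary-divergence}. Applied to $C(B)$, it also gives the converse statement about reconstructed matrices.

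The main obstacle is the lower bound. It requires linking $\lambda_{\min}(C)$ to $\lambda_{\min}(A)$ alone, and the balance identity $\Diag(A)=\Diag(A^{-1})$ is exactly what lets us bound the diagonal scaling $D^{1/2}$ by $1/\lambda_{\min}(A)$. The remaining steps are routine.
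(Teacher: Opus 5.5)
Your proof is correct. The reduction $\norm{B}_2=\sinh(\log\kappa)$ with $\kappa=\max_i\max(a_i,a_i^{-1})$ is the paper's. So is the upper-bound chain ($D\succeq I_n$, $D\preceq\ell^{-1}I_n$, $\lambda_{\max}(C)\le n$). The differences are in the lower bound and in properness.

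For the lower bound, the paper uses your substitution $y=\Diag(A)^{-1/2}x$ but bounds $\max_iA_{ii}\le\lambda_{\max}(A)$, which holds for any SPD matrix. This gives $\lambda_{\max}(A)/\lambda_{\min}(A)\ge1/\ell$, and then $\kappa\ge\ell^{-1/2}$ because $\kappa^2\ge\lambda_{\max}(A)/\lambda_{\min}(A)$. You instead use balance, $A_{ii}=(A^{-1})_{ii}\le1/\lambda_{\min}(A)$. That yields the slightly more informative $\lambda_{\min}(A_C)\le\ell^{1/2}$, which tells you which end of the spectrum is extreme. Contrary to your closing remark, balance is not needed here. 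The paper's version shows the condition-number bound holds for every diagonal rescaling of $C$, so balance enters only through the definition $B=\sinh(\log A_C)$.

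For properness, the paper argues topologically. The preimage of a compact $\mathcal H\subset\Hol{n}$ is $C(\mathcal H)$, which is compact because the inverse map is continuous. You derive properness from the estimate, placing sublevel sets of $\norm{B}$ inside the explicit compact set $\{\lambda_{\min}(C)\ge e^{-2\arsinh r}\}$. Your route needs only continuity of $C\mapsto B(C)$ and produces a quantitative compact set. The paper's needs no estimate but relies on global bijectivity.

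One line needs cleaning. Your Cauchy--Schwarz step for $D_{ii}\ge1$ is garbled; the intended inequality is $1=(e_i'C^{1/2}C^{-1/2}e_i)^2\le(e_i'Ce_i)(e_i'C^{-1}e_i)=(C^{-1})_{ii}$.
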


The left inequality in \eqref{eq:properness-bound} is sharp up to a factor
$\sqrt2$, attained in the limit $n=2$, $\rho\to1$.  The bound relates the size
of the transformed matrix directly to the conditioning of the correlation
matrix it generates.

\section{Differentiation and inference}
\label{sec:statistics-geometry}

Both directions of the transformation can be differentiated without solving
an implicit diagonal-completion problem.  Likelihood
optimization, filtering, and score-driven updates usually require the action
of a derivative on a direction. The formulas below evaluate that action
directly, avoiding the full $d\times d$ Jacobian, where $d=n(n-1)/2$.  The corresponding adjoint and likelihood-score formulas are given in
\Cref{app:adjoint-geometry} of the supplement. We first give the forward
and inverse derivatives and a closed-form Jacobian determinant, then
derive delta-method inference and compare
BCP with GFT near independence.

\subsection{Closed-form differential operations}

\begin{proposition}[Forward derivative]\label{prop:differential-forward}
Let $C\in\CorP{n}$, $X\in\Hol{n}$, and define
$K=C^{-1}$, $D=\Diag(K)$,
$\Delta=D^{1/4}$ and $A=\Delta C\Delta$.  Using $\oslash$ for
componentwise division, set
\begin{align}
 U_X&=-\frac14\diag\{\diag(KXK)\oslash\diag(K)\},\notag
\\
 Z_X&=\Delta(X+U_XC+CU_X)\Delta.
 \label{eq:ZX-forward}
\end{align}
Then the Jacobian--vector product is
\begin{equation}\label{eq:dPsiBFT-compact}
 \dd_C B[X]
 =\frac12\left(Z_X+A^{-1}Z_XA^{-1}\right).
\end{equation}
In particular,
\begin{equation}\label{eq:derivative-identity}
 \dd_{I_n}B[X]=X.
\end{equation}
\end{proposition}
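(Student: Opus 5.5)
We differentiate the forward map $B(C)=\frac12(A-A^{-1})$ with $A=\Delta C\Delta$ and $\Delta=D^{1/4}$, $D=\Diag(C^{-1})$, along a direction $X\in\Hol{n}$ tangent to $\CorP{n}$. We apply the product rule in three layers: the precision $K=C^{-1}$, the diagonal scale $\Delta$, and finally the map $A\mapsto\frac12(A-A^{-1})$.

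\textbf{Step 1 (scale).} From $\dd K=-KXK$ we get $\dd D=-\Diag(KXK)$. Since $\Delta=D^{1/4}$ is diagonal, $\dd\Delta=\frac14 D^{-3/4}\dd D=-\frac14\Delta D^{-1}\Diag(KXK)$. This can be written as $\dd\Delta=\Delta U_X$ with
\[
U_X=-\tfrac14\diag\{\diag(KXK)\oslash\diag(K)\},
\]
which is exactly the definition in the statement. Because $U_X$ and $\Delta$ are diagonal, they commute, so also $\dd\Delta=U_X\Delta$.

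\textbf{Step 2 (representative).} The product rule gives
\[
\dd A=\dd\Delta\,C\Delta+\Delta X\Delta+\Delta C\,\dd\Delta=\Delta(U_XC+X+CU_X)\Delta=Z_X .
\]

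\textbf{Step 3 (spectral map).} Using $\dd(A^{-1})=-A^{-1}\dd A\,A^{-1}$, we obtain
\[
\dd_C B[X]=\tfrac12\bigl(Z_X+A^{-1}Z_XA^{-1}\bigr),
\]
which is \eqref{eq:dPsiBFT-compact}. The only justification needed is smoothness of the composite map, which follows from analyticity of inversion on $\Symp{n}$ and of the positive fourth root on positive diagonals. This is the same fact used for \Cref{thm:global-diffeo}.

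\textbf{Step 4 (identity).} At $C=I_n$ we have $K=D=\Delta=A=I_n$ and $\diag(KXK)=\diag(X)=0$ because $X$ is hollow. Hence $U_X=0$ and $Z_X=X$, so $\dd_{I_n}B[X]=\frac12(X+X)=X$, which is \eqref{eq:derivative-identity}.

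There is no real obstacle here. The only points needing care are the derivative of the fourth root of a diagonal matrix, where commutativity of diagonal matrices is what lets it be written as $\Delta U_X$, and the use of hollowness of $X$ to make $U_X$ vanish at the identity.
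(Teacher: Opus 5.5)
Your proposal is correct and follows essentially the same route as the paper's proof. You compute $\dd K[X]=-KXK$ and $\dd D[X]=-\Diag(KXK)$, and you use commutativity of diagonal matrices to get $\Delta^{-1}\dd\Delta[X]=\frac14 D^{-1}\dd D[X]=U_X$; then you apply the product rule to $A=\Delta C\Delta$ and the inverse differential to $A^{-1}$, and at $C=I_n$ you use hollowness of $X$ to get $U_X=0$.
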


The calculation reuses the solve needed for the transformation and otherwise
requires only diagonal extraction, scaling, and matrix products.

\begin{proposition}[Inverse derivative]\label{prop:differential-inverse}
Let $B=Q\diag(\lambda_1,\ldots,\lambda_n)Q'$ and define
\begin{equation}\label{eq:ai-omega}
 a_i=e^{\arsinh(\lambda_i)},\qquad
 \Omega_{ij}=\frac{2a_ia_j}{1+a_ia_j}.
\end{equation}
Write $A=Q\diag(a_i)Q'$, $e=\diag(A)$,
$R=\diag(e^{-1/2})$ and $C=RAR$.  For $Y\in\Hol{n}$, let
\begin{equation}\label{eq:ZY-inverse}
 Z_Y=Q\{\Omega\Had(Q'YQ)\}Q',
 \qquad v_Y=\diag(Z_Y)\oslash e,\qquad V_Y=\diag(v_Y).
\end{equation}
Here $\circ$ denotes entrywise multiplication. Then
\begin{equation}\label{eq:dPsiBFTinv-explicit}
 \dd_B C[Y]
 =R\left\{Z_Y-\frac12(V_YA+AV_Y)\right\}R.
\end{equation}
Moreover, $0<\Omega_{ij}<2$ for all $i,j$.
\end{proposition}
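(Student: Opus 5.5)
The inverse map factors as $B\mapsto A=e^{\arsinh B}=B+\sqrt{I_n+B^2}$ followed by $A\mapsto C=\Cor(A)=RAR$ with $R=\Diag(A)^{-1/2}$. I would differentiate each factor separately and compose with the chain rule. All formulas are then expressed in the eigenbasis $Q$ of $B$, which is also the eigenbasis of $A$.

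\emph{Spectral step.} Write $f(\lambda)=e^{\arsinh\lambda}$. By the Daleckii--Krein formula for primary matrix functions,
\[
\dd A[Y]=Q\{F\Had(Q'YQ)\}Q',
\]
where $F_{ij}$ is the divided difference $(f(\lambda_i)-f(\lambda_j))/(\lambda_i-\lambda_j)$, with $F_{ii}=f'(\lambda_i)$. Because $g(a)=\tfrac12(a-a^{-1})$ is the inverse of $f$, we have $\lambda_i=g(a_i)$. Hence the divided difference of $f$ at $(\lambda_i,\lambda_j)$ is the reciprocal of the divided difference of $g$ at $(a_i,a_j)$. That divided difference is
\[
\frac{\tfrac12\{(a_i-a_j)+(a_j-a_i)/(a_ia_j)\}}{a_i-a_j}=\frac{1+a_ia_j}{2a_ia_j}.
\]
Its reciprocal is exactly $\Omega_{ij}$. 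The diagonal case $i=j$ follows by continuity, or directly from $f'=1/g'(a)=2a^2/(1+a^2)$. This gives $\dd A[Y]=Z_Y$. The bounds on $\Omega_{ij}$ are immediate: $a_i>0$ gives $x=a_ia_j>0$, and then $0<2x/(1+x)<2$.

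\emph{Normalization step.} Differentiating $C=RAR$ with $R=\diag(e^{-1/2})$ and $e=\diag(A)$ gives $\dd e=\diag(\dd A)$. It follows that
\[
\dd R=-\tfrac12 R\,\diag(\dd e\oslash e)=-\tfrac12 RV_Y,
\]
where I use that diagonal matrices commute. Substituting into $\dd C=\dd R\,AR+R\,\dd A\,R+RA\,\dd R$ gives
\[
\dd C=R\{Z_Y-\tfrac12(V_YA+AV_Y)\}R,
\]
which is the claimed formula.

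The only delicate point is the spectral step when $B$ has repeated eigenvalues. There the divided-difference formula must be read with the confluent value $f'(\lambda_i)$. Since $f$ is analytic, the Daleckii--Krein formula holds regardless of multiplicities, so this causes no real obstacle. I would cite it rather than reprove it.
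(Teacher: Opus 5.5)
Your proposal is correct and follows the paper's proof essentially step for step: you apply Daleckii--Krein (the paper cites Bhatia's Theorem V.3.3), obtain the divided differences of $f$ as reciprocals of those of $g(a)=\tfrac12(a-a^{-1})$, and then differentiate $C=RAR$. One small algebra slip: the numerator of your divided difference should read $\tfrac12\{(a_i-a_j)+(a_i-a_j)/(a_ia_j)\}$, since $a_j^{-1}-a_i^{-1}=(a_i-a_j)/(a_ia_j)$; your final value $(1+a_ia_j)/(2a_ia_j)$ is nonetheless correct.
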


The coefficients in \eqref{eq:ai-omega} are bounded and require no special
branch for repeated eigenvalues.  The same eigendecomposition evaluates both
the inverse map and its derivative.

\begin{proposition}[Jacobian determinant]\label{prop:jacobian-determinant}
For $B\in\Hol{n}$, retain $e_i$ and $\Omega_{ij}$ from
\Cref{prop:differential-inverse}. With $\vecl$ denoting strict
lower-triangle vectorization, the inverse Jacobian determinant is
\begin{equation}\label{eq:jacobian-determinant}
 j_n(B):=
 \left|\det\frac{\partial\vecl\{C(B)\}}
 {\partial\vecl(B)'}\right|
 =\prod_{i=1}^n e_i^{-(n-1)/2}
  \prod_{1\leq i\leq j\leq n}\Omega_{ij}.
\end{equation}
For the forward Jacobian
$J_C=\partial\vecl\{B(C)\}/\partial\vecl(C)'$,
$|\det J_C|=j_n\{B(C)\}^{-1}$.
\end{proposition}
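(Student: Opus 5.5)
The plan is to avoid computing the restricted Jacobian $\partial\vecl\{C(B)\}/\partial\vecl(B)'$ directly. Instead, I would embed it in a full-dimensional change of variables on $\Sym{n}$ whose determinant is computable by spectral calculus, and then peel off the diagonal directions using block-triangular structure.

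Fix $B\in\Hol{n}$ and let $A=e^{\arsinh B}$ be its balanced SPD matrix (\Cref{thm:global-diffeo}). Let $e=\diag(A)$ and $C=\Cor(A)$. Consider the auxiliary map
\[
 \Phi:(B',t)\mapsto A'=\Delta_t\, e^{\arsinh B'}\,\Delta_t,\qquad \Delta_t=\diag(e^{t_1},\ldots,e^{t_n}),
\]
from $\Hol{n}\times\R^n$ into $\Symp{n}$, evaluated at $(B,0)$. Coordinates are $\vecl$ for hollow and correlation parts, and diagonal vectors for the rest. I would compute the Jacobian determinant of the composite $(B',t)\mapsto(\Cor(A'),\diag(A'))$ in two ways.

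\emph{First way.} Since $\Cor(A')=C(B')$ does not depend on $t$, the Jacobian is block triangular. Its determinant is $j_n(B)\cdot\det\{\partial\diag(A')/\partial t\}$. At $t=0$ we have $\partial A'_{ii}/\partial t_k=2e_i\delta_{ik}$, so this equals $j_n(B)\prod_i 2e_i$.

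\emph{Second way.} Factor the map as $(B',t)\mapsto S=g(A')\mapsto A'\mapsto(\Cor(A'),\diag(A'))$, with $g(a)=\frac12(a-a^{-1})$.

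\begin{enumerate}[label=(\roman*)]
\item The map $(B',t)\mapsto S$. The differential in $t$ at $t=0$ is $\dd g(A)[TA+AT]$ with $T=\diag(t)$, which equals $\frac12\{TA+AT+TA^{-1}+A^{-1}T\}$. Its diagonal is $t_i\{A_{ii}+(A^{-1})_{ii}\}=2e_it_i$, using balance. The $B'$-directions are hollow and map to hollow $S$-directions identically. In the hollow/diagonal splitting, the Jacobian is therefore block triangular, with identity and $\diag(2e_i)$ blocks, so its determinant is $\prod_i2e_i$.
\item The map $S\mapsto A'=g^{-1}(S)$ on all of $\Sym{n}$. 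By the Daleckii--Krein formula, in the eigenbasis of $S$ this map acts as entrywise multiplication by the divided differences of $g^{-1}$. These are the reciprocals of
\[
 \frac{g(a_i)-g(a_j)}{a_i-a_j}=\frac{1+a_ia_j}{2a_ia_j},
\]
that is, exactly $\Omega_{ij}$, with the confluent case giving $1/g'(a_i)=\Omega_{ii}$. Conjugation by $Q$ is Frobenius-orthogonal, and the vech/Frobenius coordinate change contributes the same constant on both sides. Hence the determinant is $\prod_{i\le j}\Omega_{ij}$.
\item The map $A'\mapsto(C,e)$ with $A'_{ij}=\sqrt{e_ie_j}\,C_{ij}$. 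This is triangular: the diagonal entries give $e$ directly, and each off-diagonal entry contributes the factor $\sqrt{e_ie_j}$. Inverting, its determinant is $\prod_{i<j}(e_ie_j)^{-1/2}=\prod_ie_i^{-(n-1)/2}$.
\end{enumerate}

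Equating the two computations and cancelling $\prod_i2e_i$ gives \eqref{eq:jacobian-determinant}. Since every $\Omega_{ij}>0$ and $e_i>0$ by \Cref{prop:differential-inverse}, $j_n(B)\neq0$. The forward statement $|\det J_C|=j_n\{B(C)\}^{-1}$ then follows from the inverse function theorem and the bijectivity in \Cref{thm:global-diffeo}.

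The main obstacle I expect is step (ii): justifying the Daleckii--Krein determinant with consistent volume normalization on $\Sym{n}$, including repeated eigenvalues. My first approach is to note that $\vecl$ plus diagonal coordinates differ from Frobenius-orthonormal coordinates by a fixed factor $\sqrt2$ on off-diagonal entries. That factor appears on both domain and codomain, so it cancels. Repeated eigenvalues are handled by continuity of both sides in $B$.
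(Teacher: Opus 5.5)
Your argument is correct, and it reaches the formula by a somewhat different route from the paper.

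\emph{The paper's route.} The paper starts from the explicit inverse derivative $Y\mapsto R\,\mathcal M\mathcal L(Y)R$ of \Cref{prop:differential-inverse}. Here $\mathcal L$ is the Daleckii--Krein derivative of $B\mapsto e^{\arsinh B}$, and $\mathcal M$ is the projection of $\Sym{n}$ onto $\Hol{n}$ along $\mathcal K_A=\{DA+AD: D\text{ diagonal}\}$. The paper then applies Jacobi's complementary-minor identity,
\[
\det(\mathcal M\mathcal L|_{\Hol{n}})=\det\mathcal L\cdot\det(\mathcal N\mathcal L^{-1}|_{\mathcal K_A}),
\]
with $\mathcal N$ the complementary projection. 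Balance shows that $\mathcal N\mathcal L^{-1}$ is the identity on $\mathcal K_A$.

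\emph{How yours relates.} Your auxiliary coordinates $t$ parametrize exactly $\mathcal K_A$, since the $t$-differential of your augmented map is $TA+AT$. Your step (i), that the diagonal of $\dd g(A)[TA+AT]$ equals $2e_it_i$, is the same balance identity the paper uses.

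\emph{What differs.} The difference is in the bookkeeping. You do not need the projection formula for the derivative of $\Cor$ or the block-inverse determinant identity. You use only three things: $\Cor$ is constant along diagonal-scaling fibers, the chain rule, and block-triangular factorizations. In effect, you reprove the complementary-minor identity for this case. Your version is more self-contained and makes the reason the diagonal directions cancel transparent. The paper's version is shorter because $\mathcal L$, $\mathcal M$, and $R$ are already available from the derivative formula it has just established.

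\emph{Two small remarks.} First, no continuity argument is needed for repeated eigenvalues. The Daleckii--Krein formula holds at every symmetric argument with confluent divided differences, which is also what the paper invokes. Second, the normalization issue in (ii) is resolved exactly as you say. The determinant of an endomorphism of $\Sym{n}$ is basis-independent, provided the same vech coordinates are used for $S$ in (i) and for $A'$ in (iii). That consistency is all your chain-rule comparison requires.
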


Once the inverse map has been evaluated, the determinant requires only
$O(n^2)$ additional operations, with no full $d\times d$ Jacobian.
For numerical evaluation, its logarithm is computed by summing the logs
of the positive factors. The formula gives $j_n(0)=1$ and, for $n=2$
with $B_{12}=b$, $j_2(B)=(1+b^2)^{-3/2}$.
It is useful for change-of-variables densities; reparametrizing a likelihood
as $\ell\{C(B)\}$ does not require a Jacobian-determinant term.

\subsection{Delta-method inference}

Let $J_C$ be the forward Jacobian defined in
\Cref{prop:jacobian-determinant}. Applying the standard delta method to the
analytic map $C\mapsto B(C)$ transfers asymptotic results for a sample correlation
matrix directly to the transformed parameters, as follows.

\begin{corollary}[Delta method]\label{thm:delta}
For fixed $n$, if $\hat C_T\in\CorP{n}$ satisfies
$\sqrt T\{\vecl(\hat C_T)-\vecl(C)\} \xrightarrow{d}N(0,V_C)$,
then
\begin{equation}\label{eq:B-CLT}
 \sqrt T\left[
 \vecl\{B(\hat C_T)\}-\vecl\{B(C)\}
 \right]
 \xrightarrow{d}N(0,J_CV_CJ_C').
\end{equation}
At independence, $J_{I_n}=I_d$.
\end{corollary}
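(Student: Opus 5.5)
The plan is to apply the standard delta method to the map $\beta(\cdot)$, written in the vectorized correlation coordinates, and then to evaluate its Jacobian at independence using \Cref{prop:differential-forward}. Define
\[
 \phi:\ \mathcal U=\{c\in\R^d:\ \text{the hollow-plus-identity matrix with } \vecl=c \text{ lies in } \CorP{n}\}\to\R^d,\qquad
 \phi(c)=\vecl\{B(C)\}.
\]
The set $\mathcal U$ is open in $\R^d$, since $\CorP{n}$ is the intersection of the open SPD cone with the affine unit-diagonal subspace. By \Cref{thm:global-diffeo}, $\phi$ is real-analytic on $\mathcal U$, so it is in particular differentiable at $\vecl(C)$. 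Its derivative there is, by definition, the forward Jacobian $J_C=\partial\vecl\{B(C)\}/\partial\vecl(C)'$ of \Cref{prop:jacobian-determinant}.

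The delta-method step runs as follows. The hypothesis gives $\vecl(\hat C_T)-\vecl(C)=O_p(T^{-1/2})$. Differentiability at $\vecl(C)$ gives
\[
 \phi(\vecl\hat C_T)-\phi(\vecl C)=J_C\{\vecl(\hat C_T)-\vecl(C)\}+o_p(T^{-1/2}).
\]
The remainder is handled in the usual way: define $r(h)=\{\phi(c+h)-\phi(c)-J_Ch\}/\norm{h}$ for $h\neq0$ and $r(0)=0$. Then $r$ is continuous at $0$, and the continuous mapping theorem gives $r(\hat h_T)\to_p0$. Multiplying by $\sqrt T$ and applying Slutsky's lemma with the assumed convergence $\sqrt T\,\hat h_T\to_d N(0,V_C)$ yields \eqref{eq:B-CLT}, since a linear map of a Gaussian limit is Gaussian with covariance $J_CV_CJ_C'$. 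Because $\hat C_T\in\CorP{n}$ is assumed, $\phi(\vecl\hat C_T)$ is always defined, so no truncation argument is needed.

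For the final claim, \eqref{eq:derivative-identity} states that $\dd_{I_n}B[X]=X$ for every $X\in\Hol{n}$. A direction $x\in\R^d$ in $\vecl$ coordinates corresponds to the hollow matrix $X$ with $\vecl(X)=x$, and the directional derivative of $\phi$ at $\vecl(I_n)=0$ is $\vecl(\dd_{I_n}B[X])=x$. Hence $J_{I_n}=I_d$. As a cross-check, \Cref{prop:jacobian-determinant} with $B(I_n)=0$ gives $a_i=1$, $\Omega_{ij}=1$ and $e_i=1$, so $j_n(0)=1$, which is consistent.

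The only step needing care is the identification of the Fréchet derivative of $\phi$ with the $\vecl$-Jacobian that \Cref{prop:differential-forward} describes via directions in $\Hol{n}$. This is immediate because $\vecl$ is a linear isomorphism $\Hol{n}\to\R^d$, so there is no real obstacle.
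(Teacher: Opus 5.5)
Your proposal is correct and follows the paper's proof: differentiability from \Cref{thm:global-diffeo}, the standard multivariate delta method applied to $C\mapsto\vecl\{B(C)\}$, and $J_{I_n}=I_d$ from \eqref{eq:derivative-identity}. You spell out two things the paper leaves implicit: the Slutsky remainder argument, and the identification of the $\vecl$-Jacobian with the derivative on $\Hol{n}$.
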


\subsection{Relation to the generalized Fisher transformation}\label{sec:AH}

The GFT of \citet{ArchakovHansen:Correlation} uses
$\Off(\log C)$ as unrestricted coordinates for the same correlation
matrices. Its inverse
recovers an omitted diagonal, whereas BCP uses the explicit balanced
representative. Despite this difference, the maps agree through second
order around independence.

\begin{theorem}\label{thm:local-comparison}
Fix $n$ and let $\norm{\cdot}$ be a submultiplicative matrix norm. For $C=I_n+E\in\CorP{n}$,
with $E\in\Hol{n}$ and $\norm E$ sufficiently small,
\begin{equation}\label{eq:BFT-OL-difference}
 B(C)-\Off(\log C)
 =\frac16\Off(E^3)+\cO(\norm E^4).
\end{equation}
Both transformations equal
$E-\frac12\Off(E^2)+\cO(\norm E^3)$.
The remainder constants may depend on the fixed dimension $n$.
\end{theorem}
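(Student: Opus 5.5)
The plan is a direct third-order Taylor expansion of both maps around $E=0$, followed by a comparison of coefficients. All series below are convergent Neumann, binomial or logarithm series once $\norm E$ is small enough. In fixed dimension every matrix norm is equivalent to a submultiplicative one, so $\Diag$ and $\Off$ are bounded operators and all remainders are $\cO(\norm E^4)$ with constants that may depend on $n$. Throughout, I use $\Diag(E)=0$.

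\emph{Step 1 (balancing scale).} From $K=C^{-1}=I-E+E^2-E^3+\cO(\norm E^4)$ and $\Diag(E)=0$ we get
\[
D=\Diag(K)=I+\Diag(E^2)-\Diag(E^3)+\cO(\norm E^4).
\]
Hence $\Delta=D^{1/4}=I+\delta$, with
\[
\delta=\tfrac14\Diag(E^2)-\tfrac14\Diag(E^3)+\cO(\norm E^4)=\cO(\norm E^2).
\]
Then by \Cref{thm:balanced-section},
\[
A_C=(I+\delta)(I+E)(I+\delta)=I+F,\qquad F=E+2\delta+\delta E+E\delta+\cO(\norm E^4).
\]

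\emph{Step 2 (BCP expansion).} With $A_C^{-1}=I-F+F^2-F^3+\cO(\norm F^4)$, we have
\[
B(C)=\tfrac12(A_C-A_C^{-1})=F-\tfrac12F^2+\tfrac12F^3+\cO(\norm E^4).
\]
Since $\delta=\cO(\norm E^2)$, the powers simplify to $F^2=E^2+2\delta E+2E\delta+\cO(\norm E^4)$ and $F^3=E^3+\cO(\norm E^4)$. The cross terms $\delta E+E\delta$ cancel, which gives
\[
B(C)=E+2\delta-\tfrac12E^2+\tfrac12E^3+\cO(\norm E^4).
\]
Substituting $2\delta=\tfrac12\Diag(E^2)-\tfrac12\Diag(E^3)$ yields
\[
B(C)=E-\tfrac12\Off(E^2)+\tfrac12\Off(E^3)+\cO(\norm E^4).
\]
This is consistent with hollowness from \Cref{thm:global-diffeo}, which serves as a check.

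\emph{Step 3 (GFT and conclusion).} We have $\log(I+E)=E-\tfrac12E^2+\tfrac13E^3+\cO(\norm E^4)$. Therefore
\[
\Off(\log C)=E-\tfrac12\Off(E^2)+\tfrac13\Off(E^3)+\cO(\norm E^4).
\]
Subtracting gives $\tfrac16\Off(E^3)+\cO(\norm E^4)$. Truncating both expansions at second order gives the common form $E-\tfrac12\Off(E^2)+\cO(\norm E^3)$.

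\emph{Main obstacle.} The only delicate point is the bookkeeping in Step 2. One must confirm that the $\cO(\norm E^3)$ cross terms $\delta E+E\delta$ really cancel between $F$ and $-\tfrac12F^2$. One must also confirm that the third-order diagonal correction from $\delta$ exactly removes $\Diag(E^3)$. I would double-check both by verifying directly that $\Diag$ of the third-order term vanishes, as direct hollowness requires.
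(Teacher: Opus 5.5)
Your proposal is correct and follows essentially the paper's proof: a direct third-order expansion of $D^{1/4}$, $A_C$ and $A_C^{-1}$ giving $B(C)=E-\frac12\Off(E^2)+\frac12\Off(E^3)+\cO(\norm E^4)$, compared with the off-diagonal part of the logarithm series. The only cosmetic difference is that you expand $A_C^{-1}$ as a Neumann series in $F=A_C-I_n$, so the cross terms $\delta E+E\delta$ cancel explicitly, whereas the paper multiplies out $D^{-1/4}C^{-1}D^{-1/4}$, where the identical $\frac14(G_2E+EG_2)$ terms in $A_C$ and $A_C^{-1}$ cancel in the half-difference.
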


The proof and the separate expansions appear in \Cref{app:expansions}.
The common linear term gives the same Jacobian, $I_d$, at independence
and hence the same first-order delta-method covariance. More precisely,
if $\hat C_T-I_n=\cO_p(T^{-1/2})$, then
$B(\hat C_T)-\Off(\log\hat C_T)=\cO_p(T^{-3/2})$
on the event that $\hat C_T$ is positive definite, whose probability
tends to one. Near independence, the Jacobians differ at quadratic order
in $\norm{C-I_n}$.

This local agreement does not extend to equality of the transformations:
BCP retains explicit inversion and inverse consistency globally.
Supplementary \Cref{app:AH-relations} gives the exact transition map and
its simplification when the diagonal scale commutes with the correlation
matrix. The local comparison above is controlled by a matrix norm at
fixed dimension; small individual correlations in a large matrix need not
make that norm small. The high-dimensional sparsity results in
\Cref{sec:high-dimensional-sparsity} therefore require separate structural
conditions.

\section{Block correlation matrices}\label{sec:block-symmetry}

Block structures arise naturally with sectors, clusters, and exchangeable
groups.  BCP preserves a fixed block partition in both directions, allowing the dense matrix operations to use a matrix whose dimension is
the number of blocks. Exact preservation holds for all block sizes, including
singletons. For clarity, the reduced forward and inverse formulas below
are stated for blocks with at least two members; Supplementary
Section~\ref{app:block-singletons} gives the corresponding singleton formulas.

\subsection{A three-block illustration}

We begin with a simple example. Consider seven variables divided into three
 groups of sizes two, two, and three, with common correlations within each
 group and between each pair of groups. The correlation matrix and its BCP
 coordinates are\par
\begingroup
\setstretch{1}
\small
\setlength{\arraycolsep}{2.5pt}
\renewcommand{\arraystretch}{1.15}
\begin{equation*}
 \underbrace{\left[\begin{array}{ccccccc}
\cellcolor{black!07}1.0 & \cellcolor{black!07}0.8 & 0.4 & 0.4 & \cellcolor{black!07}0.2 & \cellcolor{black!07}0.2 & \cellcolor{black!07}0.2\\
\cellcolor{black!07}0.8 & \cellcolor{black!07}1.0 & 0.4 & 0.4 & \cellcolor{black!07}0.2 & \cellcolor{black!07}0.2 & \cellcolor{black!07}0.2\\
0.4 & 0.4 & \cellcolor{black!07}1.0 & \cellcolor{black!07}0.6 & 0.1 & 0.1 & 0.1\\
0.4 & 0.4 & \cellcolor{black!07}0.6 & \cellcolor{black!07}1.0 & 0.1 & 0.1 & 0.1\\
\cellcolor{black!07}0.2 & \cellcolor{black!07}0.2 & 0.1 & 0.1 & \cellcolor{black!07}1.0 & \cellcolor{black!07}0.3 & \cellcolor{black!07}0.3\\
\cellcolor{black!07}0.2 & \cellcolor{black!07}0.2 & 0.1 & 0.1 & \cellcolor{black!07}0.3 & \cellcolor{black!07}1.0 & \cellcolor{black!07}0.3\\
\cellcolor{black!07}0.2 & \cellcolor{black!07}0.2 & 0.1 & 0.1 & \cellcolor{black!07}0.3 & \cellcolor{black!07}0.3 & \cellcolor{black!07}1.0
 \end{array}\right]}_{=\ C}
 \quad \Longleftrightarrow \quad
 \underbrace{\left[\begin{array}{ccccccc}
\cellcolor{black!07}0 & \cellcolor{black!07}1.303 & 0.355 & 0.355 & \cellcolor{black!07}0.163 & \cellcolor{black!07}0.163 & \cellcolor{black!07}0.163\\
\cellcolor{black!07}1.303 & \cellcolor{black!07}0 & 0.355 & 0.355 & \cellcolor{black!07}0.163 & \cellcolor{black!07}0.163 & \cellcolor{black!07}0.163\\
0.355 & 0.355 & \cellcolor{black!07}0 & \cellcolor{black!07}0.715 & 0.062 & 0.062 & 0.062\\
0.355 & 0.355 & \cellcolor{black!07}0.715 & \cellcolor{black!07}0 & 0.062 & 0.062 & 0.062\\
\cellcolor{black!07}0.163 & \cellcolor{black!07}0.163 & 0.062 & 0.062 & \cellcolor{black!07}0 & \cellcolor{black!07}0.277 & \cellcolor{black!07}0.277\\
\cellcolor{black!07}0.163 & \cellcolor{black!07}0.163 & 0.062 & 0.062 & \cellcolor{black!07}0.277 & \cellcolor{black!07}0 & \cellcolor{black!07}0.277\\
\cellcolor{black!07}0.163 & \cellcolor{black!07}0.163 & 0.062 & 0.062 & \cellcolor{black!07}0.277 & \cellcolor{black!07}0.277 & \cellcolor{black!07}0
 \end{array}\right]}_{=\ B(C)}.
\end{equation*}
\endgroup
The gray shading identifies the same blocks in both matrices.
The transformation preserves the three-block pattern exactly: entries
that are equal under the original partition remain equal after
transformation. The remainder of the section establishes this
preservation generally and develops reduced formulas that operate directly on
the distinct block parameters.

\subsection{Correlation and transformed block classes}

Fix a partition $\mathcal P=(n_1,\ldots,n_G)$ of $n$, with $n_k\ge1$ for
every $k$.  A block correlation matrix has blocks
\begin{equation*}
 C_{[k,k]}=(1-\rho_{kk})I_{n_k}
 +\rho_{kk}\one_{n_k}\one_{n_k}',
 \qquad
 C_{[k,l]}=\rho_{kl}\one_{n_k}\one_{n_l}'\quad(k\ne l),
\end{equation*}
subject to $C\succ0$.  Denote this class by $\CorBlock{n}$.  The corresponding
linear space $\HolBlock{n}$ consists of hollow matrices with
\begin{equation}\label{eq:block-hollow-definition}
 B_{[k,k]}=\eta_{kk}
 (\one_{n_k}\one_{n_k}'-I_{n_k}),
 \qquad
 B_{[k,l]}=\eta_{kl}\one_{n_k}\one_{n_l}'\quad(k\ne l).
\end{equation}
For a singleton block, these expressions reduce to $C_{[k,k]}=[1]$ and
$B_{[k,k]}=[0]$, with no within-block parameter. Both classes have
$G(G-1)/2+|\{k:n_k>1\}|$ free parameters: one for each unordered pair
of blocks and one within each nonsingleton block.

\Needspace{5\baselineskip}
\subsection{Exact preservation}\label{sec:block-preservation}

\begin{theorem}[Exact preservation of block symmetry]
\label{thm:block-restriction}
For every fixed partition with $n_k\ge1$, the transformation and its inverse
restrict to mutually inverse real-analytic diffeomorphisms
\begin{equation*}
 \CorBlock{n}
 \overset{B(\cdot)}{\underset{C(\cdot)}{\rightleftarrows}}
 \HolBlock{n}.
\end{equation*}
Thus the partition and its constant-entry pattern are preserved in both
directions.
\end{theorem}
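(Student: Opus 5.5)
The plan is to deduce the result from the fixed-point characterization in \Cref{prop:blocks} together with \Cref{thm:global-diffeo}. Let $\mathfrak G_{\mathcal P}\subseteq\sgnperm$ be the subgroup of permutation matrices (no sign changes) that permute indices only within each block of $\mathcal P$, that is, $\mathfrak G_{\mathcal P}\cong S_{n_1}\times\cdots\times S_{n_G}$. The first step is a purely combinatorial lemma: a symmetric matrix $M$ satisfies $QMQ'=M$ for all $Q\in\mathfrak G_{\mathcal P}$ if and only if it has the block pattern
\[
M_{[k,k]}=\alpha_k I_{n_k}+\beta_k\one_{n_k}\one_{n_k}',\qquad M_{[k,l]}=\gamma_{kl}\one_{n_k}\one_{n_l}'\quad(k\ne l).
\]
The proof is by orbit counting. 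The group acts transitively on the diagonal positions within a block, on the off-diagonal positions within a block (this needs only $n_k\ge2$), and on all positions in $[k,l]$ for $k\ne l$. Transitivity on $[k,l]$ holds because the two factors $S_{n_k}$ and $S_{n_l}$ act independently on rows and columns.

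Intersecting the invariant pattern with $\CorP{n}$ forces $\alpha_k+\beta_k=1$, which gives exactly $\CorBlock{n}$. Intersecting it with $\Hol{n}$ forces $\alpha_k=-\beta_k$, which gives exactly $\HolBlock{n}$. Singleton blocks fit this description automatically, with $C_{[k,k]}=[1]$ and $B_{[k,k]}=[0]$. So
\[
\CorBlock{n}=\{C\in\CorP{n}:QCQ'=C\ \forall Q\in\mathfrak G_{\mathcal P}\},\qquad \HolBlock{n}=\{B\in\Hol{n}:QBQ'=B\ \forall Q\in\mathfrak G_{\mathcal P}\}.
\]

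Next apply \Cref{prop:blocks} with $\mathfrak G=\mathfrak G_{\mathcal P}$. The forward direction gives $B(\CorBlock{n})\subseteq\HolBlock{n}$. For the reverse inclusion, take $B\in\HolBlock{n}$ and set $C=C(B)$. By \Cref{thm:signed-equivariance}, $QC(B)Q'=C(QBQ')=C(B)$ for every $Q\in\mathfrak G_{\mathcal P}$, so $C(B)\in\CorBlock{n}$. Since $B(\cdot)$ and $C(\cdot)$ are mutually inverse on the full spaces by \Cref{thm:global-diffeo}, their restrictions are mutually inverse bijections between $\CorBlock{n}$ and $\HolBlock{n}$.

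For real-analyticity, note that $\HolBlock{n}$ is a linear subspace of $\Hol{n}$. Likewise, $\CorBlock{n}$ is an open subset of an affine subspace of $\Sym{n}$ (unit diagonal, block pattern, and $C\succ0$), parametrized linearly by $(\rho_{kl})$. The restriction of a real-analytic map to an embedded affine or linear submanifold is real-analytic. Because each restricted map lands in the corresponding submanifold and the two are mutual inverses, both are real-analytic diffeomorphisms. Equivalently, the restricted map $(\rho_{kl})\mapsto(\eta_{kl})$ is a composition of linear inclusion, the analytic map $B$, and linear coordinate extraction.

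The main care point is the orbit lemma. One must make sure $\mathfrak G_{\mathcal P}$ really contains no sign changes, since sign flips would force zero off-diagonal entries. One must also check that it is transitive on each off-diagonal block even when one of $n_k,n_l$ equals $1$; it is, because the orbit is then just a row or a column of equal entries. The remaining care is to confirm that the invariant subspaces have dimension $G(G-1)/2+|\{k:n_k>1\}|$, consistent with the parameter count stated before the theorem.
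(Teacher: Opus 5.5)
Your proposal is correct and follows essentially the same route as the paper. The paper also characterizes $\CorBlock{n}$ and $\HolBlock{n}$ as fixed-point sets of the within-block permutation group via the same transitivity argument, uses signed-permutation equivariance to map each fixed-point set into the other, and restricts the global mutually inverse analytic diffeomorphisms. Your remarks about restricting to the affine slice and the linear subspace only spell out the analyticity step, which the paper leaves implicit.
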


This preservation follows directly from equivariance.  The two block classes
are the fixed-point sets of independent permutations within each block, and a
bijective map that commutes with these permutations must map one fixed-point
set onto the other.  The same argument applies to fixed-point classes of other
subgroups of signed permutations.

For fixed block sizes $n_k\ge2$, \Cref{thm:block-restriction} provides a new
parametrization of block correlation matrices by an unrestricted symmetric
$G\times G$ matrix $(\eta_{kl})$. Its diagonal entries specify within-block
coordinates, while its off-diagonal entries specify between-block coordinates.
Every choice yields a unique positive-definite correlation matrix with the
prescribed block structure. This provides an alternative to the GFT-based
block parametrization discussed by \citet{TongHansen:2023}, with both the
forward and inverse maps available in closed form. For singleton blocks,
the corresponding within-block coordinates are omitted.

\subsection{Reduced forward transformation}

For the reduced formulas in this subsection, assume $n_k\ge2$ for every $k$.
Define the $G\times G$ core $R=(r_{kl})$ and the within-block contrast
eigenvalues $\lambda_k$ by
$r_{kk}=1+(n_k-1)\rho_{kk}$, $r_{kl}=\rho_{kl}\sqrt{n_kn_l}$ for $k\ne l$, and $\lambda_k=1-\rho_{kk}$.
Positive definiteness of $C$ is equivalent to $R\succ0$ and
$\lambda_k>0$ for every $k$.

\begin{theorem}[Reduced forward transformation]
\label{thm:block-forward}
For $C\in\CorBlock{n}$, define $\delta_k>0$ and $\Delta$ by
\begin{equation}\label{eq:block-dk}
 \delta_k^4=\frac{(R^{-1})_{kk}}{n_k}
 +\frac{n_k-1}{n_k\lambda_k},\qquad
 \Delta=\diag(\delta_1,\ldots,\delta_G).
\end{equation}
Then $\Diag(C^{-1})$ is constant within block $k$, with value $\delta_k^4$.  Set
\begin{equation}\label{eq:block-forward-components}
 M =\frac12\left(\Delta R\Delta-
                 \Delta^{-1}R^{-1}\Delta^{-1}\right), \qquad 
 \zeta_k =\frac12\left(\delta_k^2\lambda_k-
                 \delta_k^{-2}\lambda_k^{-1}\right).
\end{equation}
The entries of $B(C)$ in \eqref{eq:block-hollow-definition} are
\begin{equation}\label{eq:block-eta-forward}
 \eta_{kl}=\frac{M_{kl}}{\sqrt{n_kn_l}}\quad(k\ne l),
 \qquad
 \eta_{kk}=-\zeta_k=\frac{M_{kk}}{n_k-1}.
\end{equation}
\end{theorem}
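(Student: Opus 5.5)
The plan is to diagonalize the block structure with an orthogonal change of basis that separates block means from within-block contrasts, and then to read off every quantity from a $G\times G$ core and $G$ scalar contrast eigenvalues. For each block $k$ let $u_k=n_k^{-1/2}\one_{n_k}$ and let $V_k$ be an $n_k\times(n_k-1)$ matrix whose orthonormal columns are orthogonal to $\one_{n_k}$. Set $U=\bigoplus_k u_k$ ($n\times G$) and $W=\bigoplus_k V_k$. Then $[U,W]$ is orthogonal, and a direct computation gives $C=URU'+W\Lambda W'$ with $\Lambda=\bigoplus_k\lambda_kI_{n_k-1}$. This uses $C_{[k,k]}=\lambda_kI+\rho_{kk}\one\one'$, so $u_k'C_{[k,k]}u_k=1+(n_k-1)\rho_{kk}$ and $u_k'C_{[k,l]}u_l=\rho_{kl}\sqrt{n_kn_l}$, while the cross terms $V_k'C_{[k,l]}$ vanish because $V_k'\one=0$. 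It follows at once that $C\succ0$ iff $R\succ0$ and all $\lambda_k>0$, and that $C^{-1}=UR^{-1}U'+W\Lambda^{-1}W'$.

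The diagonal of $C^{-1}$ then follows from the projector identities. The diagonal of $u_ku_k'$ is $1/n_k$, and $V_kV_k'=I-\one\one'/n_k$ has diagonal $(n_k-1)/n_k$. So the $i$th diagonal entry for $i$ in block $k$ is $(R^{-1})_{kk}/n_k+(n_k-1)/(n_k\lambda_k)=\delta_k^4$, which gives \eqref{eq:block-dk}. By \Cref{thm:balanced-section}, $D^{1/4}$ is the block-constant matrix $\bigoplus_k\delta_kI_{n_k}$. This commutes with the decomposition: $D^{1/4}U=U\Delta$ and $D^{1/4}W=W\tilde\Delta$ with $\tilde\Delta=\bigoplus_k\delta_kI_{n_k-1}$. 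Hence
$A_C=U\Delta R\Delta U'+W\tilde\Delta\Lambda\tilde\Delta W'$. Since $[U,W]$ is orthogonal, $A_C^{-1}=U\Delta^{-1}R^{-1}\Delta^{-1}U'+W(\tilde\Delta\Lambda\tilde\Delta)^{-1}W'$. \Cref{thm:global-diffeo} then yields
\[
B(C)=UMU'+\textstyle\bigoplus_k\zeta_kV_kV_k',
\]
with $M$ and $\zeta_k$ exactly as in \eqref{eq:block-forward-components}, because $\tilde\Delta\Lambda\tilde\Delta$ is diagonal with entries $\delta_k^2\lambda_k$.

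To extract the entries, write $V_kV_k'=I-\one\one'/n_k$. The off-diagonal block $(k,l)$ is $M_{kl}\one\one'/\sqrt{n_kn_l}$, which gives $\eta_{kl}$. The block $(k,k)$ is $(M_{kk}/n_k)\one\one'+\zeta_k(I-\one\one'/n_k)$. Its off-diagonal entries equal $(M_{kk}-\zeta_k)/n_k$ and its diagonal entries equal $(M_{kk}+(n_k-1)\zeta_k)/n_k$.

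The one genuinely non-mechanical step is identifying $\eta_{kk}$ in the two stated forms. For this I will invoke hollowness of $B(C)$ from \Cref{thm:global-diffeo}, together with block membership from \Cref{thm:block-restriction}. Hollowness forces $M_{kk}=-(n_k-1)\zeta_k$, and substituting this into the off-diagonal entry gives $\eta_{kk}=-n_k\zeta_k/n_k=-\zeta_k=M_{kk}/(n_k-1)$, which is \eqref{eq:block-eta-forward}. The main care point is the bookkeeping showing that $D^{1/4}$ commutes with the mean/contrast decomposition. This needs the block-constancy of $\Diag(C^{-1})$ to be established first, which is why that computation precedes the balancing step. The assumption $n_k\ge2$ ensures that $V_k$ is nonempty and that the division by $n_k-1$ is legitimate.
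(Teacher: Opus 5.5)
Your proposal is correct and follows essentially the same route as the paper, which also uses the canonical mean/contrast decomposition to read off $\Diag(C^{-1})$, lets $D^{1/4}$ act as $\Delta\oplus\bigoplus_k\delta_kI_{n_k-1}$, takes half the difference of $\Delta R\Delta\oplus\bigoplus_k\delta_k^2\lambda_kI_{n_k-1}$ and its inverse, and then uses hollowness of the diagonal entries $M_{kk}/n_k+(1-1/n_k)\zeta_k$ to identify $\eta_{kk}$. Your appeal to \Cref{thm:block-restriction} is redundant, since your explicit formula $B(C)=UMU'+\bigoplus_k\zeta_kV_kV_k'$ already exhibits the block pattern.
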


Thus the forward transformation requires one $G\times G$ inverse and scalar
operations.  The block means interact through $R$, while each block contributes
one scalar contrast.

\subsection{Reduced inverse transformation}

Again assume $n_k\ge2$ for every $k$.
Given $B\in\HolBlock{n}$, form the reduced core
\begin{equation}\label{eq:block-H-zeta-input}
 H_{kk}=(n_k-1)\eta_{kk},\qquad
 H_{kl}=\eta_{kl}\sqrt{n_kn_l}\ (k\ne l),\qquad
 \zeta_k=-\eta_{kk}.
\end{equation}

\begin{theorem}[Reduced inverse transformation]
\label{thm:block-inverse}
Let $f(x)=x+\sqrt{1+x^2}=e^{\arsinh x}$ and define
$F=f(H)=H+\sqrt{I_G+H^2}$ and $a_k=f(\zeta_k)$.
Set
\begin{equation}\label{eq:block-ek}
 e_k=\frac{F_{kk}+(n_k-1)a_k}{n_k}.
\end{equation}
Then $C=C(B)$ has block correlations
\begin{equation}\label{eq:block-inverse-correlations}
 \rho_{kk}=\frac{F_{kk}-a_k}{n_ke_k},\qquad
 \rho_{kl}=\frac{F_{kl}}{\sqrt{n_kn_le_ke_l}},\quad k\ne l.
 \end{equation}
\end{theorem}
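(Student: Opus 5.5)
We prove \Cref{thm:block-inverse} by computing $A=B+\sqrt{I_n+B^2}=f(B)$ in the canonical block basis and then applying $\Cor$. The key structural fact is that every $B\in\HolBlock{n}$ decomposes along an orthogonal splitting of $\R^n$. Let $u_k=n_k^{-1/2}\one_{n_k}$ be embedded in block $k$, and let $U=[u_1,\ldots,u_G]$ be the $n\times G$ matrix with orthonormal columns. Within block $k$, let $P_k=I_{n_k}-u_ku_k'$ be the projector onto the contrast subspace orthogonal to $\one_{n_k}$.

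A direct computation from \eqref{eq:block-hollow-definition} gives
\[
B=UHU'+\bigoplus_k(-\eta_{kk})P_k=UHU'+\bigoplus_k\zeta_kP_k ,
\]
with $H$ as in \eqref{eq:block-H-zeta-input}. To verify this, use $u_k'B_{[k,l]}u_l=\eta_{kl}\sqrt{n_kn_l}$ for $k\ne l$ and $u_k'B_{[k,k]}u_k=\eta_{kk}(n_k-1)$. On the contrast space, $\one\one'$ vanishes, so $B_{[k,k]}$ acts as $-\eta_{kk}I$. Finally, $B_{[k,l]}$ annihilates contrasts and maps into the span of $u_k$, so there are no cross terms between the two pieces.

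The range of $UU'$ and the ranges of the $P_k$ are mutually orthogonal and together span $\R^n$, so $B$ is block diagonal in this basis. Spectral calculus therefore applies piecewise:
\[
A=f(B)=UF U'+\bigoplus_k a_kP_k,
\]
with $F=f(H)$ and $a_k=f(\zeta_k)$. This step is the only one requiring care. We justify it by noting that $UHU'$ is unitarily similar to $H\oplus 0$ on the complement, and that $f$ of an orthogonal direct sum is the direct sum of the $f$'s.

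Next we read off the entries of $A$. In block $(k,l)$ with $k\ne l$, we have $A_{[k,l]}=F_{kl}u_ku_l'=F_{kl}(n_kn_l)^{-1/2}\one\one'$. Within block $k$,
\[
A_{[k,k]}=\frac{F_{kk}}{n_k}\one\one'+a_k\Bigl(I-\frac{1}{n_k}\one\one'\Bigr).
\]
Its diagonal entries are $(F_{kk}+(n_k-1)a_k)/n_k=e_k$, matching \eqref{eq:block-ek}, and its off-diagonal entries are $(F_{kk}-a_k)/n_k$. Since $A\succ0$, each $e_k>0$.

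Applying $\Cor$ divides entry $(i,j)$ by $\sqrt{e_ke_l}$, which yields exactly \eqref{eq:block-inverse-correlations}. By \Cref{thm:global-diffeo}, the resulting matrix is $C(B)$. \Cref{thm:block-restriction} is consistent with this, since the result is visibly a block correlation matrix, though it is not needed.

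The main obstacle is the careful verification of the orthogonal decomposition of $B$, especially the contrast part, where the sign $\zeta_k=-\eta_{kk}$ must come out right. Once that is done, everything else is bookkeeping.
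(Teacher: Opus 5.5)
Your proposal is correct and follows essentially the same route as the paper. The paper writes $B$ in the canonical block form (block-mean vectors plus orthonormal within-block contrasts), applies $f$ as $F\oplus\bigoplus_k a_kI_{n_k-1}$ through the functional-calculus identity, reads off the block diagonal $e_k$, and normalizes; your decomposition $B=UHU'+\bigoplus_k\zeta_kP_k$ is the projector version of that canonical form, and your sign check $\zeta_k=-\eta_{kk}$ and entry computations are all correct.
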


\begin{corollary}[Reduced computational dimension]
\label{cor:block-complexity}
If only the distinct block parameters are required, the forward map uses one
$G\times G$ inverse and the inverse map uses one $G\times G$ principal square
root, followed by scalar operations.  Their dense matrix-function cost is
$O(G^3)$ rather than $O(n^3)$.  Constructing the full matrix still requires
$O(n^2)$ output work.
\end{corollary}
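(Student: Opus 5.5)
The corollary is a cost count built on \Cref{thm:block-forward} and \Cref{thm:block-inverse}. The plan is to read off, from those reduced formulas, every operation that involves a matrix, and to show that each such matrix is $G\times G$. All remaining work is per block and is therefore $O(G)$ or $O(G^2)$ scalar arithmetic. Throughout, I assume $n_k\ge2$ for every $k$, as in those theorems; singleton blocks would be handled the same way using the supplementary formulas.

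\emph{Forward direction.} Given the block parameters $(\rho_{kl})$, forming $R$ and the $\lambda_k$ takes $O(G^2)$ operations. Then:
\begin{itemize}
\item One $G\times G$ inverse gives $R^{-1}$, at cost $O(G^3)$.
\item The scales $\delta_k$ in \eqref{eq:block-dk} need only the diagonal of $R^{-1}$ and scalar arithmetic.
\item $M$ in \eqref{eq:block-forward-components} is obtained by diagonal rescaling of $R$ and $R^{-1}$. Diagonal rescaling is entrywise, so this costs $O(G^2)$ and needs no further matrix multiplication.
\item The $\zeta_k$ are scalars.
\item \eqref{eq:block-eta-forward} gives every distinct $\eta_{kl}$ in $O(G^2)$.
\end{itemize}
The total is therefore $O(G^3)$, dominated by the single inverse.

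\emph{Inverse direction.} Forming $H$ and the $\zeta_k$ from \eqref{eq:block-H-zeta-input} costs $O(G^2)$. Then:
\begin{itemize}
\item $F=H+\sqrt{I_G+H^2}$ needs $H^2$ (one $G\times G$ product) and one principal square root of the SPD matrix $I_G+H^2$. Via a symmetric eigendecomposition, each of these costs $O(G^3)$.
\item The scalars $a_k=f(\zeta_k)$ take $O(G)$.
\item The $e_k$ in \eqref{eq:block-ek} take $O(G)$.
\item The correlations in \eqref{eq:block-inverse-correlations} take $O(G^2)$.
\end{itemize}
Again no operation on an $n\times n$ matrix appears, so the dense matrix-function cost is $O(G^3)$. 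By contrast, the unreduced maps in \Cref{thm:global-diffeo} need an $n\times n$ solve or eigendecomposition, costing $O(n^3)$.

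\emph{Full matrix and main subtlety.} If the full $n\times n$ matrix is wanted, it is filled in from the block parameters via \eqref{eq:block-hollow-definition} or the definition of $\CorBlock{n}$. That step writes $n^2$ entries, so it is $O(n^2)$ output work on top of the $O(G^3)$.

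The one point needing care is that the operation count must be justified by the theorems themselves. Specifically, \Cref{thm:block-forward} and \Cref{thm:block-inverse} must hold exactly, so that no hidden $n$-dimensional computation (for example, evaluating $\Diag(C^{-1})$ directly) is required. This is already guaranteed: \Cref{thm:block-forward} expresses $\Diag(C^{-1})$ through $(R^{-1})_{kk}$ and $\lambda_k$, so nothing beyond the two theorems is needed.
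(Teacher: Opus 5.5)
Your proposal is correct and follows essentially the same argument as the paper. In both, you read off from \Cref{thm:block-forward,thm:block-inverse} that the only dense matrix operations are one $G\times G$ inverse of $R$ and one $G\times G$ matrix function of $H$. Everything else is diagonal or scalar work, plus $O(n^2)$ output if the full matrix is assembled. The only cosmetic difference is in evaluating $F$: the paper uses a single symmetric eigendecomposition of $H$ to compute $F=e^{\arsinh H}$, while you form $H^2$ and take the principal square root of $I_G+H^2$. Both cost $O(G^3)$, though the eigendecomposition route avoids cancellation.
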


\subsection{Equicorrelation matrices}
\label{sec:equicorrelation-calibration}

The one-block case, $G=1$, corresponds to an equicorrelation matrix.
It yields explicit scalar formulas for BCP and its inverse and provides
a simple comparison with GFT.

\begin{proposition}[BCP for equicorrelation matrices]\label{prop:equicorrelation}
Let $n\ge2$ and $C_n(\rho)=(1-\rho)I_n+\rho\one\one'$, with
$-1/(n-1)<\rho<1$. Then
$B(C_n(\rho)) =b_n(\rho)(\one\one'-I_n)$,
where
\begin{equation}\label{eq:BFT-equicorr}
 b_n(\rho)
 =\frac{\rho\{2+(n-2)\rho\}}
 {2\sqrt{\{1+(n-2)\rho\}(1-\rho)\{1+(n-1)\rho\}}}.
\end{equation}
The inverse is also explicit. Given $b\in\R$, set
$a_1=(n-1)b+\sqrt{1+(n-1)^2b^2}$ and $a_2=-b+\sqrt{1+b^2}$.
Then
\begin{equation}\label{eq:rho-from-b}
 \rho=\frac{a_1-a_2}{a_1+(n-1)a_2}.
\end{equation}
\end{proposition}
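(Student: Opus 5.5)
The plan is to work in the eigenbasis of the equicorrelation matrix. $C_n(\rho)$ has eigenvector $\one/\sqrt n$ with eigenvalue $r=1+(n-1)\rho$, and every contrast vector orthogonal to $\one$ has eigenvalue $\lambda=1-\rho$. Write $P=\one\one'/n$. Then
\[
C=rP+\lambda(I_n-P),\qquad C^{-1}=r^{-1}P+\lambda^{-1}(I_n-P).
\]
Every diagonal entry of $C^{-1}$ is therefore
\[
\delta^4=\frac{1}{nr}+\frac{n-1}{n\lambda},
\]
which is the $G=1$ case of \eqref{eq:block-dk}. So $D=\delta^4 I_n$ is scalar. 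By \Cref{thm:balanced-section}, $A_C=\delta^2C$, with eigenvalues $\delta^2r$ on $\one$ and $\delta^2\lambda$ on the contrasts. Then \Cref{thm:global-diffeo} gives
\[
B=g(\delta^2 r)\,P+g(\delta^2\lambda)\,(I_n-P),\qquad g(a)=\tfrac12(a-a^{-1}).
\]
This matrix is permutation invariant, so by \Cref{prop:blocks} (or by hollowness alone) $B=b\,(\one\one'-I_n)$. Comparing eigenvalues gives $b=-g(\delta^2\lambda)$ from the contrast eigenvalue $-b$. Equivalently, $(n-1)b=g(\delta^2 r)$ from the $\one$ eigenvalue $(n-1)b$; this matches \Cref{thm:block-forward}.

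Simplifying $b$ to the closed form is the main computation and the main obstacle. Compute
\[
\delta^4=\frac{\lambda+(n-1)r}{nr\lambda}=\frac{1+(n-2)\rho}{r\lambda},
\]
using $\lambda+(n-1)r=n\{1+(n-2)\rho\}$. Set $s=1+(n-2)\rho$. Then $\delta^2\lambda=\sqrt{s\lambda/r}$, and
\[
b=\tfrac12\Big(\sqrt{r/(s\lambda)}-\sqrt{s\lambda/r}\Big)=\frac{r-s\lambda}{2\sqrt{s\lambda r}}.
\]
Expanding, $r-s\lambda=1+(n-1)\rho-(1-\rho)\{1+(n-2)\rho\}=2\rho+(n-2)\rho^2$. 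This yields \eqref{eq:BFT-equicorr}. I will check consistency with the $\one$-eigenvalue relation only as a sanity check; it holds automatically by hollowness.

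For the inverse, I apply \Cref{thm:global-diffeo} directly. $B=b(\one\one'-I_n)$ has eigenvalue $(n-1)b$ on $\one$ and $-b$ on contrasts. Hence $A=B+\sqrt{I_n+B^2}=a_1P+a_2(I_n-P)$, with $a_1,a_2$ as stated. Its diagonal entries all equal $e=\{a_1+(n-1)a_2\}/n$, and its off-diagonal entries equal $(a_1-a_2)/n$. Normalizing, $\rho=(a_1-a_2)/\{a_1+(n-1)a_2\}$, which is \eqref{eq:rho-from-b} and the $G=1$ case of \Cref{thm:block-inverse}. Bijectivity and the range $-1/(n-1)<\rho<1$ are inherited from \Cref{thm:global-diffeo} and \Cref{thm:block-restriction}, so no separate monotonicity argument is needed.
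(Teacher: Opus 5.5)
Your proof is correct and follows essentially the paper's route: you show $D=\delta^4 I_n$ is scalar so that $A_C=\delta^2 C$, and for the inverse you use the two invariant subspaces of $B=b(\one\one'-I_n)$. The only difference is cosmetic: you read $b$ off the contrast eigenvalue as $-\tfrac12\bigl(\delta^2\lambda-\delta^{-2}\lambda^{-1}\bigr)$, whereas the paper takes the off-diagonal half-difference of $\delta^2 C$ and $\delta^{-2}C^{-1}$ directly.
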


Let $z(\rho)=\atanh(\rho)$ denote Fisher's $z$-transformation.
The bivariate BCP takes the form
\begin{equation*}
 B\!\left(
 \begin{bmatrix}1&\rho\\\rho&1\end{bmatrix}
 \right)
 =\sinh\!\left(
 \begin{bmatrix}0&z(\rho)\\z(\rho)&0\end{bmatrix}
 \right)
 =\begin{bmatrix}
 0&\tfrac{\rho}{\sqrt{1-\rho^2}}\\[6pt]
 \tfrac{\rho}{\sqrt{1-\rho^2}}&0
 \end{bmatrix}.
\end{equation*}
Thus, the BCP coordinate is the hyperbolic sine of Fisher's
$z$-transformation, whereas the GFT coordinate equals Fisher's $z$ itself.

For general $n$, the common GFT coordinate of an equicorrelation matrix is
$g_n(\rho)=n^{-1}\log\{[1+(n-1)\rho]/(1-\rho)\}$.
The exact relation between the two coordinates is
\begin{equation}\label{eq:AH-BFT-equicorr-link}
  g_n(\rho)
 =\frac{1}{n}\left[\arsinh\bigl((n-1)b_n(\rho)\bigr)
 +\arsinh\bigl(b_n(\rho)\bigr)\right].
\end{equation}
For fixed $n$, these coordinates agree through second order as $\rho\to0$,
in accordance with the general local comparison in \Cref{sec:AH}.
A proof of \Cref{prop:equicorrelation} and identity
\eqref{eq:AH-BFT-equicorr-link} is given in Supplementary
\Cref{app:equicorrelation}. The identity is a scalar instance of the general
commuting relation in Supplementary \Cref{prop:commuting-link}.

\section{BCP for selected structured correlation matrices}
\label{sec:structured-correlations}

BCP is explicit for every positive-definite correlation matrix, but its forward
formula becomes particularly transparent when the precision matrix is known
analytically.  Let $K=C^{-1}$ and
$\delta_i=K_{ii}^{1/4}$.  Directly from the definition, $[B(C)]_{ij}
 =\frac12(
  \delta_i\delta_j C_{ij}
  -\frac{1}{\delta_i\delta_j}K_{ij}
 )$, $i\ne j$. Thus a structured precision matrix immediately gives a simplified BCP
expression, without a matrix eigendecomposition in the forward direction.
We consider arrowhead, Gaussian-tree, moving-average, autoregressive, and
standard factor correlations. The time-series examples distinguish banded
correlations from banded precision matrices.

\subsection{Arrowhead correlation matrices}

For $\rho\in\R^r$ with $\rho'\rho<1$, consider the $(r+1)\times(r+1)$
arrowhead correlation matrix
\begin{equation}\label{eq:star-correlation-matrix}
 C^\star(\rho)
 =\begin{pmatrix}
  1 & \rho'\\
  \rho & I_r
 \end{pmatrix}.
\end{equation}
This is a unit-diagonal instance of the symmetric arrowhead matrices
studied by \citet{OLearyStewart:1990}. It is also the auxiliary correlation
matrix for $(Z,U')'$ used by
\citet{TongHansen:2025FactorGARCH}, who derive a tractable expression
under GFT.

\begin{proposition}[BCP coordinates of an arrowhead correlation matrix]
\label{prop:star-coordinate}
Let
$\omega=\sqrt{1-\rho'\rho}$ and $h_j=\omega^2+\rho_j^2$ for $j=1,\ldots,r$,
and define
$R_\rho=\diag\left(h_1^{1/4},\ldots,h_r^{1/4}\right)$ and $v_\rho=R_\rho^{-1}\rho$.
Then
\begin{equation}\label{eq:BFT-star-block-form}
 B\left(C^\star(\rho)\right)
 =\begin{pmatrix}
  0 & \eta_\rho'\\
  \eta_\rho & \Psi_\rho
 \end{pmatrix},
\end{equation}
where
\begin{equation*}
 \eta_\rho
 =\frac{1}{2\omega}\left(R_\rho+R_\rho^{-1}\right)\rho,
 \qquad
 \Psi_\rho
 =-\frac{1}{2\omega}\Off\left(v_\rho v_\rho'\right).
\end{equation*}
Equivalently, for $j\ne k$,
\begin{equation}
 [\eta_\rho]_j
 =\frac{\rho_j}{2\omega}
   \left(h_j^{1/4}+h_j^{-1/4}\right),
 \qquad  
 [\Psi_\rho]_{jk}=-\frac{\rho_j\rho_k}
 {2\omega(h_jh_k)^{1/4}},
 \qquad [\Psi_\rho]_{jj}=0.
 \label{eq:BFT-star-factor-elements}
\end{equation}
\end{proposition}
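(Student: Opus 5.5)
The plan is to apply the entrywise formula from the start of \Cref{sec:structured-correlations},
\[
[B(C)]_{ij}=\tfrac12\bigl(\delta_i\delta_jC_{ij}-(\delta_i\delta_j)^{-1}K_{ij}\bigr),\qquad \delta_i=K_{ii}^{1/4},
\]
with $K=C^\star(\rho)^{-1}$ written in closed form. No eigendecomposition is needed. The work consists of computing the precision matrix of the arrowhead matrix and reading off its diagonal and off-diagonal entries.

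\textbf{Step 1 (inverse).} Take the Schur complement with respect to the identity block. It equals $1-\rho'\rho=\omega^2$. Block inversion then gives
\[
K=\begin{pmatrix}\omega^{-2} & -\omega^{-2}\rho'\\ -\omega^{-2}\rho & I_r+\omega^{-2}\rho\rho'\end{pmatrix}.
\]
This is verified directly by multiplying by $C^\star(\rho)$ and using $\rho'\rho=1-\omega^2$.

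Reading off the diagonal, $K_{00}=\omega^{-2}$ and $K_{jj}=1+\rho_j^2/\omega^2=h_j/\omega^2$. Hence
\[
\delta_0=\omega^{-1/2},\qquad \delta_j=h_j^{1/4}\omega^{-1/2}.
\]

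\textbf{Step 2 (hub entries).} For $i=0$ and $j\ge1$ we have $C_{0j}=\rho_j$, $K_{0j}=-\rho_j/\omega^2$, and $\delta_0\delta_j=h_j^{1/4}/\omega$. Substituting,
\[
[B]_{0j}=\tfrac12\bigl(h_j^{1/4}\rho_j/\omega+\omega h_j^{-1/4}\rho_j/\omega^2\bigr)=\frac{\rho_j}{2\omega}\bigl(h_j^{1/4}+h_j^{-1/4}\bigr).
\]
This is $[\eta_\rho]_j$.

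\textbf{Step 3 (leaf--leaf entries).} For $j\ne k$, both $\ge1$, we have $C_{jk}=0$ and $K_{jk}=\rho_j\rho_k/\omega^2$, with $\delta_j\delta_k=(h_jh_k)^{1/4}/\omega$. Only the second term survives:
\[
[B]_{jk}=-\tfrac12\,\omega(h_jh_k)^{-1/4}\rho_j\rho_k/\omega^2=-\frac{\rho_j\rho_k}{2\omega(h_jh_k)^{1/4}}.
\]

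\textbf{Step 4 (matrix form).} The diagonal of $B$ is zero by \Cref{thm:global-diffeo}. Recognizing $R_\rho$ and $v_\rho=R_\rho^{-1}\rho$, Steps 2 and 3 give $\eta_\rho=\frac1{2\omega}(R_\rho+R_\rho^{-1})\rho$ and $\Psi_\rho=-\frac1{2\omega}\Off(v_\rho v_\rho')$, which is the stated block form.

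The only step with any substance is Step 1, obtaining the inverse correctly. It is a rank-one Sherman--Morrison or Schur-complement computation, so there is no genuine obstacle. The care needed is in tracking the $\omega$ powers in $\delta_i\delta_j$ in Steps 2 and 3.
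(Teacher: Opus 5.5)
Your proposal is correct and follows essentially the same route as the paper's proof. The paper also inverts $C^\star(\rho)$ via the Schur complement $\omega^2$, reads off $\Diag(K)=\omega^{-2}\diag(1,R_\rho^4)$, substitutes into the definition to obtain the hub and leaf--leaf entries, and cites balancing for the zero diagonal; your expressions for $K$, the scales $\delta_i$, and both entry types match it.
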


The vector $\eta_\rho$ preserves the zero pattern and signs of $\rho$.
The block $\Psi_\rho$ records linear partial associations among the peripheral
variables, although their block in $C^\star(\rho)$ is the identity. This illustrates why a zero marginal correlation need not
produce a zero BCP coordinate at finite dimension.

\subsection{Gaussian-tree correlation matrices}
\label{subsec:tree-correlations}

Let $(V,E)$ be a tree and associate an edge correlation
$\theta_{ij}\in(-1,1)$ with every $\{i,j\}\in E$.  The corresponding
standardized Gaussian tree has
\begin{equation}\label{eq:tree-correlation}
 C_{ij}=\prod_{e\in\operatorname{path}(i,j)}\theta_e,
 \qquad C_{ii}=1.
\end{equation}
Its precision matrix $K=C^{-1}$ is supported on the tree:
\begin{equation}\label{eq:tree-precision}
 K_{ij}=
 \begin{cases}
  -\dfrac{\theta_{ij}}{1-\theta_{ij}^2},
    &\{i,j\}\in E,\\[6pt]
  0,&\{i,j\}\notin E,
 \end{cases}
 \quad i\ne j,\quad \text{and} \quad  K_{ii}
 =1+\sum_{j:\{i,j\}\in E}
 \frac{\theta_{ij}^2}{1-\theta_{ij}^2}.
\end{equation}

\begin{proposition}[BCP coordinates of a Gaussian-tree correlation matrix]
\label{prop:tree-correlation}
Let $\delta_i=K_{ii}^{1/4}$.  For $i\ne j$,
\begin{equation}\label{eq:BFT-tree-correlation}
 [B(C)]_{ij}
 =\begin{cases}
 \dfrac12\left\{
  \delta_i\delta_j\theta_{ij}
  +\dfrac{\theta_{ij}}
   {(1-\theta_{ij}^2)\delta_i\delta_j}
 \right\},&\{i,j\}\in E,\\[10pt]
 \dfrac12\delta_i\delta_j
  \displaystyle\prod_{e\in\operatorname{path}(i,j)}\theta_e,
  &\{i,j\}\notin E.
 \end{cases}
\end{equation}
\end{proposition}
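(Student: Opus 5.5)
The plan is to start from the entrywise formula stated at the opening of \Cref{sec:structured-correlations}: for $i\ne j$,
\[
 [B(C)]_{ij}=\tfrac12\Bigl(\delta_i\delta_jC_{ij}-\frac{K_{ij}}{\delta_i\delta_j}\Bigr),
 \qquad \delta_i=K_{ii}^{1/4}.
\]
This identity follows from \Cref{thm:balanced-section} and \Cref{thm:global-diffeo}. Since $A_C=D^{1/4}CD^{1/4}$, we have $A_C^{-1}=D^{-1/4}KD^{-1/4}$, so the $(i,j)$ entry of $\tfrac12(A_C-A_C^{-1})$ is exactly the expression displayed. With this in hand, the proposition reduces to substituting the tree formulas \eqref{eq:tree-correlation} for $C_{ij}$ and \eqref{eq:tree-precision} for $K_{ij}$, and then splitting into the two cases $\{i,j\}\in E$ and $\{i,j\}\notin E$.

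\emph{Edge case.} For $\{i,j\}\in E$, the path between $i$ and $j$ is the single edge, so $C_{ij}=\theta_{ij}$. Moreover $-K_{ij}=\theta_{ij}/(1-\theta_{ij}^2)$, and inserting both into the entrywise formula gives the first line of \eqref{eq:BFT-tree-correlation}.

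\emph{Non-edge case.} For $\{i,j\}\notin E$, we have $K_{ij}=0$, so only the term $\tfrac12\delta_i\delta_jC_{ij}$ survives. Here $C_{ij}$ is the path product, which gives the second line.

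The only substantive input is the precision structure \eqref{eq:tree-precision}. It is stated in the paper, but if it must be justified I would argue as follows. Root the tree and write each non-root standardized variable as $X_j=\theta_{p(j)j}X_{p(j)}+\sqrt{1-\theta_{p(j)j}^2}\,\varepsilon_j$, with independent standard innovations. This Markov construction reproduces the path-product correlations \eqref{eq:tree-correlation}, because correlations multiply along each step of the chain. The joint density then factorizes as a product of conditional densities along edges, so $K=L'\Lambda^{-1}L$, where $L$ is unit lower triangular with $-\theta$ at each (child, parent) position and $\Lambda$ has diagonal entries $1$ (root) and $1-\theta^2$ (non-root nodes). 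Expanding this product shows that $K_{ij}$ vanishes off the edge set and equals $-\theta_{ij}/(1-\theta_{ij}^2)$ on edges. For the diagonal, $K_{ii}$ collects $1/(1-\theta_{p(i)i}^2)$ from node $i$'s own conditional and $\theta_{ic}^2/(1-\theta_{ic}^2)$ from each child $c$. Using $1/(1-\theta^2)=1+\theta^2/(1-\theta^2)$, this becomes $1+\sum_{j\sim i}\theta_{ij}^2/(1-\theta_{ij}^2)$, and the same sum results for the root.

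The main obstacle is this bookkeeping of the diagonal of $K$; the substitution itself is routine. The diagonal only enters through the definition $\delta_i=K_{ii}^{1/4}$, so the stated proposition holds whenever $K$ has the given off-diagonal support and values.
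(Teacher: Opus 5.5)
Your proposal is correct and follows essentially the same route as the paper: you substitute the path-product correlations and the tree precision entries into the entrywise form $\tfrac12\bigl(\delta_i\delta_jC_{ij}-K_{ij}/(\delta_i\delta_j)\bigr)$ of the definition, and you justify the precision structure by rooting the tree and using the recursive edge construction. Your factorization $K=L'\Lambda^{-1}L$ is the matrix form of the paper's expansion of the negative Gaussian log density. Your diagonal bookkeeping via $1/(1-\theta^2)=1+\theta^2/(1-\theta^2)$ correctly gives the same root-independent expression that the paper asserts.
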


The homogeneous path tree gives the AR(1) correlation matrix
$C_{ij}=\rho^{|i-j|}$, which is a special case of the autoregressive
family considered in \Cref{subsec:ar-correlations}.

\subsection{Moving-average correlation matrices}
\label{subsec:ma-correlations}

Consider the MA($q$) process
$x_t=\sum_{\ell=0}^q\theta_\ell\varepsilon_{t-\ell}$, where
$\theta_0=1$ and $\varepsilon_t$ is zero-mean, unit-variance white noise.
For $n>q\ge1$, the correlation matrix of $(x_1,\ldots,x_n)'$ is banded
Toeplitz, with $C_{ij}=\rho_{|i-j|}$ and
\begin{equation}\label{eq:MAq-correlations}
 s=\sum_{\ell=0}^q\theta_\ell^2,
 \qquad
 \rho_h=
 \begin{cases}
 s^{-1}\displaystyle\sum_{\ell=0}^{q-h}
       \theta_\ell\theta_{\ell+h},&0\le h\le q,\\[4pt]
 0,&h>q.
 \end{cases}
\end{equation}
This is the usual finite-order moving-average covariance structure
\citep[Chapter~3]{Hamilton:1994}. Its inverse can be obtained from a
finite filter recursion and a correction for the $q$ presample innovations.
Define
\begin{equation}\label{eq:MAq-filter-recursion}
 \psi_0=1,
 \qquad
 \psi_k=-\sum_{\ell=1}^{\min(q,k)}\theta_\ell\psi_{k-\ell},
 \quad k=1,\ldots,n-1,
\end{equation}
and set $\psi_k=0$ for $k<0$ and $\theta_k=0$ for $k>q$.
Let $F\in\R^{n\times n}$ and $H\in\R^{n\times q}$ have entries
$F_{ij}=\psi_{i-j}$ and $H_{ia}=\theta_{i+a-1}$, and put
$W=F'F$ and $M=I_q+H'WH$.

\begin{proposition}[BCP coordinates of a moving-average correlation matrix]
\label{prop:ma-correlation}
The matrix in \eqref{eq:MAq-correlations} is positive definite, and its
precision matrix and balancing weights satisfy
\begin{equation}\label{eq:MAq-precision}
 K=s\left(W-WHM^{-1}H'W\right),
 \qquad \delta_i=K_{ii}^{1/4}.
\end{equation}
For $i\ne j$,
\begin{equation}\label{eq:BFT-MAq}
 [B(C)]_{ij}
 =\frac12\left(
 \delta_i\delta_j\rho_{|i-j|}
 -\frac{K_{ij}}{\delta_i\delta_j}
 \right).
\end{equation}
In particular, $[B(C)]_{ij}=-K_{ij}/(2\delta_i\delta_j)$ whenever
$|i-j|>q$.
\end{proposition}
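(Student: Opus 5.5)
Three things need proof: positive definiteness of $C$, the precision formula \eqref{eq:MAq-precision}, and the coordinate formula \eqref{eq:BFT-MAq}. Write the process exactly as
\[
x=(x_1,\ldots,x_n)'=L\varepsilon+H\varepsilon_0,
\]
where $\varepsilon=(\varepsilon_1,\ldots,\varepsilon_n)'$, $\varepsilon_0=(\varepsilon_0,\varepsilon_{-1},\ldots,\varepsilon_{1-q})'$ collects the presample innovations, and $L$ is the lower-triangular banded Toeplitz matrix with $L_{ij}=\theta_{i-j}$. Row $i$ of $H$ picks up $\theta_\ell\varepsilon_{i-\ell}$ for $i-\ell\le0$, i.e.\ presample index $a=\ell-i+1$. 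This gives $\theta_{i+a-1}$, matching $H_{ia}$. The unit-variance white-noise assumption then gives
\[
\Sigma:=\operatorname{Cov}(x)=LL'+HH',
\]
and $\Sigma_{ii}=s$ for every $i$ (the summands $\theta_\ell^2$, $\ell=0,\ldots,q$, are split between $L$ and $H$). Likewise $\Sigma_{ij}=s\rho_{|i-j|}$, so $C=\Sigma/s$. Positive definiteness is immediate: $L$ is unit lower triangular, hence invertible, so $LL'\succ0$ and $\Sigma\succeq LL'\succ0$.

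For the precision matrix, the first step is to identify $F=L^{-1}$. The recursion \eqref{eq:MAq-filter-recursion} says exactly that the power series $\sum_k\psi_kz^k$ inverts $\theta(z)=\sum_\ell\theta_\ell z^\ell$ up to degree $n-1$. Lower-triangular Toeplitz matrices multiply like truncated power series, so $FL=I_n$. Hence $(LL')^{-1}=F'F=W$. The Woodbury identity applied to $W^{-1}+HH'$ then gives
\[
\Sigma^{-1}=W-WH(I_q+H'WH)^{-1}H'W=W-WHM^{-1}H'W,
\]
with $M\succ0$. Since $K=C^{-1}=s\Sigma^{-1}$, this is \eqref{eq:MAq-precision}.

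Finally, \eqref{eq:BFT-MAq} is the general entrywise identity stated at the start of \Cref{sec:structured-correlations}. It follows from $B(C)=\frac12(A_C-A_C^{-1})$ with $A_C=D^{1/4}CD^{1/4}$ and $A_C^{-1}=D^{-1/4}KD^{-1/4}$, where $D^{1/4}=\diag(\delta_i)$ by \Cref{thm:balanced-section}. Substituting $C_{ij}=\rho_{|i-j|}$ gives the displayed formula, and $\rho_h=0$ for $h>q$ gives the final claim.

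The only real obstacle is index bookkeeping. I would check carefully that the presample matrix $H$ has entries $\theta_{i+a-1}$, with the convention $\theta_k=0$ for $k>q$, and that truncating the power-series inverse at $n-1$ still gives the exact matrix inverse. The latter holds because only coefficients of degree at most $n-1$ enter an $n\times n$ lower-triangular Toeplitz product.
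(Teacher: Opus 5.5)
Your proposal is correct and follows the paper's proof essentially step for step. Both arguments use the representation $x=L\varepsilon+H\varepsilon_-$ to get $C=s^{-1}(LL'+HH')$ and obtain positive definiteness from the unit lower-triangular $L$. Both identify $F=L^{-1}$ through the convolution identity implied by the recursion, apply Woodbury's identity to obtain $K=s(W-WHM^{-1}H'W)$, and finish by substituting into the entrywise BCP formula. Your index check $H_{ia}=\theta_{i+a-1}$ is correct, and so is your remark that truncating the power-series inverse at degree $n-1$ gives the exact matrix inverse.
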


\Needspace{2\baselineskip}
The only matrix inverse in \eqref{eq:MAq-precision} has dimension $q$;
the formula does not require invertibility of the MA process.
For MA(1), $\rho_1=\theta_1/(1+\theta_1^2)$ and $C$ is tridiagonal.
More generally, although every correlation beyond lag $q$ is zero,
the corresponding precision entries are generally nonzero. The BCP
matrix is therefore generally dense, despite the finite range of the
marginal correlations.

\subsection{Autoregressive correlation matrices}
\label{subsec:ar-correlations}

Consider a zero-mean, stationary causal AR($p$) process with
unit marginal variance,
\begin{equation*}
 x_t=\sum_{\ell=1}^p\phi_\ell x_{t-\ell}+\varepsilon_t,
 \qquad \operatorname{Var}(\varepsilon_t)=\sigma^2>0,
\end{equation*}
where $\varepsilon_t$ is zero-mean white noise and
$1-\sum_{\ell=1}^p\phi_\ell z^\ell$ has all its roots outside
the unit circle. For $n>p\ge1$, its correlation matrix is Toeplitz,
$C_{ij}=\rho_{|i-j|}$, where $\rho_0=1$,
$\rho_h=\sum_{\ell=1}^p\phi_\ell\rho_{|h-\ell|}$ for $h\ge1$, and
$\sigma^2=1-\sum_{\ell=1}^p\phi_\ell\rho_\ell$.
Its precision matrix is banded.
Let $C_p=(\rho_{|i-j|})_{i,j=1}^p$, let
$E_p=(I_p,0)\in\R^{p\times n}$, and define
$r_s=e_s-\sum_{\ell=1}^p\phi_\ell e_{s-\ell}$ for $s=p+1,\ldots,n$,
where $e_s$ is the $s$th coordinate vector in $\R^n$.

\begin{proposition}[BCP coordinates of an autoregressive correlation matrix]
\label{prop:ar-correlation}
The precision matrix is
\begin{equation}\label{eq:ARp-precision}
 K=E_p'C_p^{-1}E_p+\frac{1}{\sigma^2}\sum_{s=p+1}^n r_s r_s',
 \qquad K_{ij}=0\quad\text{if }|i-j|>p.
\end{equation}
With $\delta_i=K_{ii}^{1/4}$ and $h=|i-j|$, the off-diagonal BCP entries are
\begin{equation}\label{eq:BFT-ARp}
 [B(C)]_{ij}
 =\begin{cases}
 \dfrac12\left(\delta_i\delta_j\rho_h
       -\dfrac{K_{ij}}{\delta_i\delta_j}\right),&1\le h\le p,\\[8pt]
 \dfrac12\delta_i\delta_j\rho_h,&h>p.
 \end{cases}
\end{equation}
\end{proposition}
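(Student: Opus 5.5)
The proof has two parts: derive the precision matrix $K=C^{-1}$ in \eqref{eq:ARp-precision} from the Gaussian likelihood factorization of an AR($p$) process, then substitute into the entrywise formula $[B(C)]_{ij}=\frac12(\delta_i\delta_jC_{ij}-K_{ij}/(\delta_i\delta_j))$ from the opening of \Cref{sec:structured-correlations}.

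For the precision matrix, let $x=(x_1,\ldots,x_n)'$ have covariance $C$. Causality and stationarity imply that the innovation $\varepsilon_s=r_s'x$ is uncorrelated with the past values $x_1,\ldots,x_{s-1}$ for $s>p$. The $\varepsilon_s$ are also mutually uncorrelated with variance $\sigma^2$. Since $r_s$ has coefficient one on $e_s$, the linear map
\[
x\mapsto\bigl(E_px,\ r_{p+1}'x,\ldots,r_n'x\bigr)=Lx
\]
is lower unit-triangular, hence invertible.

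The transformed vector has block-diagonal covariance $\operatorname{diag}(C_p,\sigma^2I_{n-p})$. Here $C_p$ is the covariance of the first $p$ observations, which is positive definite because $\sigma^2>0$ and the process is stationary. The cross-covariance between the first block and the innovations is zero by the orthogonality just noted.

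Then $LCL'=\operatorname{diag}(C_p,\sigma^2I_{n-p})$, so
\[
K=L'\operatorname{diag}(C_p^{-1},\sigma^{-2}I)L=E_p'C_p^{-1}E_pL_{\mathrm{top}}+\cdots.
\]
Written carefully, this is exactly $E_p'C_p^{-1}E_p+\sigma^{-2}\sum_s r_sr_s'$, because the first $p$ rows of $L$ are $E_p$. I also need $\sigma^2=1-\sum\phi_\ell\rho_\ell$, which follows by taking the covariance of the AR equation with $x_t$ and using unit variance. This check is routine but is the step needing the most care, since it is where stationarity and causality enter.

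Bandedness follows directly from the supports. $E_p'C_p^{-1}E_p$ is supported on indices $\{1,\ldots,p\}^2$, where $|i-j|\le p-1$. Each $r_s$ is supported on $\{s-p,\ldots,s\}$, so $r_sr_s'$ has entries only where $|i-j|\le p$. Hence $K_{ij}=0$ for $|i-j|>p$.

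Finally, set $\delta_i=K_{ii}^{1/4}$, which is positive since $K$ is SPD. By the definition $B(C)=\tfrac12(A_C-A_C^{-1})$ with $A_C=D^{1/4}CD^{1/4}$ (\Cref{thm:balanced-section}), we have $A_C^{-1}=D^{-1/4}KD^{-1/4}$. Reading off entries gives
\[
[B(C)]_{ij}=\tfrac12\bigl(\delta_i\delta_j\rho_h-K_{ij}/(\delta_i\delta_j)\bigr),
\]
and this reduces to $\tfrac12\delta_i\delta_j\rho_h$ when $h>p$, because $K_{ij}=0$ there.
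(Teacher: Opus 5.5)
Your proof is correct and follows essentially the same route as the paper's. The paper also applies the unit lower-triangular map with rows $E_p$ and $r_{p+1}',\ldots,r_n'$ to obtain $TCT'=\operatorname{diag}(C_p,\sigma^2I_{n-p})$, inverts this to get $K$, reads off bandedness from the supports of $E_p$ and the $r_s$, and substitutes into the entrywise BCP formula. Two small points: the intermediate expression ``$E_p'C_p^{-1}E_pL_{\mathrm{top}}+\cdots$'' should simply be the block expansion $L_{\mathrm{top}}'C_p^{-1}L_{\mathrm{top}}+\sigma^{-2}L_{\mathrm{bot}}'L_{\mathrm{bot}}$, and the identity $\sigma^2=1-\sum_\ell\phi_\ell\rho_\ell$ is not needed, since $\sigma^2$ enters only as the innovation variance. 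As in the paper, your argument uses only second moments, not a Gaussian likelihood factorization as your opening sentence suggests.
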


Equation~\eqref{eq:ARp-precision} includes the finite-sample boundary
corrections and requires only a $p\times p$ inverse. Beyond lag $p$,
each BCP coordinate is a positive rescaling of the corresponding
correlation, so its sign and zero status are preserved exactly.
For fixed $p$, only $O(np)$ pairs receive a precision correction.
Scalar formulas for the interior entries are given in
\eqref{eq:ARp-interior-BFT} of the supplement.

\Needspace{5\baselineskip}
When $p=1$ and $\phi_1=\rho$, $C$ coincides with the correlation matrix
of the homogeneous Gaussian path in \Cref{subsec:tree-correlations}.
In this case, $K_{11}=K_{nn}=1/(1-\rho^2)$ and
$K_{ii}=(1+\rho^2)/(1-\rho^2)$ for $2\le i\le n-1$.
With $\delta_i=K_{ii}^{1/4}$,
\begin{equation}\label{eq:BFT-AR1}
 [B(C)]_{ij}
 =\frac12\delta_i\delta_j\rho^{|i-j|}
 +\frac{\rho\,\indic\{|i-j|=1\}}
 {2(1-\rho^2)\delta_i\delta_j},
 \qquad i\ne j.
\end{equation}
The endpoint and interior diagonal values differ, so the transformed
matrix need not be Toeplitz. Gaussianity is not needed for the correlation
or BCP formulas above.

\subsection{Standard factor-model correlation matrices}

Consider standardized variables with $r$ orthogonal common factors.  Let the
$i$th row of the $n\times r$ loading matrix $\Lambda$ be $\lambda_i'$ and
set
\begin{equation}\label{eq:factor-correlation}
 C=Q+\Lambda\Lambda',
 \qquad
 Q=\diag(q_1,\ldots,q_n),
 \qquad
 q_i=1-\lambda_i'\lambda_i>0.
\end{equation}
Then $C$ is a correlation matrix.  Define $M$ and $\delta_i>0$ by
\begin{equation}\label{eq:factor-M-delta}
 M=I_r+\Lambda'Q^{-1}\Lambda,
 \qquad
 \delta_i^4=\frac1{q_i}
 -\frac{\lambda_i'M^{-1}\lambda_i}{q_i^2}.
\end{equation}

\begin{proposition}[BCP coordinates of a factor-model correlation matrix]
\label{prop:factor-correlation}
For the matrix in \eqref{eq:factor-correlation} and $i\ne j$,
\begin{equation}\label{eq:BFT-factor-correlation}
 [B(C)]_{ij}
 =\frac12\left\{
  \delta_i\delta_j\lambda_i'\lambda_j
  +\frac{\lambda_i'M^{-1}\lambda_j}
  {q_iq_j\delta_i\delta_j}
 \right\}.
\end{equation}
Equivalently, if
\begin{equation}\label{eq:factor-dual-loadings}
 u_i=\delta_i\lambda_i,
 \qquad
 v_i=\frac{M^{-1/2}\lambda_i}{q_i\delta_i},
\end{equation}
where $M^{-1/2}$ is the symmetric positive-definite inverse square root, then
\begin{equation}\label{eq:BFT-factor-Gram}
 [B(C)]_{ij}
 =\frac12\left(u_i'u_j+v_i'v_j\right),
 \qquad i\ne j.
\end{equation}
\end{proposition}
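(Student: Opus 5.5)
The plan is to use the general entrywise formula from the opening of \Cref{sec:structured-correlations}. For $i\ne j$ it gives
\[
[B(C)]_{ij}=\tfrac12\bigl(\delta_i\delta_jC_{ij}-K_{ij}/(\delta_i\delta_j)\bigr),
\qquad \delta_i=K_{ii}^{1/4},\quad K=C^{-1}.
\]
That formula follows from $A_C=D^{1/4}CD^{1/4}$, which gives $A_C^{-1}=D^{-1/4}KD^{-1/4}$, and from $B=\tfrac12(A_C-A_C^{-1})$. So the proof reduces to computing $K$ for $C=Q+\Lambda\Lambda'$ and substituting.

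\textbf{Step 1: positive definiteness.} Since $q_i>0$, $Q$ is positive definite. Adding the positive semidefinite term $\Lambda\Lambda'$ keeps $C$ positive definite. The diagonal is $q_i+\lambda_i'\lambda_i=1$, so $C\in\CorP{n}$.

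\textbf{Step 2: the precision matrix.} Apply the Woodbury identity:
\[
K=Q^{-1}-Q^{-1}\Lambda M^{-1}\Lambda'Q^{-1},\qquad M=I_r+\Lambda'Q^{-1}\Lambda.
\]
Here $M$ is positive definite, so $M^{-1/2}$ exists. Because $Q$ is diagonal, the entries are
\[
K_{ij}=\frac{\indic\{i=j\}}{q_i}-\frac{\lambda_i'M^{-1}\lambda_j}{q_iq_j}.
\]
Taking $i=j$ recovers $K_{ii}=\delta_i^4$ exactly as defined in \eqref{eq:factor-M-delta}. Since $K\succ0$, each $K_{ii}>0$, so $\delta_i>0$ is well defined.

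\textbf{Step 3: substitute.} For $i\ne j$ we have $C_{ij}=\lambda_i'\lambda_j$ and $K_{ij}=-\lambda_i'M^{-1}\lambda_j/(q_iq_j)$. Inserting these into the entrywise formula gives \eqref{eq:BFT-factor-correlation}; the two minus signs combine into the plus sign.

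\textbf{Step 4: the Gram form.} Write $M^{-1}=M^{-1/2}M^{-1/2}$, using that $M^{-1/2}$ is symmetric. Then
\[
v_i'v_j=\frac{\lambda_i'M^{-1}\lambda_j}{q_iq_j\delta_i\delta_j},
\qquad u_i'u_j=\delta_i\delta_j\lambda_i'\lambda_j,
\]
which gives \eqref{eq:BFT-factor-Gram}.

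Every step is routine. The only point needing attention is the sign bookkeeping in the Woodbury step, together with confirming that $\delta_i^4$ from \eqref{eq:factor-M-delta} is exactly $K_{ii}$, so that it is positive.
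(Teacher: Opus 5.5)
Your proposal is correct and follows essentially the same route as the paper. The paper likewise applies Woodbury to get $C^{-1}=Q^{-1}-Q^{-1}\Lambda M^{-1}\Lambda'Q^{-1}$, identifies its diagonal with $\delta_i^4$, substitutes $C_{ij}=\lambda_i'\lambda_j$ and $(C^{-1})_{ij}=-\lambda_i'M^{-1}\lambda_j/(q_iq_j)$ into the defining formula, and then reads off the Gram form via $M^{-1}=M^{-1/2}M^{-1/2}$.
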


Thus each off-diagonal BCP coordinate averages two loading inner products:
one formed from the risk-scaled original loadings and the other from the
precision-dual loadings. Evaluation uses $r\times r$ matrix operations
and pairwise inner products, avoiding an $n\times n$ eigendecomposition.
These finite-dimensional formulas do not require pervasive factors; the
additional conditions in \Cref{ass:sparse-factor} yield the asymptotic
sparsity result in \Cref{thm:sparse-bft}.

For a one-factor model, write $\Lambda=\beta$, $q_i=1-\beta_i^2$, and
$\kappa=1+\sum_{l=1}^n\beta_l^2/q_l$.
Then $\delta_i^4=q_i^{-1}-\beta_i^2/(\kappa q_i^2)$ and $
 [B(C)]_{ij}
 =\frac{\beta_i\beta_j}{2}
 \left(
  \delta_i\delta_j+
  \frac{1}{\kappa q_iq_j\delta_i\delta_j}
 \right)$ for $i\ne j$.

\section{Sparsity preservation in high dimensions}
\label{sec:high-dimensional-sparsity}

Equicorrelation provides a simple illustration of how transformed coordinates
behave as the dimension increases. For a fixed correlation $0<\rho<1$,
\Cref{prop:equicorrelation} gives
\begin{equation*}
 b_n(\rho)\longrightarrow\frac{\rho}{2\sqrt{1-\rho}}>0,
 \qquad
 g_n(\rho)=\frac1n\log\frac{1+(n-1)\rho}{1-\rho}\longrightarrow0.
\end{equation*}
The off-diagonal LS coordinates also equal $g_n(\rho)$. Thus BCP retains a
nonvanishing coordinate for a fixed nonzero correlation, whereas GFT and LS
coordinates shrink. This example motivates the question of whether vanishing
and nonvanishing correlations remain distinguishable in transformed coordinates.

The factor-model formulas in \Cref{prop:factor-correlation} provide a
basis for addressing this question.
Cholesky and spherical parametrizations have also been developed for
estimation with a given pattern of zero correlations
\citep{SaxenaBhatPinjari:2023,HuangYeWang:2025}. Here we study the asymptotic
correspondence between marginal correlations and unrestricted BCP
coordinates under suitable factor conditions. Each BCP coordinate then
approximates a positively rescaled correlation uniformly over all pairs,
keeping weak marginal associations small without driving material
associations to zero. This correspondence permits support recovery under
suitable signal and estimation-error conditions, although setting a BCP
coordinate to zero need not produce an exact zero correlation at finite
dimension. All proofs and detailed comparisons are collected in
Supplementary \Cref{app:sparsity-proofs}.

\Needspace{5\baselineskip}
\subsection{Factor strength and residual variation}

\begin{assumption}[Increasing factor strength with diagonal residual covariance]
\label{ass:sparse-factor}
For each $n$, let
\begin{equation*}
 C_n=\Lambda_n\Lambda_n'+\Omega_n\in\CorP{n},\qquad
 \Omega_n=\diag(\omega_{1,n},\ldots,\omega_{n,n}).
\end{equation*}
The matrix $\Lambda_n$ has $r_n\geq1$ linearly independent columns.
For a constant $a>0$ independent of $n$,
\begin{equation}\label{eq:sparse-factor-bounds}
 a\leq\omega_{i,n}\leq1,\qquad
 \mu_n:=\lambda_{\min}(\Lambda_n'\Omega_n^{-1}\Lambda_n)
 \rightarrow\infty.
\end{equation}
The transpose of row $i$ of $\Lambda_n$ is denoted by $\lambda_{i,n}$.
\end{assumption}

Here $\omega_{i,n}$ measures residual variance after removing the common
factors; $\tau_i^2$ in \Cref{prop:partial-correlation} measures residual
variance after projection on the remaining observed variables. The diagonal
residual covariance requires the residuals to be mutually uncorrelated,
while the lower bound on their variances prevents any variable from being
almost entirely explained by the factors. Since $C_n$
has unit diagonal, the loadings satisfy
$\norm{\lambda_{i,n}}^2=1-\omega_{i,n}\leq1-a$.
To interpret the factor-strength condition, write
\begin{equation*}
 \mu_n
 =\min_{\norm{v}_2=1}v'\Lambda_n'\Omega_n^{-1}\Lambda_n v
 =\min_{\norm{v}_2=1}\sum_{i=1}^n
 \frac{(\lambda_{i,n}'v)^2}{\omega_{i,n}}.
\end{equation*}
Here $v\in\mathbb R^{r_n}$ selects a unit direction in factor space,
corresponding to a linear combination of the factors, and
$\lambda_{i,n}'v$ is variable $i$'s loading in that direction. Dividing
its squared loading by $\omega_{i,n}$ measures its contribution relative
to residual noise, giving greater weight to variables with smaller
residual variances. Summing across variables gives the aggregate strength
of that direction, and taking the minimum identifies the weakest direction.

Thus, $\mu_n\to\infty$ requires factor strength to diverge uniformly
across all unit directions as the panel grows. Increasing dimension alone
is insufficient: merely adding variables unrelated to the weakest direction
does not strengthen it. The condition does not require each direction to
affect a fixed fraction of the panel. The number of factors may also
increase, although
\begin{equation*}
 r_n\mu_n\leq\tr(\Lambda_n'\Omega_n^{-1}\Lambda_n)
 \leq\frac{n(1-a)}{a}
\end{equation*}
implies $r_n=o(n)$.

A familiar special case is a fixed number of pervasive factors with
$\mu_n\geq\kappa n$ for some $\kappa>0$.
The assumption also allows semi-strong factors, for example
$\mu_n\asymp n^\alpha$ with $0<\alpha<1$
\citep{ChudikPesaranTosetti:2011}.
With one factor per disjoint group and all within-group loading magnitudes
uniformly bounded away from zero, $\mu_n\asymp\min_k n_k$, where $n_k$
is the size of group $k$. The condition therefore holds when the smallest
group grows.
More general loading patterns must satisfy the same eigenvalue condition.
The assumption permits heterogeneous loadings and residual variances and
does not require the proportion of nonzero correlations to vanish.

Here sparsity concerns marginal uncorrelatedness, $C_{ij,n}=0$, rather
than zeros of the precision matrix. For Gaussian vectors, covariance zeros
encode marginal independence \citep{ChaudhuriDrtonRichardson:2007}, whereas
precision zeros encode conditional independence given the remaining variables
\citep{Dempster:1972}. Related screening methods for correlation and
partial-correlation hubs are studied by \citet{HeroRajaratnam:2012}.

\subsection{Entrywise preservation by BCP}

Let $K_n=C_n^{-1}$, $\delta_{i,n}=K_{ii,n}^{1/4}$, and
$B_n=B(C_n)$. The exact coordinate formula is
\begin{equation}\label{eq:sparse-bft-exact}
 B_{ij,n}=\frac12\left(
 \delta_{i,n}\delta_{j,n}C_{ij,n}
 -\frac{K_{ij,n}}{\delta_{i,n}\delta_{j,n}}\right),\qquad i\ne j.
\end{equation}
The weakest factor strength also controls the approximation error.
Under \Cref{ass:sparse-factor}, Woodbury's identity expresses $K_n$ as
the diagonal matrix $\Omega_n^{-1}$ minus a factor correction involving
$(I_{r_n}+\Lambda_n'\Omega_n^{-1}\Lambda_n)^{-1}$. Since
\begin{equation*}
 \left\|(I_{r_n}+\Lambda_n'\Omega_n^{-1}\Lambda_n)^{-1}\right\|_2
 =\frac{1}{1+\mu_n},
\end{equation*}
the bounds on loadings and residual variances imply that the off-diagonal
precision entries are uniformly $O(\mu_n^{-1})$, while the diagonal
scaling remains bounded. The following theorem translates these bounds
into a uniform approximation of BCP coordinates by positively rescaled
correlations.

\begin{theorem}[Uniform preservation of the correlation entries]
\label{thm:sparse-bft}
Under \Cref{ass:sparse-factor}, set
$h_n=\{a^2(1+\mu_n)\}^{-1}$ and
$K_a=\tfrac12+\tfrac14a^{-1/4}$. Then
\begin{equation}\label{eq:sparse-bft-expansion}
 B_{ij,n}=s_{ij,n}C_{ij,n}+R_{ij,n},\qquad
 s_{ij,n}=\frac{1}{2(\omega_{i,n}\omega_{j,n})^{1/4}},
\end{equation}
where, uniformly over $i\ne j$,
\begin{equation}\label{eq:sparse-bft-error}
 \frac12\leq s_{ij,n}\leq\frac{1}{2\sqrt a},\qquad
 |R_{ij,n}|\leq K_a h_n=O(\mu_n^{-1}).
\end{equation}
At an exact zero, the sharper bound is
\begin{equation}\label{eq:sparse-null-bound}
 C_{ij,n}=0\quad\Rightarrow\quad
 |B_{ij,n}|\leq\frac{h_n}{2}.
\end{equation}
\end{theorem}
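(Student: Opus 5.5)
The plan is to combine the exact coordinate formula \eqref{eq:sparse-bft-exact} with Woodbury's identity for $K_n=C_n^{-1}$, and then to control the two terms of $B_{ij,n}$ separately. Write $M_n=I_{r_n}+\Lambda_n'\Omega_n^{-1}\Lambda_n$. Woodbury gives
\[
 K_n=\Omega_n^{-1}-\Omega_n^{-1}\Lambda_nM_n^{-1}\Lambda_n'\Omega_n^{-1}.
\]
Hence
\[
 K_{ij,n}=-\frac{\lambda_{i,n}'M_n^{-1}\lambda_{j,n}}{\omega_{i,n}\omega_{j,n}}\quad(i\ne j),
 \qquad
 K_{ii,n}=\frac{1}{\omega_{i,n}}-\frac{\lambda_{i,n}'M_n^{-1}\lambda_{i,n}}{\omega_{i,n}^2},
\]
which matches \Cref{prop:factor-correlation}. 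We have $\norm{M_n^{-1}}_2=1/(1+\mu_n)$, $\norm{\lambda_{i,n}}^2\le 1-a\le1$ and $\omega_{i,n}\ge a$. Together these give $|K_{ij,n}|\le h_n$ for $i\ne j$ and $0\le \omega_{i,n}^{-1}-K_{ii,n}\le h_n$.

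Next I would use $K_{ii,n}\ge 1$, which holds because $C_n$ is a correlation matrix: by Cauchy--Schwarz, $1=(e_i'e_i)^2\le (e_i'C_ne_i)(e_i'K_ne_i)=K_{ii,n}$. So $\delta_{i,n}\ge1$. The precision term in \eqref{eq:sparse-bft-exact} is then bounded by $|K_{ij,n}|/(2\delta_{i,n}\delta_{j,n})\le h_n/2$. When $C_{ij,n}=0$, this term is all of $B_{ij,n}$, which proves the null bound \eqref{eq:sparse-null-bound}.

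For the general case, set $x_i=\omega_{i,n}^{-1/4}$ and $y_i=\delta_{i,n}$, so that $s_{ij,n}=\tfrac12x_ix_j$ and
\[
 R_{ij,n}=\tfrac12(y_iy_j-x_ix_j)C_{ij,n}-\frac{K_{ij,n}}{2y_iy_j}.
\]
The bounds on $s_{ij,n}$ in \eqref{eq:sparse-bft-error} are immediate from $a\le\omega_{i,n}\le1$. Since $0\le x_i^4-y_i^4\le h_n$ and $y_i\ge1$, we get $0\le x_i-y_i\le h_n/\{(x_i+y_i)(x_i^2+y_i^2)\}\le h_n/4$. Then
\[
 x_ix_j-y_iy_j=x_i(x_j-y_j)+y_j(x_i-y_i),
\]
with $x_i,y_j\le a^{-1/4}$, which gives $0\le x_ix_j-y_iy_j\le a^{-1/4}h_n/2$. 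Using $|C_{ij,n}|\le1$, the first part of $R_{ij,n}$ is at most $\tfrac14a^{-1/4}h_n$ in absolute value. Adding the $h_n/2$ bound on the precision part yields $|R_{ij,n}|\le K_ah_n$. Finally, $h_n=O(\mu_n^{-1})$ because $a$ is fixed.

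The only delicate point is to get the constants $K_a$ and $h_n/2$ exactly. This depends on using the lower bound $\delta_{i,n}\ge1$ rather than the cruder $\delta_{i,n}^4\ge\omega_{i,n}^{-1}-h_n$, both in the denominator of the precision term and in the mean-value step for the fourth root.
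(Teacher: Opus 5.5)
Your proposal is correct and follows the paper's proof essentially step for step. Woodbury's identity gives $|K_{ij,n}|\le h_n$ and $0\le\omega_{i,n}^{-1}-K_{ii,n}\le h_n$, and $K_{ii,n}\ge1$ gives $\delta_{i,n}\ge1$ (you use Cauchy--Schwarz where the paper uses a Schur complement); the fourth-root step $0\le\omega_{i,n}^{-1/4}-\delta_{i,n}\le h_n/4$ (your factorization of $x^4-y^4$ where the paper uses the derivative bound for $x^{1/4}$ on $[1,\infty)$) then yields the same constants $K_a$ and $h_n/2$.
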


Because the scaling is positive and bounded above and away from zero,
\eqref{eq:sparse-bft-expansion} gives, uniformly over $i\ne j$,
\begin{equation}\label{eq:sparse-vanishing-equivalence}
 |B_{ij,n}|\leq\frac{|C_{ij,n}|}{2\sqrt a}+K_ah_n,
 \qquad
 |C_{ij,n}|\leq2|B_{ij,n}|+2K_ah_n,
\end{equation}
with $h_n\rightarrow0$. Thus, for a fixed pair, or along any sequence of
pairs $(i_n,j_n)$, the coordinate $B_{i_nj_n,n}$ vanishes as
$n\rightarrow\infty$ if and only if the correlation $C_{i_nj_n,n}$ does.
A correlation bounded away from zero retains a nonvanishing
coordinate, while a zero or vanishing correlation has a vanishing
coordinate. If $\mu_n|C_{i_nj_n,n}|\to\infty$, the relative approximation
error vanishes and the signs agree eventually. The uniform entrywise
bound also allows statements about all pairs simultaneously; it does not
assert a vanishing operator-norm error. This preservation concerns the limit $n\to\infty$. At finite dimension,
\eqref{eq:sparse-bft-exact} gives $B_{ij,n}=0$ at a correlation zero
only when $K_{ij,n}=0$ as well. Increasing dimension alone is insufficient;
Supplementary \Cref{rem:sparse-exact-zeros} gives a fixed-block example.
For pervasive factors, the remainder is $O(n^{-1})$; if
$\mu_n\asymp n^\alpha$, it is $O(n^{-\alpha})$.

The factor setting provides an interpretable sufficient regime for this
equivalence. More generally, it suffices that the precision diagonals are
uniformly bounded and the off-diagonal precision entries vanish uniformly:
$\max_i K_{ii,n}\le M<\infty$, $\varepsilon_n:=\max_{i\ne j}|K_{ij,n}|\rightarrow0$. The first condition means that no standardized variable becomes almost
perfectly linearly predictable from all the others: its prediction residual
variance is $K_{ii,n}^{-1}\ge M^{-1}$. Under this bound, the second condition
is equivalent to uniformly vanishing full partial correlations.
Indeed, since $K_{ii,n}\ge1$, \eqref{eq:sparse-bft-exact} expresses
$B_{ij,n}$ as $\tfrac12(K_{ii,n}K_{jj,n})^{1/4}C_{ij,n}$ plus an error
bounded by $\varepsilon_n/2$, with the multiplier between $1/2$ and
$\sqrt M/2$. Thus the same vanishing-coordinate equivalence holds.
\Cref{ass:sparse-factor} implies these conditions with $M=a^{-1}$ and
$\varepsilon_n\le h_n=O(\mu_n^{-1})$; we retain that setting for the explicit
rates and support results below.

\subsection{Support recovery}

Thresholding is widely used for sparse covariance estimation
\citep{BickelLevina:2008}. Related work studies large-scale screening for
variables with strong correlations
\citep{HeroRajaratnam:2011,WeiRajaratnamHero:2023}. Here a common magnitude threshold on BCP
coordinates can distinguish null and nonnull pairs when the smallest nonzero
correlation dominates the approximation error.

\begin{corollary}[Recovery of the entire support]
\label{cor:sparse-support}
Under \Cref{ass:sparse-factor}, let
$\mathcal S_n=\{(i,j):i<j,\ C_{ij,n}\ne0\}$ and
$m_n=\min_{(i,j)\in\mathcal S_n}|C_{ij,n}|$. If
$\mu_n^{-1}=o(\tau_n)$ and $\tau_n=o(m_n)$, then, for all sufficiently large
$n$,
\begin{equation*}
 \{(i,j):i<j,\ |B_{ij,n}|>\tau_n\}=\mathcal S_n.
\end{equation*}
The retained coordinates also have the signs of the corresponding
correlations. At any finite $n$, it is sufficient to choose
\begin{equation*}
 h_n/2<\tau_n<m_n/2-K_a h_n,
\end{equation*}
whenever this interval is nonempty. These uniform finite-dimension bounds
are sufficient and can be conservative.
\end{corollary}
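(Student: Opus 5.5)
The plan is to apply \Cref{thm:sparse-bft} directly and separate null from nonnull pairs with the threshold. First I fix $n$ and assume $h_n/2<\tau_n<m_n/2-K_ah_n$. Once this finite-$n$ statement is in hand, the asymptotic claim will follow by checking that the interval is eventually nonempty and contains $\tau_n$.

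For a null pair, $C_{ij,n}=0$, the sharper bound \eqref{eq:sparse-null-bound} gives $|B_{ij,n}|\le h_n/2<\tau_n$. So no null pair survives the threshold.

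For a nonnull pair, I use the expansion \eqref{eq:sparse-bft-expansion} with the lower bound $s_{ij,n}\ge \tfrac12$ from \eqref{eq:sparse-bft-error}:
\begin{equation*}
 |B_{ij,n}|\ge s_{ij,n}|C_{ij,n}|-|R_{ij,n}|\ge \tfrac12 m_n-K_ah_n>\tau_n .
\end{equation*}
So every pair in $\mathcal S_n$ is retained, and the thresholded set equals $\mathcal S_n$.

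For the signs, note that $s_{ij,n}>0$ and, for a retained nonnull pair, $s_{ij,n}|C_{ij,n}|\ge m_n/2>K_ah_n\ge|R_{ij,n}|$. Hence the remainder cannot flip the sign, and $\operatorname{sign}B_{ij,n}=\operatorname{sign}C_{ij,n}$.

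For the asymptotic statement, recall $h_n=\{a^2(1+\mu_n)\}^{-1}$, so $h_n\le a^{-2}\mu_n^{-1}$, and $a$ is fixed. The hypothesis $\mu_n^{-1}=o(\tau_n)$ then gives $h_n/2<\tau_n$ for large $n$. It also gives $K_ah_n=o(\tau_n)$. Combining this with $\tau_n=o(m_n)$, I get $\tau_n+K_ah_n=o(m_n)$, hence $\tau_n<m_n/2-K_ah_n$ eventually. Both inequalities therefore hold for all sufficiently large $n$, and the finite-$n$ argument applies.

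If $\mathcal S_n$ is empty, $m_n$ is undefined and only the null-pair step is needed. I will handle this case separately, with the convention $m_n=\infty$.

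There is no real obstacle here. The only point needing care is the bookkeeping in the last step: the rate conditions must make the strict inequalities hold simultaneously for large $n$, which requires absorbing the constants $a^{-2}$ and $K_a$.
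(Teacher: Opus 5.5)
Your proposal is correct and follows essentially the same route as the paper's proof. Null pairs are excluded by \eqref{eq:sparse-null-bound}; every support pair is retained with the correct sign via $s_{ij,n}\ge\tfrac12$ and $|R_{ij,n}|\le K_ah_n$. Your check that $\mu_n^{-1}=o(\tau_n)$ and $\tau_n=o(m_n)$ eventually place $\tau_n$ in the finite-$n$ interval is correct and makes explicit a step the paper's short proof leaves implicit.
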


The condition concerns the entire support, not just a fixed pair.
For example, if all nonzero correlations are bounded away from zero,
$\tau_n=\mu_n^{-1/2}$ is admissible. If the support is empty, only the null
bound is needed. A separating threshold exists when $\mu_nm_n\rightarrow\infty$.
For estimated coordinates, support recovery also requires estimation error
to be smaller than the threshold.

\begin{corollary}[Support recovery from estimated coordinates]
\label{cor:sparse-estimated-support}
Under \Cref{ass:sparse-factor}, let $m_n$ be the smallest nonzero
correlation magnitude and let $B_n=B(C_n)$. Suppose symmetric hollow
estimates satisfy
\begin{equation*}
 \max_{i\ne j}|\hat B_{ij,n}-c_nB_{ij,n}|=O_p(e_n)
\end{equation*}
for positive scalars $c_n$, possibly random, such that $c_n$ and
$c_n^{-1}$ are $O_p(1)$. If
\begin{equation*}
 \mu_n^{-1}+e_n=o(\tau_n),\qquad \tau_n=o(m_n),
\end{equation*}
then thresholding $\hat B_n$ at $\tau_n$ recovers $\mathcal S_n$
and the signs of its correlations with probability tending to one as $n\to\infty$.
\end{corollary}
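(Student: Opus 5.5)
The proof runs on the deterministic bounds of \Cref{thm:sparse-bft} and \Cref{cor:sparse-support}, with the estimation error handled on a single high-probability event. Fix $\epsilon>0$. Since $c_n$, $c_n^{-1}$ and $\max_{i\ne j}|\hat B_{ij,n}-c_nB_{ij,n}|/e_n$ are all $O_p(1)$, we can choose constants $0<\underline c\le\bar c<\infty$ and $L<\infty$ such that, for all large $n$, the event
\[
\mathcal E_n=\Bigl\{\underline c\le c_n\le\bar c,\ \max_{i\ne j}|\hat B_{ij,n}-c_nB_{ij,n}|\le Le_n\Bigr\}
\]
has probability at least $1-\epsilon$. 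Everything after this is deterministic on $\mathcal E_n$, and since $\epsilon$ is arbitrary this gives probability tending to one.

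\emph{Null pairs.} If $C_{ij,n}=0$, then \eqref{eq:sparse-null-bound} gives $|B_{ij,n}|\le h_n/2$. On $\mathcal E_n$,
\[
|\hat B_{ij,n}|\le \bar c\,h_n/2+Le_n .
\]
Because $h_n=O(\mu_n^{-1})$ and $\mu_n^{-1}+e_n=o(\tau_n)$, the right-hand side is $o(\tau_n)$ uniformly over pairs. So for large $n$ every null pair satisfies $|\hat B_{ij,n}|\le\tau_n$ and is discarded.

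\emph{Support pairs.} For $(i,j)\in\mathcal S_n$, use \eqref{eq:sparse-bft-expansion}: $B_{ij,n}=s_{ij,n}C_{ij,n}+R_{ij,n}$ with $s_{ij,n}\ge1/2$ and $|R_{ij,n}|\le K_ah_n$. Then
\[
\operatorname{sign}(C_{ij,n})B_{ij,n}\ge m_n/2-K_ah_n .
\]
On $\mathcal E_n$,
\[
\operatorname{sign}(C_{ij,n})\hat B_{ij,n}\ge \underline c\,(m_n/2-K_ah_n)-Le_n .
\]
From $\tau_n=o(m_n)$ and $h_n+e_n=o(\tau_n)$ we get $h_n+e_n=o(m_n)$. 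Hence the lower bound equals $\underline c\,m_n/2\,(1+o(1))$, which eventually exceeds $\tau_n$ because $\tau_n=o(m_n)$. This bound is uniform in $(i,j)$, so every support pair is retained. Since the quantity bounded below is $\operatorname{sign}(C_{ij,n})\hat B_{ij,n}$ and it is positive, $\hat B_{ij,n}$ also has the sign of $C_{ij,n}$; the signs are recovered because $c_n>0$. When $\mathcal S_n$ is empty, only the null step is needed.

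The main obstacle is keeping every bound uniform in the pair while $c_n$ is random. Fixing the constants $\underline c,\bar c,L$ before applying the deterministic inequalities is what handles this, since the bounds from \Cref{thm:sparse-bft} are already uniform in $(i,j)$.
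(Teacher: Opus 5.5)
Your proof is correct and follows essentially the same route as the paper's: you bound null coordinates above by $c_nh_n/2$ plus the uniform estimation error and support coordinates below by $c_n(m_n/2-K_ah_n)$ minus it, then use $c_n^{\pm1}=O_p(1)$; your fixed-constant event $\mathcal E_n$ just makes the paper's $o_p$ manipulations explicit. The only tacit step is that replacing $c_n$ by $\underline c$ in the support bound requires $m_n/2-K_ah_n\ge0$, which holds for large $n$ by the relation $h_n=o(m_n)$ you derive immediately afterward.
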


The error rate $e_n$ depends on the estimator. The factor $c_n$ allows
common multiplicative bias without changing the asymptotic separation
orders; a particular finite-sample threshold still depends on scale.
For independent Gaussian observations with $n,T\to\infty$ and $n/T$
bounded away from one from below,
Supplementary \Cref{prop:sparse-plugin} establishes this condition for
$\hat B_n=B(\hat C_n)$ with
$c_n=(1-n/T)^{-1/2}$ and $e_n=\sqrt{\log n/T}$.
The result accommodates a nonvanishing dimension-to-sample-size ratio.

\subsection{Reconstruction after imposing zero coordinates}

Support recovery concerns the forward coordinates. For modeling, it is
also useful to know whether imposing their zeros retains the original
correlations after inversion. The following result gives such a guarantee
with a bounded number of factors and allows arbitrary heterogeneous
loadings subject to \Cref{ass:sparse-factor}.

\begin{proposition}[Reconstruction from the correlation support]
\label{prop:sparse-reconstruction}
Suppose \Cref{ass:sparse-factor} holds and $\sup_n r_n<\infty$.
Let $\tilde B_n$ equal $B_n=B(C_n)$ except that its entries are
set to zero wherever $C_{ij,n}=0$. Then
\begin{equation*}
 \max_{i,j}\bigl|[C(\tilde B_n)]_{ij}-C_{ij,n}\bigr|
 =O(\mu_n^{-1}).
\end{equation*}
In the pervasive case $\mu_n\geq\kappa n$, this error is $O(n^{-1})$.
\end{proposition}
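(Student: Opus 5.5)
The plan is to compare $\tilde B_n$ with $B_n$ through a decomposition that separates a rank-$O(r_n)$ part, which the inverse map handles exactly, from a remainder with $O(\mu_n^{-1})$ entries. By \eqref{eq:sparse-bft-exact} and Woodbury,
\begin{equation*}
 K_n=\Omega_n^{-1}-W_n,\qquad
 W_n=\Omega_n^{-1}\Lambda_n(I_{r_n}+\Lambda_n'\Omega_n^{-1}\Lambda_n)^{-1}\Lambda_n'\Omega_n^{-1}.
\end{equation*}
With $\Delta_n=\diag(\delta_{i,n})$ this gives the exact identity
\begin{equation*}
 B_n=\tfrac12\Delta_n\Lambda_n\Lambda_n'\Delta_n
 +\tfrac12\Delta_n^{-1}W_n\Delta_n^{-1}
 +D_n ,
\end{equation*}
where $D_n$ is a correction that is diagonal off the diagonal. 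It compensates for the fact that $\Delta_n\Omega_n\Delta_n-\Delta_n^{-1}\Omega_n^{-1}\Delta_n^{-1}$ is diagonal, and, since $\delta_{i,n}^4=\omega_{i,n}^{-1}+O(\mu_n^{-1})$, its entries are $O(\mu_n^{-1})$.

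At a correlation zero, $\lambda_{i,n}'\lambda_{j,n}=0$, so $B_{ij,n}=\tfrac12[\Delta_n^{-1}W_n\Delta_n^{-1}]_{ij}$. Hence
\begin{equation*}
 \tilde B_n-B_n=-\tfrac12 P_{\mathcal Z}\Had(\Delta_n^{-1}W_n\Delta_n^{-1}),
\end{equation*}
where $P_{\mathcal Z}$ is the indicator of the off-diagonal zero set. Equivalently,
\begin{equation*}
 \tilde B_n=\tfrac12\Delta_n\Lambda_n\Lambda_n'\Delta_n+\tfrac12 P_{\mathcal S}\Had(\Delta_n^{-1}W_n\Delta_n^{-1})+\tilde D_n ,
\end{equation*}
with $\mathcal S$ the support of $C_n$ (including the diagonal). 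Both perturbation terms have all entries bounded by a constant times $h_n$; the constant comes from \Cref{thm:sparse-bft}.

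Next, I will write $C(\tilde B_n)=\Cor(\tilde A_n)$ with $\tilde A_n=e^{\arsinh\tilde B_n}$ and $C_n=\Cor(A_n)$, where $A_n=A_{C_n}=\Delta_nC_n\Delta_n$. Since $\Cor$ is entrywise Lipschitz on matrices whose diagonal is bounded away from zero and infinity, and $\diag(A_n)=\diag(\Delta_n^2)\in[1,a^{-1/2}]$, it suffices to show
\begin{equation*}
 \max_{i,j}|[\tilde A_n-A_n]_{ij}|=O(\mu_n^{-1}).
\end{equation*}
To this end I will use the integral representation of the derivative of $f(x)=x+\sqrt{1+x^2}$ from \Cref{prop:differential-inverse} along the segment $B_n+t(\tilde B_n-B_n)$, $t\in[0,1]$. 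This gives $\tilde A_n-A_n=\int_0^1 Q_t\{\Omega^{(t)}\Had(Q_t'E_nQ_t)\}Q_t'\,\dd t$ with $E_n=\tilde B_n-B_n$ and $0<\Omega^{(t)}_{ij}<2$.

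The main obstacle is that entrywise smallness of $E_n$ does not give operator-norm smallness. Its operator norm can be of order $n/\mu_n$, which is $O(1)$ for pervasive factors, so a naive spectral perturbation bound fails. The fixed-rank hypothesis $\sup_nr_n<\infty$ must be used here.

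My first attempt will exploit that every matrix on the segment has the form ``diagonal plus rank-$O(r_n)$ plus masked rank-$r_n$''. Writing $W_n$-terms as $\sum_{k,l}G_{kl}\,x^k(x^l)'$ with $\norm{G}_2\le(1+\mu_n)^{-1}$ and bounded vectors $x^k=\Delta_n^{-1}\Omega_n^{-1}\Lambda_n e_k$, the masked perturbation is a sum of $r_n^2$ Schur-multiplied rank-one pieces. Each has entries $O(\mu_n^{-1})$ and is Hadamard-bounded by the 0/1 mask.

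I will then use that the entire function $f$ restricted to the spectrum can be approximated by polynomials, or expressed through resolvents $(zI-B_t)^{-1}$. The resolvent of ``diagonal plus rank-$O(r)$'' is diagonal plus rank-$O(r)$ by Woodbury, with entries controlled in the max norm uniformly on a contour enclosing the spectrum. The spectrum is bounded because $\norm{B_n}_2$ and $\norm{\tilde B_n}_2$ are bounded via \Cref{prop:properness}-type estimates.

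For the masked piece, I aim to bound its contribution by a max-to-max norm estimate. The contribution is a product of the form resolvent $\times$ masked $\times$ resolvent. The off-diagonal low-rank parts of the resolvents contract the $n$-fold sums via the factor $(I+\Lambda_n'\Omega_n^{-1}\Lambda_n)^{-1}$, which contributes an extra $\mu_n^{-1}$ exactly compensating the $n$ terms in each sum; this again requires $r_n$ bounded. This yields $O(\mu_n^{-1})$ entrywise, and the pervasive case then follows from $\mu_n\ge\kappa n$.
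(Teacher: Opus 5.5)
There is a genuine gap in the analytic core of your plan. Your decomposition of $B_n$, the exact masked form of $\tilde B_n$, and the reduction to an entrywise bound on $\tilde A_n-A_n$ all match the paper. The failure is your claim that the spectra of $B_n$ and $\tilde B_n$ are bounded; they are not.

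Since $\Delta_n^2\succeq I_n$, the term $\tfrac12\Delta_n\Lambda_n\Lambda_n'\Delta_n$ has $r_n$ nonzero eigenvalues, each at least $\tfrac12\lambda_{\min}(\Lambda_n'\Lambda_n)\ge a\mu_n/2$. The remaining terms are bounded in operator norm, so by Weyl's inequality $B_n$ and $\tilde B_n$ each have $r_n$ eigenvalues of order $\mu_n\to\infty$. Equicorrelation makes this explicit: by \Cref{prop:equicorrelation} the top eigenvalue is $(n-1)b_n(\rho)\to\infty$. \Cref{prop:properness} gives no uniform bound either, since its upper bound is $\tfrac12(n/\ell-\ell/n)$.

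Consequently no fixed contour encloses the spectrum. On these outlying eigenvalues $f(x)=x+\sqrt{1+x^2}\approx 2x$ is of order $\mu_n$. Neither a resolvent representation of $f$ nor your segment integral of $\dd f$ can therefore be closed with entrywise bounds. Also, $f$ is not entire: it has branch points at $\pm\mathrm{i}$, so contours must stay in $|\operatorname{Im}z|<1$.

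The missing idea is to split $f(x)=2x+g(x)$, with $g(x)=\sqrt{1+x^2}-x=(\sqrt{1+x^2}+x)^{-1}$.
\begin{itemize}
\item The linear part contributes $2(\tilde B_n-B_n)$, which is trivially $O(\mu_n^{-1})$ entrywise.
\item The function $g$ is $O(x^{-1})$ as $x\to+\infty$. The outliers lie on the positive side, because the factor term is positive semidefinite. So they contribute only $O(\mu_n^{-1})$ in operator norm.
\item The other $n-r_n$ eigenvalues lie in a fixed interval $[-M,M]$. There $g(X)\Pi-g(D_0)$ is a Cauchy integral over a fixed rectangle.
\end{itemize}
The paper compares $g(B_n)$ and $g(\tilde B_n)$ separately with $g(D_0)$. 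It takes $D_0+J$, with $J=R$ or its mask $T$, as the bounded base and absorbs the large rank-$r_n$ part by Woodbury. This step needs uniform invertibility of the compression $V'(zI_n-D_0-J)^{-1}V$ on the contour.

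Your control of the masked term is also insufficient in the generality of the statement. Balancing $n$ summands against a single factor $\mu_n^{-1}$ works only when $\mu_n\gtrsim n$. For semi-strong factors, $\mu_n\asymp n^{\alpha}$ with $\alpha<1$, a factor $n^{1-\alpha}$ is left over. Hadamard masking by a $0/1$ pattern gives no operator-norm control at all.

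What works are row-norm and Frobenius bounds:
\begin{itemize}
\item Because $R=\tfrac12\Delta_n^{-1}W_n\Delta_n^{-1}\succeq0$, you have $\norm{R_{i\cdot}}_2^2\le\norm{R}_2R_{ii}=O(\mu_n^{-1})$, and masking cannot increase row norms.
\item You have $\norm{T}_2\le\normF{T}\le\normF{R}\le\sqrt{r_n}\,\norm{R}_2=O(1)$. This is exactly where $\sup_n r_n<\infty$ enters. It also keeps the bounded part of the spectrum of $\tilde B_n$ in a fixed interval, so your worry about an operator norm of order $n/\mu_n$ does not materialize.
\end{itemize}
Each entry of the second-order resolvent terms is then bounded by a constant times $\norm{J_{i\cdot}}_2\norm{J_{j\cdot}}_2=O(\mu_n^{-1})$.

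Finally, your diagonal correction $D_n$ is not $O(\mu_n^{-1})$. Its entries are $\tfrac12(q_i-q_i^{-1})$ with $q_i=\omega_{i,n}K_{ii,n}^{1/2}=\omega_{i,n}^{1/2}+O(\mu_n^{-1})$, which is of order one. It is bounded, which is all that is needed, but it belongs in the base operator rather than among the $O(h_n)$ perturbations.
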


In the pervasive case, the trace bound
$r_n\mu_n\leq n(1-a)/a$ gives
$r_n\leq(1-a)/(a\kappa)$, so the bounded-factor condition is automatic.
The result also covers a bounded number of semi-strong factors, with the
rate determined by the weakest factor direction.

The proposition concerns zeroing coordinates at population correlation
zeros and retaining the remaining coordinates unchanged. It controls the
entire reconstructed matrix, including its originally zero entries.
Because the zero locations are supplied, no minimum-signal condition is
needed; identifying them from data requires separate support-recovery
conditions. The reconstructed matrix need not retain the original factor
representation. Positive definiteness follows from the inverse map at
every dimension.

\subsection{Comparison with the GFT, InvC, and LS methods}
\label{sec:sparsity-comparators}

What separates BCP from GFT, InvC, and LS is the relation between a
coordinate and the corresponding marginal correlation. Write $H_n=\Off(\log C_n)$ for GFT;
InvC uses the off-diagonal entries of $K_n=C_n^{-1}$.
Under \Cref{ass:sparse-factor},
\begin{equation*}
 \max_{i\ne j}|H_{ij,n}|=O(\log(1+\mu_n)/\mu_n),\qquad
 \max_{i\ne j}|K_{ij,n}|\leq h_n=O(\mu_n^{-1}),
\end{equation*}
including at correlations bounded away from zero. The InvC bound follows
directly from Woodbury's identity. In the pervasive special case $\mu_n\geq\kappa n$ with homogeneous
residual variance $\Omega_n=\omega I_n$, let $Q_n=\Lambda_n'\Lambda_n/n$.
The leading terms are
\begin{equation*}
 H_{ij,n} =\frac{\log n}{n}\lambda_{i,n}'Q_n^{-1}\lambda_{j,n}
            +O(n^{-1}), \qquad  K_{ij,n} =-\frac{1}{\omega n}\lambda_{i,n}'Q_n^{-1}\lambda_{j,n}
            +O(n^{-2}),
\end{equation*}
whereas $C_{ij,n}=\lambda_{i,n}'\lambda_{j,n}$ for $i\ne j$.
Thus both leading terms depend on a different inner product from the
correlation itself, and their zeros need not agree with correlation zeros.

For LS, write $S_n=\LogLS(C_n)=\log(D_nC_nD_n)$, where the positive
diagonal matrix $D_n$ satisfies $D_nC_nD_n\one_n=\one_n$.
If its largest-to-smallest diagonal
ratio is uniformly bounded, then
$\max_{i\ne j}|S_{ij,n}|=O(\log(1+\mu_n)/\mu_n)$ as well. This extra scaling
condition does not follow from \Cref{ass:sparse-factor}.
The equicorrelation example at the beginning of this section
illustrates these contrasting limits explicitly.

The fact that all GFT coordinates vanish does not itself preclude support
recovery, because coordinates associated with zero correlations could
vanish faster than those associated with nonzero correlations. However,
the factor conditions do not guarantee this separation. The counterexample
in Supplementary \Cref{ex:sparse-ah-failure} has a zero correlation whose
GFT coordinate has magnitude of order $(\log n)/n$, whereas a fixed nonzero
correlation has a coordinate of order $n^{-1}$. The zero-correlation
coordinate therefore vanishes more slowly, preventing a common magnitude
threshold from separating the two pairs, even after common rescaling.

The same example establishes failure of threshold separation for InvC,
while Supplementary \Cref{ex:sparse-ls-failure} establishes it for LS
within the factor class.
\Cref{prop:sparse-ah-shrinkage,prop:sparse-ah-leading,prop:sparse-ls-regular}
give the GFT and LS formulas and their conditions. Thus GFT, InvC, and LS
do not share BCP's general coordinate-thresholding guarantee.
All four maps remain
invertible and preserve block-diagonal zeros; the comparison concerns
entrywise marginal support, not loss of information or the exclusion of
special subclasses.

\section{Application to high-dimensional equity returns}
\label{sec:stock-residual-sparsity}

We compare BCP, GFT, LS, and InvC using a high-dimensional panel of
idiosyncratic daily stock returns. We choose GFT, LS, and InvC as benchmarks
because, like BCP, they are permutation-equivariant, so the comparison does
not depend on the ordering of the stocks. We first remove the Fama--French
five factors (FF5) and momentum, then standardize each stock's residuals by
its fitted conditional volatility. Our focus is on how the transformed
coordinates represent the approximate sparsity pattern and entry
magnitudes of the resulting shock correlation matrix.

\subsection{Data, factor adjustment, and volatility standardization}

We use daily equity price returns from CRSP for the period from
January 3, 2011 through December 31, 2025. The six predictors are the
market excess return, SMB, HML, RMW, and CMA
from FF5 \citep{Fama_French_2015}, together with momentum. We retain all securities with finite returns on all $T=3{,}772$ trading
dates. Of 8,670 candidates, 1,765 form a balanced panel and 6,905 are
excluded for incomplete returns. The requirement of a complete 15-year
history selects continuously observed securities, to which the empirical
findings pertain. For each security $i$, we estimate an ordinary least-squares
regression over the full period,
\begin{equation*}
 r_{it}=\mu_i+\vartheta_i' f_t+u_{it},\qquad
 f_t=(\mathrm{MKT}_t,\mathrm{SMB}_t,\mathrm{HML}_t,
       \mathrm{RMW}_t,\mathrm{CMA}_t,\mathrm{Mom}_t)'.
\end{equation*}
For each stock, we fit a zero-mean GJR-GARCH(1,1) model to the OLS
residuals \citep{GlostenJagannathanRunkle:1993}. Writing $u_{it}$ and $h_{it}$ for the fitted
residuals and conditional variances, we form standardized shocks
$z_{it}=u_{it}/\sqrt{h_{it}}$ and compute their sample
Pearson correlation matrix $\hat C$. This baseline is positive definite
and uses no shrinkage or entrywise thresholding. We compute the BCP matrix
$\hat B=B(\hat C)$ and the GFT, LS, and InvC coordinates from the
same $\hat C$. Supplementary \Cref{app:empirical-implementation}
gives estimation and reconstruction details.

\subsection{Visual comparison of transformed coordinates}

In \Cref{fig:stock-residual-heatmaps}, stocks are ordered by their final
available eight-digit GICS subindustry code within the estimation
period. Every panel uses the same fixed ordering; strips and boundaries
identify the broader GICS sectors. The standardized-shock correlation matrix
contains broad regions of weak dependence alongside localized positive
blocks and negative bands. BCP retains much of this structure, including
the signs and relative magnitudes of the entries. GFT appears
substantially paler on the shared scale. The visual pattern of LS is very
similar to GFT and is omitted from this figure. InvC displays a different
pattern of linear partial association, with prominent negative entries in
several positively correlated blocks.

\begin{figure}[htp]
 \centering
 \includegraphics[width=\textwidth]{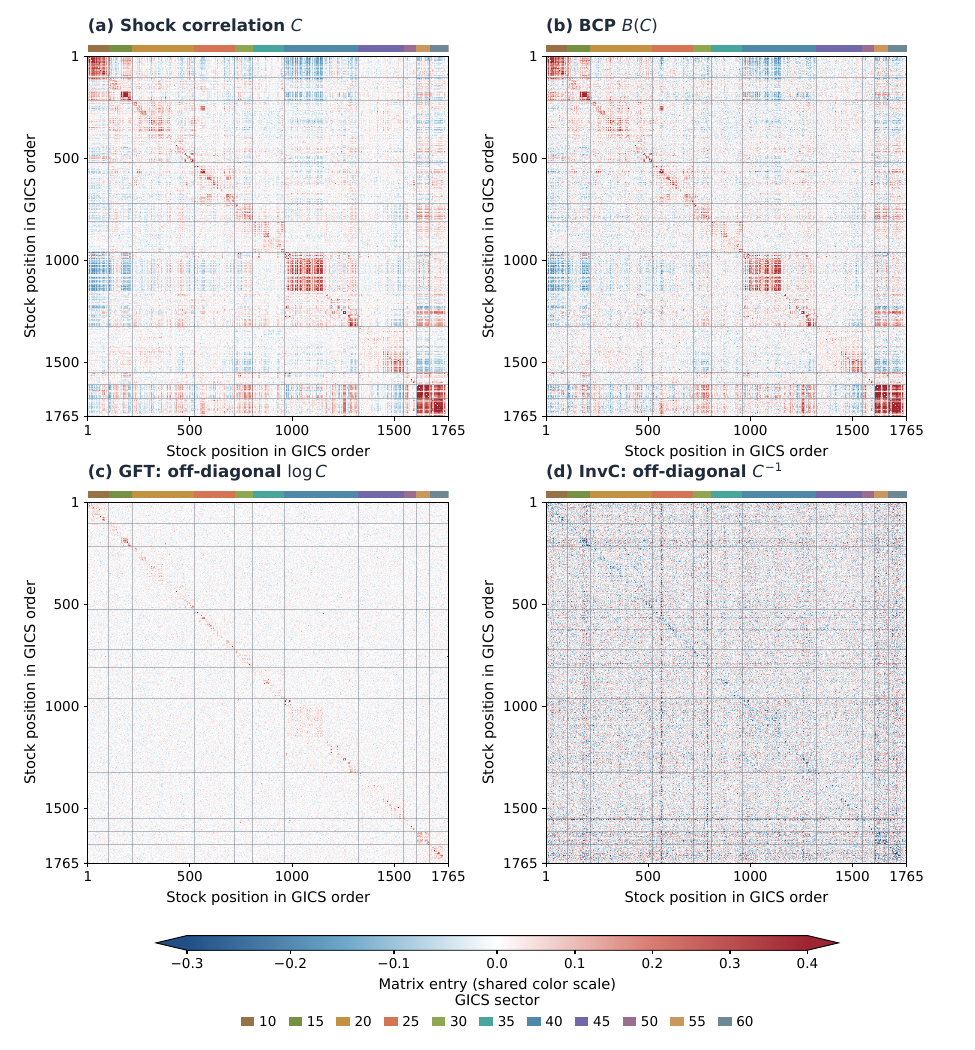}
 \caption{Standardized-shock correlations and transformed coordinates for
 1,765 equities, 2011--2025. Each stock's raw daily price return is
 regressed on an intercept, FF5, and momentum over 3,772 common dates;
 its residuals are then standardized using a fitted GJR-GARCH(1,1) model.
 Panels show the shock correlation matrix, BCP, GFT, and InvC in the
 same GICS order. The shared color scale is $[-0.3,0.4]$, centered at
 white for zero, with saturated tails. All four panel diagonals are masked
 only for display. No matrix entries are thresholded. The visual pattern
 of LS, omitted here, is very similar to GFT.}
 \label{fig:stock-residual-heatmaps}
\end{figure}

\Cref{fig:stock-smoothed-coordinates} complements the heatmaps by
summarizing transformed coordinates conditional on standardized-shock correlations.
We estimate the median and the 5th and 95th percentiles using Gaussian
weights with bandwidth 0.01, truncated at three bandwidths. The shaded
band represents the middle 90\% of coordinates, showing their dispersion
at each correlation value. Over the displayed range, the BCP conditional median closely follows the
45-degree reference, tracking both the sign and magnitude of shock
correlations much more closely than the other medians.
The GFT and LS medians are much flatter and closer to the zero-coordinate
line, particularly over the negative-correlation range, where they remain
nearly constant even as correlations move farther from zero. InvC has
a relatively flat, downward-sloping median and a wider band: its
coordinates encode scaled negative partial correlations, rather than
marginal correlations. Small coordinates under these alternatives can
therefore coexist with material marginal correlations.

\begin{figure}[htbp]
 \centering
 \setlength{\abovecaptionskip}{4pt}
 \includegraphics[width=\textwidth]{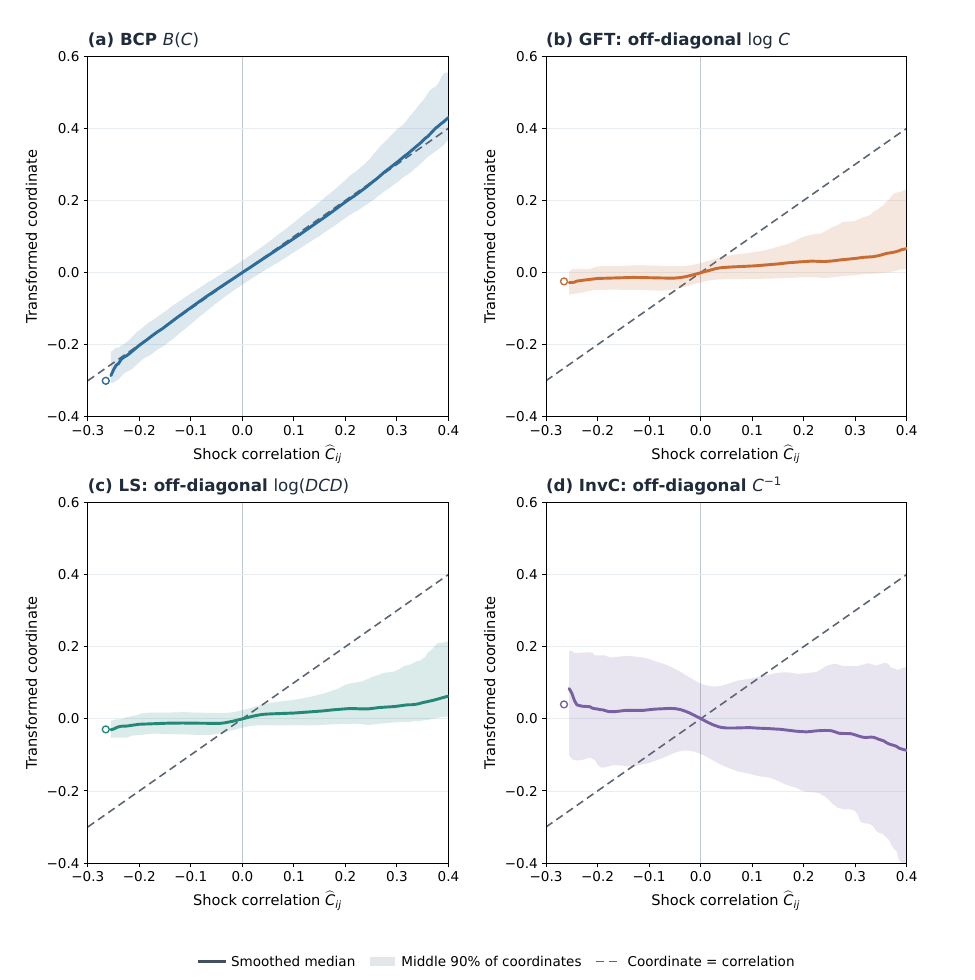}
 \caption{Transformed coordinates conditional on standardized-shock correlations,
 2011--2025. The sample correlation matrix $\hat C$ has dimension
 $1{,}765\times1{,}765$, giving $1{,}556{,}730$ unique off-diagonal pairs $i<j$.
 The horizontal axis shows the shock correlations $\hat C_{ij}$, and the
 vertical axis shows the corresponding off-diagonal coordinates under BCP,
 GFT, LS, and InvC in panels (a)--(d), respectively. Solid curves are
 Gaussian-kernel weighted medians; shading spans the corresponding
 5th--95th percentiles of coordinates, not confidence intervals.
 The dashed line is the 45-degree reference. Hollow markers show
 original bin medians in sparsely populated tails. The displayed range
 excludes 3,097 pairs (0.20\%), all with correlations above 0.4.}
 \label{fig:stock-smoothed-coordinates}
\end{figure}

\subsection{Numerical comparison of sparse reconstructions}
\label{sec:stock-sparse-reconstructions}

We next examine whether this close correspondence translates into accurate
correlation reconstructions when a common set of coordinates is set to zero.
We select this set using two-sided
5\% tests of the original sample correlations. Standard errors use their
estimated influence functions, accounting for sample centering and scaling
of the fitted shocks, with the Bartlett HAC estimator of \citet{NeweyWest:1987} and nine lags.
The OLS and volatility filters are treated as fixed in these tests.
The tests are pointwise, without a multiplicity adjustment; an insignificant
estimate does not establish a population zero. For each insignificant
$\hat C_{ij}$, we set the corresponding coordinate to zero in all four
transformations and recover a correlation matrix through each inverse map.
For BCP, we set
$\tilde C_{\mathrm{BCP}}=C(\tilde B)$, where the sparse
coordinate matrix is defined by
\begin{equation*}
 \tilde B_{ij}
  =\hat B_{ij}\,
 \indic\!\left\{
 |\hat C_{ij}|>
 z_{0.975}\times {\operatorname{se}}(\hat C_{ij})
 \right\},\qquad i\ne j,
\end{equation*}
where $z_{0.975}$ is the standard normal 97.5th percentile. The same
correlation-based indicator selects the retained GFT, LS, and InvC
coordinates. All four methods therefore set the same 67.89\% of coordinates
to zero. The comparison isolates how these common zero restrictions
translate through the inverse maps, rather than selecting a separate
specification for each method.

Let $\hat{C}_{0.05}$ denote the reference obtained by setting insignificant
entries of $\hat C$ directly to zero. Although $\hat C$ is positive
definite, direct thresholding need not preserve this property
\citep{BickelLevina:2008,GuillotRajaratnam:2012}. Here it produces an indefinite matrix, with
smallest eigenvalue $-0.6356$. In contrast, each of the four inverse maps
yields a positive-definite correlation matrix after coordinate thresholding.
The relevant question is therefore how closely these valid reconstructions
preserve the original correlations and approximate the desired sparse
pattern. Exact zeros in coordinates need not yield exact correlation
zeros, and $\hat{C}_{0.05}$ serves only as an entrywise reference.

\begin{table}[htbp]
\centering
\caption{Correlation reconstruction under common coordinate-zero restrictions.}
\label{tab:stock-common-c-mask}
\begingroup\small
\setlength{\tabcolsep}{3pt}
\begin{tabular*}{\textwidth}{@{\extracolsep{\fill}}p{.38\textwidth}ccccc@{}}
\toprule
Statistic & Original & BCP & GFT & LS & InvC \\
\midrule
Retained coordinates $k$ & 1,556,730 & 499,883 & 499,883 & 499,883 & 499,883 \\
Coordinates set to zero &  & 67.89\% & 67.89\% & 67.89\% & 67.89\% \\
Pearson correlation with $\hat C$ & 1.0000 & 0.9747 & 0.6985 & 0.6688 & 0.4758 \\
Pearson correlation with $\hat{C}_{0.05}$ & 0.9624 & 0.9883 & 0.6532 & 0.6275 & 0.4218 \\
Off-diagonal RMSE versus $\hat C$ & 0.0000 & 0.0125 & 0.1978 & 0.1856 & 0.6126 \\
Off-diagonal RMSE versus $\hat{C}_{0.05}$ & 0.0150 & 0.0087 & 0.2017 & 0.1892 & 0.6164 \\
Gaussian log-likelihood $(\times 10^3)$ & -7,520 & -8,332 & -8,740 & -9,075 & -20,619 \\
BIC$^{*}\ (\times 10^3)$ & 27,859 & \textbf{20,781} & 21,596 & 22,268 & 45,356 \\
\bottomrule
\end{tabular*}
\endgroup
\par\smallskip\begin{minipage}{\textwidth}\footnotesize
\emph{Notes:} Original denotes the sample correlation $\hat C$ of the GJR-GARCH standardized shocks for 2011--2025. The other columns report reconstructions under the common 5\% restrictions. Pearson correlations and RMSE use all unique off-diagonal pairs, including zeros; $\hat{C}_{0.05}$ is the directly thresholded reference.
$^{*}$BIC is the descriptive plug-in score $-2\ell+k\log T$, with $T=3{,}772$; the sparse estimates are not maximum-likelihood refits. Log-likelihood and BIC are in thousands; the lowest BIC is bold.
\end{minipage}
\end{table}

\Needspace{2\baselineskip}
\Cref{tab:stock-common-c-mask} shows that BCP reconstructs the original
correlations and approximates the thresholded reference most closely.
Among the four sparse reconstructions, it has the highest off-diagonal
Pearson correlation and the lowest RMSE relative to both references.
Pearson correlation measures agreement in the cross-sectional pattern,
whereas RMSE also captures differences in entry magnitudes. BCP is closer
to $\hat{C}_{0.05}$ in RMSE than the original $\hat C$ is, so the
restrictions move the estimate toward the desired sparse pattern while
preserving positive definiteness. GFT and LS depart more from both
references, and InvC departs the most. Thus, although all four inverse
maps yield valid correlation matrices, identical coordinate zeros have
markedly different effects on marginal correlations. BCP's closer
reconstruction is consistent with the closer correspondence in
\Cref{fig:stock-smoothed-coordinates}.

The Gaussian working likelihood also favors BCP among the sparse
reconstructions. We compute descriptive plug-in BIC scores, counting
retained off-diagonal coordinates as parameters. All four sparse estimates
have the same parameter count and hence the same BIC penalty, so BCP's
higher working likelihood gives it the lowest BIC. The original sample
correlation fits best but has a higher BIC than BCP because of its larger
parameter count. These scores are descriptive: the sparse estimates are
not refitted by maximum likelihood, and the criterion does not adjust for
selection of the zero restrictions or serial dependence.

\section{Conclusion}

BCP is a global real-analytic parametrization of positive-definite
correlation matrices by unrestricted coordinates. Its central advantage is
that both the transformation and its inverse are available in closed form
and require no iterative scaling or diagonal completion. Balancing makes
the complete transformed matrix hollow. The rigidity theorem
shows that, for dimensions three and higher, direct hollowness and unit
local normalization uniquely determine BCP within the class of primary
spectral transformations of balanced representatives.

The coordinates combine marginal and partial correlations, respect
variable symmetries, preserve block structure, and express normalized
correlation inversion as sign reversal. Reduced block formulas and
simplified expressions for arrowhead, Gaussian-tree, moving-average,
autoregressive, and factor-model correlation matrices facilitate
evaluation for familiar dependence structures. Explicit derivatives
support estimation and inference.

A further contribution is asymptotic preservation of marginal-correlation
sparsity. Under diverging strength of the weakest factor direction and
diagonal residual covariance with variances bounded away from zero, BCP
coordinates uniformly approximate positively rescaled correlations. This yields support recovery
when the smallest nonzero signals dominate approximation and estimation
errors, a general coordinate-thresholding guarantee not shared by GFT, LS,
or InvC under the same factor assumptions. With a bounded number of factors
and heterogeneous loadings, zeroing coordinates at population correlation
zeros also yields uniformly accurate reconstruction at the same asymptotic
rate. The equity-return illustration complements these results, showing
that BCP retains correlation magnitudes and supports accurate
positive-definite reconstructions.

These results motivate further work on dynamic correlation models and
sparse estimation in BCP coordinates. The unrestricted parametrization and
explicit inverse allow correlation dynamics and regularization to be
specified while maintaining positive definiteness. Developing estimation
procedures and studying their statistical properties are natural next
steps, including data-driven threshold selection and more general
reconstruction guarantees.

\section*{Use of AI tools}

During the preparation of this work, the authors used generative AI tools to
assist with manuscript review, copy-editing, and numerical checks of
mathematical derivations. The authors reviewed and revised the AI-assisted
material and take full responsibility for the final manuscript.

\bibliographystyle{apalike2}
\bibliography{prh}

\clearpage
\newcommand{\SupplementCounter}[2]{%
  \setcounter{#1}{0}%
  \expandafter\def\csname the#1\endcsname{#2\arabic{#1}}%
  \expandafter\def\csname theH#1\endcsname{#2\arabic{#1}}}
\newcommand{\SupplementCounterS}[1]{\SupplementCounter{#1}{S}}
\setcounter{page}{1}
\renewcommand{\thepage}{S\arabic{page}}
\SupplementCounter{section}{S}
\SupplementCounter{equation}{S.}
\SupplementCounter{table}{S}
\SupplementCounter{figure}{S}
\SupplementCounter{footnote}{S}
\renewcommand{\thefootnote}{\arabic{footnote}}
\forcsvlist{\SupplementCounterS}{theorem,proposition,corollary,lemma,assumption,definition,example,remark}
\begingroup
\setstretch{1}
\null\vskip 2em
\begin{center}
{\LARGE Supplement to\\[0.5ex]
\textbf{A Balanced Parametrization of Correlation Matrices}\par}
\vskip 1.5em
{\large\lineskip .5em
\begin{tabular}[t]{c}\BCPauthor\end{tabular}\par}
\vskip 1em
{\large\BCPdate}
\end{center}
\par\vskip 1.5em
\noindent
This supplement provides proofs, extensions, and computational details
for the main paper. Notation follows \Cref{sec:background}, with additional
conventions in \Cref{app:notation}. References without an S prefix refer to the main paper's sections, results,
equations, figures, and tables; the S prefix identifies material in this
supplement.
\par
\endgroup

\section{Notation and preliminaries}\label{app:notation}

We retain the notation of \Cref{sec:background}. For square matrices,
$\diag(X_1,\ldots,X_k)$ denotes block-diagonal assembly. The Frobenius
inner product is $\ip{X}{Y}_F=\tr(X'Y)$, and adjoints are taken with
respect to this inner product.

The open elliptope $\CorP{n}$ has tangent space $\Hol{n}$ at every point.
Correlation normalization identifies matrices that differ by positive
diagonal rescaling, giving
$\CorP{n}\simeq\Symp{n}/\operatorname{Diag}^{+}(n)$, where
$\operatorname{Diag}^{+}(n)$ is the group of positive diagonal matrices
acting by $\Sigma\mapsto\Delta\Sigma\Delta$
\citep{DavidGu:2019,Thanwerdas:2024}.

\section{Proofs for the main results}\label{app:proofs}

This section proves the construction, structural, differential, and
structured-matrix results in the main paper. The proofs follow the order of
the corresponding results. The local expansions are given in
\Cref{app:expansions}.

\Needspace{5\baselineskip}
\subsection{Balancing and spectral characterization}
\label{app:balancing-characterization}

\begin{proof}[Proof of \Cref{thm:balanced-section}]
Let $\Delta=\diag(\delta_1,\ldots,\delta_n)$ and $A=\Delta C\Delta$.  Since
$C_{ii}=1$,
$A_{ii}=\delta_i^2$ and $(A^{-1})_{ii}=\delta_i^{-2}(C^{-1})_{ii}$.
Condition \eqref{eq:balanced-condition} is therefore equivalent to
$\delta_i^4=(C^{-1})_{ii}$ for $i=1,\ldots,n$.
Moreover,
\[
 \sum_{i=1}^n
 \left[\log A_{ii}-\log(A^{-1})_{ii}\right]^2
 =\sum_{i=1}^n
 \left[4\log\delta_i-\log\{(C^{-1})_{ii}\}\right]^2,
\]
so the same equations give the unique minimizer of the primal--dual
discrepancy displayed above.
The corresponding general componentwise scaling appears in the proof of
Lemma~2.4 of \citet[pp.~41--42]{Fiedler:1964}; here we use only its
correlation-matrix specialization.  Positivity selects the unique solution
$\delta_i=\{(C^{-1})_{ii}\}^{1/4}$, which proves \eqref{eq:AC}; direct
substitution gives \eqref{eq:balanced-diagonal}.  A signed permutation only
permutes the diagonal entries of $C^{-1}$, because its sign factors cancel on
the diagonal, which proves the first identity in
\eqref{eq:balanced-inversion}.

Finally,
$\cinv(C)=D^{-1/2}C^{-1}D^{-1/2}$ and $\cinv(C)^{-1}=D^{1/2}CD^{1/2}$.
The latter matrix has diagonal $D$.  Applying \eqref{eq:AC} to
$\cinv(C)$ yields
$A_{\cinv(C)}=D^{-1/4}C^{-1}D^{-1/4}=A_C^{-1}$,
and proves \eqref{eq:balanced-inversion}.
\end{proof}

\begin{proof}[Proof of \Cref{thm:rigidity-balanced-spectral}]
Let $m=n-1$ and $P=\one_n\one_n'/n$.  For $t\in\mathbb R$, define
$B_t=-mtP+t(I_n-P)=tI_n-ntP$.
The matrix $B_t$ is hollow.  Its eigenvalue on the span of $\one_n$ is
$-mt$, and its eigenvalue on the orthogonal complement is $t$, with
multiplicity $m$.  Set
$A_t=B_t+\sqrt{I_n+B_t^2}$.
Then $A_t\succ0$ and
$A_t^{-1}=\sqrt{I_n+B_t^2}-B_t$.  Consequently,
$\Diag(A_t)-\Diag(A_t^{-1})=2\Diag(B_t)=0$,
so $A_t\in\Bal{n}$.

Let $r(t)=t+\sqrt{1+t^2}$ and $f(t)=g\{r(t)\}$.  The spectral projectors of
$B_t$ give
$g(A_t)=f(-mt)P+f(t)(I_n-P)$.
Every diagonal entry of this matrix equals
$\{f(-mt)+mf(t)\}/n$.  Assumption \eqref{eq:rigidity-hollow} therefore implies
\begin{equation}\label{eq:rigidity-dilation-one}
 f(-mt)+mf(t)=0,
 \qquad t\in\mathbb R.
\end{equation}
In particular, $f(0)=0$.  Replacing $t$ by $-mt$ in
\eqref{eq:rigidity-dilation-one} and applying that identity once more gives
$f(m^2t)=m^2f(t)$.
For $t\ne0$ and every positive integer $k$,
$\frac{f(t)}{t} =\frac{f(t/m^{2k})}{t/m^{2k}}$.
Because $n\ge3$, we have $m>1$, so the argument on the right converges to
zero.  The function $r$ satisfies $r(0)=1$ and $r'(0)=1$; differentiability
of $g$ at $1$ therefore gives $f'(0)=g'(1)$.  Letting $k\to\infty$ yields
$f(t)=g'(1)t$ for every $t$, including $t=0$.

Finally, $r$ is a bijection from $\mathbb R$ onto $(0,\infty)$ with
$r^{-1}(a)=(a-a^{-1})/2$.  Hence
$g(a)=f\{r^{-1}(a)\} =\frac{g'(1)}2(a-a^{-1})$,
which proves the theorem.
\end{proof}

\begin{remark}[Scope of the characterization]
\label{rem:postcomposition-family}
For $\kappa>0$, the scalar map
$\phi_\kappa(x)=\sinh\{\kappa\arsinh(x)\}/\kappa$ is an odd analytic
diffeomorphism with $\phi_\kappa'(0)=1$. Applied entrywise to $B(C)$,
it preserves signed-permutation equivariance, inverse consistency, and
block patterns, and has an explicit inverse. For $\kappa\ne1$ it lies
outside the primary spectral class in \Cref{thm:rigidity-balanced-spectral}.
Thus that theorem does not assert uniqueness among all parametrizations
with these operational properties.

Within the affine class $\varphi(A)=\alpha A+\beta A^{-1}+\gamma I_n$,
the identity $\varphi(A^{-1})=-\varphi(A)$ for every SPD matrix $A$ forces
$\beta=-\alpha$ and $\gamma=0$. Unit local normalization then gives
$\alpha-\beta=1$, so $\varphi(A)=(A-A^{-1})/2$. The rigidity theorem does not
assume this affine form.
\end{remark}

\subsection{Proof of the global diffeomorphism}
\label{app:global-diffeo-proof}

\begin{proof}[Proof of \Cref{thm:global-diffeo}]
For $C\in\CorP{n}$, its balanced representative $A=A_C$ satisfies
$\Cor(A)=C$ and $\Diag(A)=\Diag(A^{-1})$. Hence
$B=(A-A^{-1})/2$ is symmetric and hollow, and
\[
 I_n+B^2=\tfrac14(A+A^{-1})^2,
 \qquad B+\sqrt{I_n+B^2}=A,
\]
because $A+A^{-1}\succ0$. Correlation normalization gives $C\{B(C)\}=C$.

Conversely, for $B\in\Hol{n}$, set $H=\sqrt{I_n+B^2}$ and $A=H+B$.
Spectral calculus gives $A\succ0$ and $A^{-1}=H-B$, so
$\Diag(A)=\Diag(A^{-1})=:E$. For $C=E^{-1/2}AE^{-1/2}$,
\[
 C^{-1}=E^{1/2}A^{-1}E^{1/2},\qquad
 D=E^2,\qquad A_C=D^{1/4}CD^{1/4}=A.
\]
Thus $B\{C(B)\}=(A-A^{-1})/2=B$. The same identities establish the
balanced-to-hollow bijection.

Inversion, positive diagonal powers, the principal square root on the
positive-definite cone, and correlation normalization are real analytic
\citep{Higham:2008}. Both maps are compositions of these operations on
the stated domains, completing the proof.
\end{proof}
\subsection{Structural properties}

\begin{proof}[Proof of \Cref{thm:signed-equivariance}]
Since $(QCQ')^{-1}=QC^{-1}Q'$, the diagonal of the inverse is permuted but unaffected by sign changes.  Consequently
$\Diag\{(QCQ')^{-1}\}^{1/4}=QD^{1/4}Q'$,
where the equality is understood as equality of diagonal matrices after the sign factors cancel.  It follows that
$A_{QCQ'}=QA_CQ'$.
The first identity in \eqref{eq:BFT-equivariance} follows immediately.

For the inverse, orthogonal functional calculus gives
$\sqrt{I_n+(QBQ')^2}=Q\sqrt{I_n+B^2}Q'$.
Correlation normalization commutes with signed permutations, yielding the second identity in \eqref{eq:BFT-equivariance}.
\end{proof}

\begin{proof}[Proof of \Cref{thm:inverse-consistency}]
Let $K=C^{-1}$ and $D=\Diag(K)$.  Then
$\cinv(C)=D^{-1/2}KD^{-1/2}$ and $\cinv(C)^{-1}=D^{1/2}CD^{1/2}$.
Because $\Diag(C)=I_n$, we have $\Diag(\cinv(C)^{-1})=D$.  Therefore the balanced representative of $\cinv(C)$ is
$A_{\cinv(C)} =D^{1/4}\cinv(C)D^{1/4} =D^{-1/4}KD^{-1/4} =A_C^{-1}$.
Applying $h(A)=(A-A^{-1})/2$ proves the first identity in
\eqref{eq:inverse-consistency}; the second follows from bijectivity.
\end{proof}

InvC does not share this property. For
$C=\tfrac12I_3+\tfrac12\one_3\one_3'$, the off-diagonal entries of
$C^{-1}$ are $-1/2$, whereas those of $\cinv(C)^{-1}$ are $3/4$, not $1/2$.

\begin{proof}[Proof of \Cref{prop:partial-correlation}]
Since
$\cinv(C)=D^{-1/2}C^{-1}D^{-1/2}$,
we may rewrite \eqref{eq:BFT-main-definition} as
$B(C) =\frac12D^{1/4}\{C-\cinv(C)\}D^{1/4}$.
The off-diagonal entries of $\cinv(C)$ are $-\pi_{ij}$ and
$\sqrt{\tau_i\tau_j}=(K_{ii}K_{jj})^{-1/4}$, which gives
\eqref{eq:B-R2}.
\end{proof}

\begin{proof}[Proof of \Cref{prop:blocks}]
All operations defining $D$, $A_C$, $h(A)$, the principal square root, and $\Cor$ preserve block diagonal structure.  The fixed-point statement follows from \Cref{thm:signed-equivariance} and bijectivity.
\end{proof}

\begin{proof}[Proof of \Cref{prop:kronecker}]
Write $D_k=\Diag(C_k^{-1})$ and $A_k=A_{C_k}$.  Since
$(C_1\otimes C_2)^{-1}=C_1^{-1}\otimes C_2^{-1}$,
$\Diag\{(C_1\otimes C_2)^{-1}\}=D_1\otimes D_2$.
Positive diagonal powers commute with Kronecker products, and therefore
\[
 A_{C_1\otimes C_2}
 =(D_1^{1/4}C_1D_1^{1/4})
  \otimes(D_2^{1/4}C_2D_2^{1/4})
 =A_1\otimes A_2,
\]
which proves the first identity in \eqref{eq:BFT-kronecker}.  Moreover,
\[
 S_k=\frac12(A_k-A_k^{-1}),
 \qquad
 H_k=\frac12(A_k+A_k^{-1})=\sqrt{I_{n_k}+S_k^2}.
\]
Hence $A_k=H_k+S_k$ and $A_k^{-1}=H_k-S_k$.  Expanding
$\frac12\{A_1\otimes A_2-A_1^{-1}\otimes A_2^{-1}\}$
gives $S_1\otimes H_2+H_1\otimes S_2$, proving the second identity in
\eqref{eq:BFT-kronecker}.
\end{proof}

\begin{proof}[Proof of \Cref{prop:properness}]
Let $\mathcal H\subset\Hol{n}$ be compact.  Since $B\mapsto C(B)$ is continuous, $C(\mathcal H)=\{C(B):B\in \mathcal H\}$ is compact and is contained in $\CorP{n}$.  Because $C\mapsto B(C)$ and $B\mapsto C(B)$ are inverse homeomorphisms, $C(\mathcal H)$ is exactly the inverse image of $\mathcal H$ under $C\mapsto B(C)$, proving properness.

For \eqref{eq:properness-bound}, let $A=A_C$ have eigenvalues $a_1,\ldots,a_n$ and set $E=\Diag(A)$.  Since $B=\tfrac12(A-A^{-1})$ is a primary matrix function of $A$, its eigenvalues are $\sinh(\log a_i)$, so
\[
 \norm{B}_2
 =\sinh\left(\log\kappa\right),
 \qquad
 \kappa:=\max_i\max\left(a_i,a_i^{-1}\right).
\]
For the lower bound, every diagonal entry of $A$ lies between $\lambda_{\min}(A)$ and $\lambda_{\max}(A)$, so for a unit vector $x$, writing $y=E^{-1/2}x$,
\[
 x'Cx
 =y'Ay
 \geq\lambda_{\min}(A)\norm y^2
 \geq\frac{\lambda_{\min}(A)}{\lambda_{\max}(A)}.
\]
Hence $\lambda_{\max}(A)/\lambda_{\min}(A)\geq1/\ell$, and since $\kappa^2\geq\lambda_{\max}(A)/\lambda_{\min}(A)$ we get $\kappa\geq\ell^{-1/2}$; monotonicity of $\sinh$ gives the left inequality, and solving it for $\ell$ gives \eqref{eq:lambda-min-bound}.
For the upper bound, $(C^{-1})_{ii}=(1-R_i^2)^{-1}\geq1$ gives $D\succeq I_n$, so $x'Ax=(D^{1/4}x)'C(D^{1/4}x)\geq\ell\norm x^2$ and $\lambda_{\min}(A)\geq\ell$.  Also $\lambda_{\max}(A)\leq\lambda_{\max}(C)\max_i(C^{-1})_{ii}^{1/2}\leq n\ell^{-1/2}$, using $\lambda_{\max}(C)\leq\tr C=n$ and $(C^{-1})_{ii}\leq1/\ell$.  Because $\ell\leq1$, both $n\ell^{-1/2}$ and $1/\ell$ are at most $n/\ell$, so $\kappa\leq n/\ell$ and the right inequality follows.
Finally, if \eqref{eq:boundary-divergence} failed, a bounded subsequence of
transformed matrices would have a convergent subsequence.  Its image under
$B\mapsto C(B)$ would converge to a positive-definite correlation matrix,
contradicting $\lambda_{\min}(C_k)\to0$.  The result also follows immediately
from the left inequality in \eqref{eq:properness-bound}.
\end{proof}

\subsection{Differentials and inference}

\begin{proof}[Proof of \Cref{prop:differential-forward}]
The inverse differential is $\dd K[X]=-KXK$, so
$\dd D[X]=-\Diag(KXK)$.
Because $D$ and $\Delta$ are diagonal,
$\Delta^{-1}\dd\Delta[X] =\frac14D^{-1}\dd D[X] =U_X$.
Differentiating $A=\Delta C\Delta$ therefore gives \eqref{eq:ZX-forward}.  Finally,
$\dd(A^{-1})[Z]=-A^{-1}ZA^{-1}$, which gives \eqref{eq:dPsiBFT-compact}.  At $C=I_n$, $K=D=\Delta=A=I_n$ and $\diag(X)=0$, hence $U_X=0$ and the result follows.
\end{proof}

\begin{proof}[Proof of \Cref{prop:differential-inverse}]
The derivative of a symmetric matrix function acts on the eigenbasis by the Hadamard product with the matrix of first divided differences \citep[Theorem V.3.3]{Bhatia:1997}, so it suffices to identify that matrix.  Let $h(a)=(a-a^{-1})/2$, so $B=h(A)$ and $A=f(B)$ with $f=h^{-1}$.  For $a_i,a_j>0$,
$h^{[1]}(a_i,a_j) =\frac12\left(1+\frac1{a_ia_j}\right)$.
The divided difference of the inverse function is therefore
\[
 f^{[1]}(\lambda_i,\lambda_j)
 =\{h^{[1]}(a_i,a_j)\}^{-1}
 =\frac{2a_ia_j}{1+a_ia_j},
\]
including the repeated-eigenvalue case by continuity.  This proves \eqref{eq:ZY-inverse}.  Differentiating $C=RAR$ with $R=\diag\{\diag(A)^{-1/2}\}$ gives \eqref{eq:dPsiBFTinv-explicit}.  Since $a_ia_j>0$, the bound $0<\Omega_{ij}<2$ is immediate.
\end{proof}

\begin{proof}[Proof of \Cref{prop:jacobian-determinant}]
Let $A=B+\sqrt{I_n+B^2}$ and
$e=\diag(A)$. The derivative of this spectral lift on $\Sym{n}$ is
\[
 \mathcal L(Y)=Q\{\Omega\Had(Q'YQ)\}Q',
 \qquad
 \mathcal L^{-1}(Z)=\tfrac12(Z+A^{-1}ZA^{-1}).
\]
In an eigenbasis of $A$, the diagonal and symmetric off-diagonal
basis elements have multipliers $\Omega_{ii}$ and $\Omega_{ij}$,
respectively. Hence
$\det\mathcal L=\prod_{i\leq j}\Omega_{ij}>0$.

Define
\[
 \mathcal K_A=\{DA+AD:D\text{ diagonal}\},\qquad
 \mathcal M(Z)=Z-\tfrac12(V_ZA+AV_Z),\quad
 V_Z=\diag\{\diag(Z)\oslash e\}.
\]
Since $\diag(DA+AD)=2De$ and every $e_i>0$,
$\Sym{n}=\Hol{n}\oplus\mathcal K_A$ and $\mathcal M$ is the projection
onto $\Hol{n}$ along $\mathcal K_A$. Let $\mathcal N(Z)=Z-\mathcal M(Z)$
be the complementary projection. For $Z=DA+AD$, balancing gives
\[
 \diag\{\mathcal L^{-1}(Z)\}
 =D\{e+\diag(A^{-1})\}=2De=\diag(Z).
\]
The projection $\mathcal N$ is determined solely by the diagonal of
its argument, so $\mathcal N\mathcal L^{-1}(Z)=Z$ on $\mathcal K_A$.
Thus the $\mathcal K_A$ diagonal block of $\mathcal L^{-1}$, in this
direct-sum decomposition, is the identity. The complementary-minor
identity for an invertible block matrix therefore yields
\[
 \det(\mathcal M\mathcal L|_{\Hol{n}})
 =\det\mathcal L\,
   \det(\mathcal N\mathcal L^{-1}|_{\mathcal K_A})
 =\det\mathcal L.
\]
By \Cref{prop:differential-inverse}, the inverse derivative is
$Y\mapsto R\mathcal M\mathcal L(Y)R$, where
$R=\diag(e^{-1/2})$. On hollow matrices, the final scaling multiplies
coordinate $(i,j)$ by $(e_ie_j)^{-1/2}$. Its determinant is
\[
 \prod_{i<j}(e_ie_j)^{-1/2}
 =\prod_i e_i^{-(n-1)/2}.
\]
Combining the two determinants proves \eqref{eq:jacobian-determinant}
in $\vecl$ coordinates. The forward determinant follows from the chain
rule for the inverse maps.
\end{proof}

\begin{proof}[Proof of \Cref{thm:delta}]
The transformation is differentiable by \Cref{thm:global-diffeo}.  Applying
the multivariate delta method to $C\mapsto\vecl\{B(C)\}$ gives
\eqref{eq:B-CLT}.  Equation \eqref{eq:derivative-identity} gives $J_{I_n}=I_d$,
which proves the result.
\end{proof}

\subsection{Block-canonical preliminaries}
\label{app:block-preliminaries}

The fixed-point characterization and exact-preservation proof below apply
to every partition with $n_k\ge1$. A singleton block has no off-diagonal
entry or contrast space.

Let $\mathfrak G_{\mathcal P}$ be the group of block-diagonal permutation
matrices $\diag(P_1,\ldots,P_G)$, where $P_k$ independently permutes the
$n_k$ variables in block $k$.

\begin{lemma}[Block symmetry as a fixed-point condition]
\label{lem:block-fixed-point}
For a symmetric matrix $X$, the following are equivalent:
\begin{enumerate}[label=(\roman*),leftmargin=2em]
 \item $PXP'=X$ for every $P\in\mathfrak G_{\mathcal P}$;
 \item every diagonal block has a common diagonal and, when $n_k\ge2$,
 a common off-diagonal entry, while every off-diagonal block has constant entries.
\end{enumerate}
Consequently, $\CorBlock{n}$ and $\HolBlock{n}$ are the fixed-point sets of
$\mathfrak G_{\mathcal P}$ in $\CorP{n}$ and $\Hol{n}$, respectively.
\end{lemma}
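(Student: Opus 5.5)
The plan is to prove the equivalence entrywise, using the simplest generators of $\mathfrak G_{\mathcal P}$, namely the within-block transpositions. For (ii)$\Rightarrow$(i), I will show invariance under every element $P=\diag(P_1,\ldots,P_G)$ directly. Such a $P$ acts by $(PXP')_{[k,l]}=P_kX_{[k,l]}P_l'$. Under (ii) each block has one of two forms:
\[
X_{[k,k]}=\alpha_kI_{n_k}+\beta_k(\one_{n_k}\one_{n_k}'-I_{n_k}),
\qquad
X_{[k,l]}=\gamma_{kl}\one_{n_k}\one_{n_l}'.
\]
Since $P_k\one_{n_k}=\one_{n_k}$ and $P_kP_k'=I_{n_k}$, both forms are fixed. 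For a singleton block only the $\alpha_k$ term is present, and the identity is trivially fixed.

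For (i)$\Rightarrow$(ii), I will apply invariance only to a transposition $P$ that swaps two indices $s\neq t$ inside a single block $k$ and fixes every other index. From $(PXP')_{ij}=X_{\sigma(i)\sigma(j)}$ I read off the following.
\begin{itemize}
\item Taking $i=j=s$ gives $X_{ss}=X_{tt}$, so the diagonal is common within the block.
\item Taking $i=s$ and $j=u$, with $u$ in another block $l\neq k$, gives $X_{su}=X_{tu}$. Combining with the analogous transpositions inside block $l$ shows that $X_{[k,l]}$ is constant.
\item When $n_k\ge 3$, the same move with $j=u$ a third index of block $k$ gives $X_{su}=X_{tu}$. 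Chaining such relations together with symmetry makes all off-diagonal entries of $X_{[k,k]}$ equal.
\item When $n_k=2$ there is only one off-diagonal pair, so the common value holds by symmetry alone.
\end{itemize}
The main care is index bookkeeping, in particular checking that every within-block off-diagonal pair $(s,u)$ can be connected to any other pair $(s',u')$ by transpositions. This holds because the symmetric group acts transitively on unordered pairs.

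For the consequence, I intersect with $\CorP{n}$ and $\Hol{n}$. Unit diagonal or zero diagonal forces the common diagonal value to be $1$ or $0$. The remaining structure is then exactly the constant-entry pattern that defines $\CorBlock{n}$ and $\HolBlock{n}$. In the $\CorBlock{n}$ case, positive definiteness is simply inherited from $\CorP{n}$.
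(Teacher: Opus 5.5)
Your proposal is correct and follows essentially the same route as the paper: you verify invariance of the block form directly, and for the converse you use that independent within-block permutations act transitively on the diagonal positions, on the within-block off-diagonal positions, and on the positions of each off-diagonal block, which you make explicit through transpositions. The final step, intersecting the fixed-point space with $\CorP{n}$ and $\Hol{n}$ to obtain the two block classes, is also the same as in the paper.
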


For the reduced calculations, the following canonical formulas are written
for $n_k\ge2$; the singleton modifications are given in
\Cref{app:block-singletons}. Set
\[
 v_k=\frac{1}{\sqrt{n_k}}\one_{n_k},
 \qquad
 P_{[k,l]}=v_kv_l',
 \qquad
 P_{[k,k]}^\perp=I_{n_k}-P_{[k,k]}.
\]
Choose an orthogonal matrix $Q_{\mathcal P}$ whose first $G$ columns are the
embedded vectors $v_1,\ldots,v_G$ and whose remaining columns are orthonormal
within-block contrasts.  The canonical representation of
\citet{ArchakovHansen:CanonicalBlockMatrix} writes a block-symmetric matrix as
\begin{equation*}
 \begin{aligned}
 X_{[k,l]}&=a_{kl}P_{[k,l]},&&k\ne l,\\
 X_{[k,k]}&=a_{kk}P_{[k,k]}+\lambda_kP_{[k,k]}^\perp,
 &&k=1,\ldots,G,
 \end{aligned}
\end{equation*}
and hence
\begin{equation*}
 Q_{\mathcal P}'XQ_{\mathcal P}
 =R\oplus\bigoplus_{k=1}^{G}
 \lambda_kI_{n_k-1},
 \qquad R=(a_{kl})_{k,l=1}^G.
\end{equation*}
Whenever a matrix function $f$ is defined,
\begin{equation}\label{eq:AH-functional-calculus}
 f(X)=Q_{\mathcal P}
 \left\{f(R)\oplus\bigoplus_{k=1}^{G}
 f(\lambda_k)I_{n_k-1}\right\}Q_{\mathcal P}'.
\end{equation}
Thus inversion, powers, logarithms, exponentials, and the principal square
root preserve the block pattern.  These general facts are used only to derive
the BCP-specific formulas in \Cref{thm:block-forward,thm:block-inverse}.

\begin{proof}[Proof of \Cref{lem:block-fixed-point}]
If $X$ has the stated block form, independent within-block permutations
leave it unchanged. Conversely, these permutations act transitively on the
diagonal positions of each block and, when $n_k\ge2$, on its off-diagonal
positions. Invariance therefore makes the entries constant within each
of these sets. Independent permutations in blocks $k$ and $l$ act
transitively on the positions of $X_{[k,l]}$, so that block is constant
as well. This includes singleton blocks, for which the corresponding
permutation group is trivial. Intersecting this fixed-point space with
$\CorP{n}$ or $\Hol{n}$ gives the stated block classes.
\end{proof}

\begin{proof}[Proof of \Cref{thm:block-restriction}]
By \Cref{thm:signed-equivariance}, both $C\mapsto B(C)$ and $B\mapsto C(B)$ commute with
congruence by every $P\in\mathfrak G_{\mathcal P}$.  Hence each map sends the
fixed-point set of $\mathfrak G_{\mathcal P}$ into the corresponding fixed-point
set.  The characterization in \Cref{lem:block-fixed-point} gives the two block
classes.  Since the global transformation and its inverse are mutually inverse
analytic diffeomorphisms,
their restrictions are also mutually inverse analytic diffeomorphisms.
The argument imposes no lower bound on block size beyond $n_k\ge1$.
\end{proof}

\begin{proof}[Proof of \Cref{thm:block-forward}]
The canonical representation gives
\[
 Q_{\mathcal P}'C^{-1}Q_{\mathcal P}
 =R^{-1}\oplus\bigoplus_{k=1}^{G}
 \lambda_k^{-1}I_{n_k-1}.
\]
The diagonal of
$(R^{-1})_{kk}P_{[k,k]}+\lambda_k^{-1}P_{[k,k]}^\perp$
is constant and equal to $\delta_k^4$ in \eqref{eq:block-dk}.  This proves the
asserted blockwise form of $D$.

Because $D^{1/4}$ is constant within each block, its canonical action is
\[
 Q_{\mathcal P}'D^{1/4}Q_{\mathcal P}
 =\Delta\oplus\bigoplus_{k=1}^{G}
 \delta_kI_{n_k-1}.
\]
It follows that the balanced representative $A_C=D^{1/4}CD^{1/4}$ has
canonical form
\[
 \Delta R\Delta\oplus\bigoplus_{k=1}^{G}
 \delta_k^2\lambda_kI_{n_k-1},
\]
and its inverse has the corresponding form with $\Delta$ and $\delta_k$
inverted and $R$ and $\lambda_k$ replaced by their inverses.  Taking half
their difference gives the two quantities in
\eqref{eq:block-forward-components}.

The diagonal entries of the $k$th diagonal block equal
$\frac{M_{kk}}{n_k} +\left(1-\frac1{n_k}\right)\zeta_k$.
They vanish because $B$ is hollow.  Reconstructing the blocks from the
projectors gives \eqref{eq:block-eta-forward}.
\end{proof}

\begin{proof}[Proof of \Cref{thm:block-inverse}]
By the functional-calculus identity \eqref{eq:AH-functional-calculus},
\[
 B+\sqrt{I_n+B^2}
 =Q_{\mathcal P}
 \left\{F\oplus\bigoplus_{k=1}^{G}
 a_kI_{n_k-1}\right\}
 Q_{\mathcal P}'.
\]
Its diagonal entries in block $k$ equal $e_k$ in \eqref{eq:block-ek}.
Correlation normalization multiplies block $(k,l)$ by
$e_k^{-1/2}e_l^{-1/2}$.  Expanding the projectors in the canonical
representation yields \eqref{eq:block-inverse-correlations}.  The diagonal
entries are one by construction.
\end{proof}

\begin{proof}[Proof of \Cref{cor:block-complexity}]
The forward formulas in \Cref{thm:block-forward} require the inverse of the
$G\times G$ core $R$ and then only diagonal and scalar operations.  The inverse
formulas in \Cref{thm:block-inverse} require one symmetric $G\times G$
eigendecomposition, used to evaluate $F=e^{\arsinh H}$ on the core $H$ of
\eqref{eq:block-H-zeta-input}, and then only scalar operations.  The unrestricted block
parameters consist of one value for each unordered pair of blocks and one
within-block parameter for every nonsingleton block, giving dimension
$G(G-1)/2+|\{k:n_k>1\}|$. For singleton blocks, the rules in
\Cref{app:block-singletons} omit the absent contrast terms and retain the
same $G\times G$ core calculations and computational cost.
\end{proof}

\subsection{Equicorrelation formulas}\label{app:equicorrelation}

\begin{proof}[Proof of Proposition~\ref{prop:equicorrelation} and
identity~\eqref{eq:AH-BFT-equicorr-link}]
For
$C=(1-\rho)I_n+\rho\one\one'$,
the inverse is
$C^{-1} =\frac1{1-\rho}I_n -\frac{\rho}{(1-\rho)\{1+(n-1)\rho\}}\one\one'$.
Hence
$D=\delta^4I_n$ with $\delta^4=\frac{1+(n-2)\rho}{(1-\rho)\{1+(n-1)\rho\}}$.
Both $A_C=\delta^2C$ and $A_C^{-1}=\delta^{-2}C^{-1}$ have constant diagonal and constant off-diagonal entries.  Their off-diagonal half-difference is
\[
 b
 =\frac12\left[
 \rho\delta^2
 +\frac{\rho}{(1-\rho)\{1+(n-1)\rho\}\delta^2}
 \right],
\]
which simplifies to \eqref{eq:BFT-equicorr}.

Now $B=b(\one\one'-I_n)$ has eigenvalue $(n-1)b$ in the span of $\one$ and eigenvalue $-b$ on its orthogonal complement.  Therefore
$A=e^{\arsinh B}$
has corresponding eigenvalues
$a_1=e^{\arsinh((n-1)b)}$ and $a_2=e^{-\arsinh(b)}$.
Since $C=\Cor(A)$ and $A$ has a constant diagonal, the ratio of the two distinct eigenvalues of $C$ equals $a_1/a_2$.  But the same ratio is
$\frac{1+(n-1)\rho}{1-\rho}$.
Taking logarithms gives
$\log\frac{1+(n-1)\rho}{1-\rho} =\arsinh((n-1)b)+\arsinh(b)$,
which proves \eqref{eq:AH-BFT-equicorr-link}.

Finally, a matrix with eigenvalues $a_1$ and $a_2$ in the two invariant subspaces has correlation coefficient
$\rho=\frac{a_1-a_2}{a_1+(n-1)a_2}$,
which gives \eqref{eq:rho-from-b}.  Since the global map is a diffeomorphism and preserves the one-dimensional equicorrelation class, $b_n$ is a smooth bijection onto $\R$; monotonicity also follows directly from the displayed inverse.
\end{proof}
\subsection{BCP for selected structured correlation matrices}

\begin{proof}[Proof of \Cref{prop:star-coordinate}]
The Schur complement of \eqref{eq:star-correlation-matrix} gives
\[
 (C^\star)^{-1}=\omega^{-2}
 \begin{pmatrix}1&-\rho'\\-\rho&\omega^2I_r+\rho\rho'\end{pmatrix},
 \qquad D_\star=\omega^{-2}\diag(1,R_\rho^4).
\]
Substitution into \eqref{eq:BFT-main-definition} gives the factor-to-leaf
entries $\eta_\rho$ in \eqref{eq:BFT-star-block-form}. For distinct
leaves $j,k$, $C^\star_{j+1,k+1}=0$ and the precision contribution is
$-\rho_j\rho_k/(2\omega h_j^{1/4}h_k^{1/4})
=-v_{\rho,j}v_{\rho,k}/(2\omega)$.
The diagonal is zero by balancing, proving the lower block and
\eqref{eq:BFT-star-factor-elements}.
\end{proof}

\begin{proof}[Proof of \Cref{prop:tree-correlation}]
Choose an arbitrary root and orient every edge away from it.  A standardized
Gaussian tree can be generated recursively as
$X_j=\theta_{ij}X_i+\sqrt{1-\theta_{ij}^2}\,\varepsilon_j$
when $i$ is the parent of $j$, with independent standard-normal innovations.
Iterated conditioning gives the path-product correlation in
\eqref{eq:tree-correlation}. Expanding the negative Gaussian log density,
up to an additive constant,
\[
 \frac12X_{\mathrm{root}}^2
 +\frac12\sum_{j\ne\mathrm{root}}
 \frac{(X_j-\theta_{p(j)j}X_{p(j)})^2}
 {1-\theta_{p(j)j}^2}
\]
gives the off-diagonal and diagonal precision entries in
\eqref{eq:tree-precision}, independently of the chosen root.
Substituting these precision entries and the path-product correlations
into \eqref{eq:BFT-main-definition} proves
\eqref{eq:BFT-tree-correlation}.
\end{proof}

\begin{proof}[Proof of \Cref{prop:ma-correlation}]
Set $\theta_k=0$ for $k<0$ or $k>q$ and let
$L=(\theta_{i-j})_{i,j=1}^n$. With
$\varepsilon=(\varepsilon_1,\ldots,\varepsilon_n)'$ and
$\varepsilon_-=(\varepsilon_0,\varepsilon_{-1},\ldots,
\varepsilon_{1-q})'$, the observation vector satisfies
\[
 x=L\varepsilon+H\varepsilon_-.
\]
The innovation vectors have identity covariance matrices and are
uncorrelated with one another. Since each $x_t$ has variance $s$,
\[
 C=s^{-1}(LL'+HH').
\]
The matrix $L$ is unit lower triangular, so this representation proves
positive definiteness for every finite $n$, irrespective of the roots
of the MA polynomial. The convolution identity
\[
 \sum_{\ell=0}^{\min(q,k)}\theta_\ell\psi_{k-\ell}
 =\indic\{k=0\}
\]
implied by \eqref{eq:MAq-filter-recursion} gives $F=L^{-1}$.
Applying the Woodbury identity yields
\begin{align*}
 C^{-1}
 &=sF'\left(I_n+FHH'F'\right)^{-1}F\\
 &=s\left\{F'F-F'FH
       \left(I_q+H'F'FH\right)^{-1}H'F'F\right\}\\
 &=s\left(W-WHM^{-1}H'W\right).
\end{align*}
Here $M$ is positive definite because $W=F'F$ is positive definite.
Substituting this precision matrix and $C_{ij}=\rho_{|i-j|}$ into
\eqref{eq:BFT-main-definition} gives \eqref{eq:BFT-MAq}; its specialization
outside the correlation band follows from $\rho_h=0$ for $h>q$.
\end{proof}

\begin{proof}[Proof of \Cref{prop:ar-correlation}]
Let $x=(x_1,\ldots,x_n)'$ and let $T$ be the unit lower-triangular
matrix with first $p$ rows $E_p$ and remaining rows
$r_{p+1}',\ldots,r_n'$. Then
$Tx=(x_1,\ldots,x_p,\varepsilon_{p+1},\ldots,\varepsilon_n)'$.
Causality and the white-noise assumption imply that the first $p$
components are uncorrelated with the remaining innovations, which
are mutually uncorrelated. Hence
\[
 TCT'=\begin{pmatrix}C_p&0\\0&\sigma^2I_{n-p}\end{pmatrix},
 \qquad
 K=T'\begin{pmatrix}C_p^{-1}&0\\0&\sigma^{-2}I_{n-p}\end{pmatrix}T.
\]
Expanding the second identity gives \eqref{eq:ARp-precision}, without
any distributional assumption beyond the stated second moments.
The initial-block term is supported on the first $p$ coordinates,
and each $r_s$ is supported on $\{s-p,\ldots,s\}$.
Hence $K_{ij}=0$ for $|i-j|>p$.
Substitution in \eqref{eq:BFT-main-definition} proves
\eqref{eq:BFT-ARp}.

\Needspace{3\baselineskip}
For completeness, the precision entries have simple scalar expressions
away from the boundaries. Put
\[
 a=1+\sum_{\ell=1}^p\phi_\ell^2,
 \qquad
 g_h=
 \begin{cases}
 -\phi_h+\displaystyle\sum_{\ell=1}^{p-h}
                 \phi_\ell\phi_{\ell+h},&1\le h\le p,\\[4pt]
 0,&h>p.
 \end{cases}
\]
For $p+1\le i<j\le n-p$, all terms contributing to the interior
precision entries are present in the sum in \eqref{eq:ARp-precision},
while the initial-block term is zero. Collecting their coefficients
gives $K_{ii}=K_{jj}=a/\sigma^2$ and $K_{ij}=g_{j-i}/\sigma^2$.
Consequently,
\begin{equation}\label{eq:ARp-interior-BFT}
 [B(C)]_{ij}
 =\frac{a\rho_h-g_h}{2\sqrt{a\sigma^2}},
 \qquad h=j-i,\quad p+1\le i<j\le n-p.
\end{equation}
Thus the interior entries depend only on the lag, although the full
transformed matrix need not be Toeplitz.

For $p=1$, $C_p=(1)$, $\phi_1=\rho$, and $\sigma^2=1-\rho^2$.
The precision formula gives the endpoint and interior diagonal entries
stated in \Cref{subsec:ar-correlations}, the adjacent entries
$K_{i,i+1}=-\rho/(1-\rho^2)$, and zeros elsewhere off the diagonal.
These entries give \eqref{eq:BFT-AR1}, equivalently obtained from
\Cref{prop:tree-correlation} on a homogeneous path.
\end{proof}

\begin{proof}[Proof of \Cref{prop:factor-correlation}]
The Woodbury identity applied to \eqref{eq:factor-correlation} gives
$C^{-1} =Q^{-1}-Q^{-1}\Lambda M^{-1}\Lambda'Q^{-1}$.
Its $i$th diagonal entry is exactly $\delta_i^4$ in
\eqref{eq:factor-M-delta}.  For $i\ne j$,
\[
 C_{ij}=\lambda_i'\lambda_j,
 \qquad
 (C^{-1})_{ij}
 =-\frac{\lambda_i'M^{-1}\lambda_j}{q_iq_j}.
\]
Substitution into \eqref{eq:BFT-main-definition} proves
\eqref{eq:BFT-factor-correlation}.  The definitions in
\eqref{eq:factor-dual-loadings} give
\[
 u_i'u_j=\delta_i\delta_j\lambda_i'\lambda_j,
 \qquad
 v_i'v_j
 =\frac{\lambda_i'M^{-1}\lambda_j}
 {q_iq_j\delta_i\delta_j},
\]
which proves \eqref{eq:BFT-factor-Gram}.  When $r=1$, $M=\kappa$ and direct
substitution yields the one-factor formula.
\end{proof}

\section{Adjoint derivative and likelihood scores}
\label{app:adjoint-geometry}

The inverse derivative in \Cref{prop:differential-inverse} gives objective
gradients without forming a full Jacobian. For symmetric $G_C$, set
\[
 \tilde G=RG_CR,\qquad
 H_G=\tilde G-\diag\{\diag(A\tilde G)\oslash e\}.
\]
Its Frobenius adjoint is
\begin{equation}\label{eq:VJP-BFT}
 (\dd_B C)^*[G_C]
 =\Off\left[Q\{\Omega\Had(Q'H_GQ)\}Q'\right].
\end{equation}
If $\dd\ell=\tr(G_C\dd C)$ and $s=\vecl(B)$, then
$\partial\ell/\partial s=2\vecl\{(\dd_B C)^*[G_C]\}$; the factor two
accounts for the two symmetric entries associated with each coordinate.

\begin{proof}
For a symmetric direction $Z$, put $V=\diag\{\diag(Z)\oslash e\}$.
The normalization differential is
$\dd_A\Cor[Z]=R\{Z-(VA+AV)/2\}R$. Therefore
\[
 \ip{G_C}{\dd_A\Cor[Z]}_F
 =\ip{\tilde G}{Z}_F
 -\sum_i\frac{(A\tilde G)_{ii}}{e_i}Z_{ii}
 =\ip{H_G}{Z}_F.
\]
The spectral derivative $Z\mapsto Q\{\Omega\Had(Q'ZQ)\}Q'$ is
self-adjoint. Composing these adjoints and projecting onto the hollow
input space gives \eqref{eq:VJP-BFT}.
\end{proof}

\section{Relations to the generalized Fisher transformation}
\label{app:AH-relations}

This section complements \Cref{sec:AH} with the exact global transition,
a commuting-class identity, and the proof of the local comparison.

\subsection{Exact transition and diagonal commutation}

Writing $A=B+\sqrt{I_n+B^2}=e^{\arsinh B}$ and $E=\Diag(A)$, the
transition from BCP to GFT is the analytic diffeomorphism
\[
 \mathcal T(B)=\Off\log(E^{-1/2}AE^{-1/2}).
\]
A diagonal congruence generally does not commute with the matrix
logarithm. The following condition gives an exact simplification.

\begin{proposition}[Exact link under diagonal commutation]
\label{prop:commuting-link}
If $D=\Diag(C^{-1})$ commutes with $C$, then
\begin{equation}\label{eq:commuting-link}
 \Off(\log C)=\Off\{\arsinh B(C)\}.
\end{equation}
This condition is equivalent to $C_{ij}=0$ whenever
$(C^{-1})_{ii}\ne(C^{-1})_{jj}$, so the class consists of direct sums
of blocks with scalar precision diagonals. It includes equicorrelation
and symmetric circulant correlation matrices.
\end{proposition}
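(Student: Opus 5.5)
The plan rests on one observation. If $D=\Diag(C^{-1})$ commutes with $C$, then the diagonal scaling $D^{1/4}$ commutes with $C$. Hence $A_C=D^{1/4}CD^{1/4}=D^{1/2}C$ is a product of commuting SPD matrices.

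\textbf{Step 1: the commutation identity.} First, $D$ commutes with $C$ if and only if $D^{1/4}$ does. One direction holds because $D^{1/4}$ is a polynomial in $D$ (a diagonal matrix with positive entries). The other holds because $D$ is a power of $D^{1/4}$. Entrywise, $(DC-CD)_{ij}=(d_i-d_j)C_{ij}$ with $d_i=(C^{-1})_{ii}$. So commutation is exactly the condition $C_{ij}=0$ whenever $d_i\neq d_j$, which is the stated characterization.

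\textbf{Step 2: the block description.} Group the indices by the distinct values of $d_i$. After a permutation, which is harmless by \Cref{thm:signed-equivariance}, $C$ becomes a direct sum of blocks. Within each block the precision diagonal is constant: because $C$ is block diagonal, $C^{-1}$ is block diagonal with inverse blocks, so the scalar $d$ on each block is that block's own precision diagonal. Conversely, if $C$ is a direct sum of blocks, each with a scalar precision diagonal, then $D$ is scalar on each block and so commutes with $C$.

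\textbf{Step 3: the main identity.} Since $D$ and $C$ commute, they are simultaneously diagonalizable, and every function of $D$ commutes with every function of $C$. Therefore
\begin{equation*}
\log A_C=\log(D^{1/2}C)=\tfrac12\log D+\log C .
\end{equation*}
By \Cref{thm:global-diffeo}, $B(C)=\sinh(\log A_C)$, so $\arsinh B(C)=\log A_C$ as primary matrix functions. This uses that $\sinh$ is a bijection of $\R$, applied eigenvalue-wise. Because $\tfrac12\log D$ is diagonal, applying $\Off$ gives $\Off\{\arsinh B(C)\}=\Off(\log C)$.

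\textbf{Step 4: the examples.} For equicorrelation, $C^{-1}$ has constant diagonal, so $D$ is scalar and trivially commutes with $C$. For a symmetric circulant correlation matrix, $C^{-1}$ is again circulant and hence has constant diagonal, giving the same conclusion.

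The only point needing care is the functional calculus in Step 3, namely $\log(XY)=\log X+\log Y$ for commuting SPD matrices $X$ and $Y$. This follows from a common orthonormal eigenbasis, so I expect no real obstacle.
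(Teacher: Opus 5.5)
Your proposal is correct and follows essentially the same route as the paper: commutation gives $A_C=D^{1/2}C$ with commuting factors, so $\arsinh B(C)=\log A_C=\log C+\tfrac12\log D$, and the diagonal term disappears under $\Off$. You additionally work out the entrywise characterization $(d_i-d_j)C_{ij}=0$, the block description, and the equicorrelation and circulant examples, which the paper's proof asserts without separate argument.
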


\begin{proof}
The condition is equivalent to $E=\Diag(A_C)=D^{1/2}$ commuting with
$A_C$. Then $C=E^{-1}A_C$ and
$\log C=\log A_C-\log E=\arsinh B(C)-\log E$.
Taking off-diagonal entries proves \eqref{eq:commuting-link}.
\end{proof}

\subsection{Local expansions}\label{app:expansions}

This section proves \Cref{thm:local-comparison}.  Throughout, the dimension
$n$ is fixed, so constants in the remainder bounds may depend on $n$.  Let
$C=I_n+E$ with $E\in\Hol{n}$,
and use a fixed submultiplicative matrix norm, as in the theorem.
Write
$G_2=\Diag(E^2)$ and $G_3=\Diag(E^3)$.

The inverse expansion is
$C^{-1}=I_n-E+E^2-E^3+\cO(\norm E^4)$.
Therefore
$D=I_n+G_2-G_3+\cO(\norm E^4)$.
Because $G_2$ and $G_3$ are respectively second and third order,
\begin{align*}
 D^{1/4}
 &=I_n+\frac14G_2-\frac14G_3+\cO(\norm E^4),
 \\
 D^{-1/4}
 &=I_n-\frac14G_2+\frac14G_3+\cO(\norm E^4).
\end{align*}
Multiplying out gives
\begin{align*}
 A_C
 &=I_n+E+\frac12G_2-\frac12G_3
   +\frac14(G_2E+EG_2)+\cO(\norm E^4),
 \\
 A_C^{-1}
 &=I_n-E+E^2-E^3-\frac12G_2+\frac12G_3
   +\frac14(G_2E+EG_2)+\cO(\norm E^4).
\end{align*}
Taking half the difference,
\begin{align*}
 B(I_n+E)
 &=E-\frac12E^2+\frac12E^3+\frac12G_2-\frac12G_3
   +\cO(\norm E^4)\\
 &=E-\frac12\Off(E^2)+\frac12\Off(E^3)
   +\cO(\norm E^4).
\end{align*}

The matrix logarithm has expansion
$\log(I_n+E)=E-\frac12E^2+\frac13E^3+\cO(\norm E^4)$.
Applying $\Off$ gives
\begin{equation*}
 \Off\{\log(I_n+E)\}=E-\frac12\Off(E^2)+\frac13\Off(E^3)
 +\cO(\norm E^4).
\end{equation*}
Subtracting proves \eqref{eq:BFT-OL-difference} and
\Cref{thm:local-comparison}.

Finally, let $B=B(I_n+E)$ and $H=\Off\{\log(I_n+E)\}$.  Since
$E=B+\cO(\norm B^2)$, \eqref{eq:BFT-OL-difference} implies
\begin{equation*}
 \mathcal T(B)=H=B-\frac16\Off(B^3)+\cO(\norm B^4).
\end{equation*}
For $E_T=\hat C_T-I_n=\cO_p(T^{-1/2})$, the cubic difference is
$\cO_p(T^{-3/2})$ and the fourth-order remainder is $\cO_p(T^{-2})$.
This gives the local asymptotic equivalence stated in \Cref{sec:AH}.
The maps are analytic, so differentiating their expansions also shows
that their Jacobians differ at order $\norm E^2$.

\section{Block formulas with singleton groups}
\label{app:block-singletons}

For a partition with singleton groups, the formulas in
\Cref{thm:block-forward,thm:block-inverse} apply to every nonsingleton
block without change. A singleton has no within-block correlation or
contrast parameter. For each $k$ with $n_k=1$, use
\[
 R_{kk}=1,\qquad \delta_k^4=(R^{-1})_{kk},\qquad
 H_{kk}=0,\qquad e_k=F_{kk}.
\]
Omit $\lambda_k$, $\zeta_k$, $a_k$, and the formula for $\rho_{kk}$
for that group. The between-block formulas remain
\[
 \eta_{kl}=\frac{M_{kl}}{\sqrt{n_kn_l}},\qquad
 \rho_{kl}=\frac{F_{kl}}{\sqrt{n_kn_le_ke_l}},\quad k\ne l.
\]
These rules follow from \eqref{eq:AH-functional-calculus} by omitting
the absent contrast spaces. The number of free block coordinates is
$G(G-1)/2+|\{k:n_k>1\}|$, and only $G\times G$ matrix functions are needed.

\section{Computational implementation}\label{app:algorithms}

\paragraph{Forward and inverse maps.}
When applying the forward transformation to a singular sample correlation
matrix, a positive-definite estimator must first be obtained.
For the forward map, obtain $K=C^{-1}$ from a Cholesky factorization,
set $\delta_i=K_{ii}^{1/4}$, and evaluate
$B_{ij}=\tfrac12\{\delta_i\delta_jC_{ij}-K_{ij}/(\delta_i\delta_j)\}$.
Set the diagonal to zero to remove roundoff. For the inverse, compute
$B=Q\diag(\lambda_i)Q'$, evaluate $a_i=\exp\{\arsinh(\lambda_i)\}$,
and return $\Cor\{Q\diag(a_i)Q'\}$. This avoids cancellation in
$\lambda_i+\sqrt{1+\lambda_i^2}$ for large negative $\lambda_i$.

\paragraph{Derivative products.}
Reuse the inverse eigendecomposition in
\Cref{prop:differential-inverse} and \eqref{eq:VJP-BFT}. With
$u_i=\arsinh(\lambda_i)$, evaluate the divided differences as
\[
 \Omega_{ij}=2\operatorname{expit}(u_i+u_j),
 \qquad \operatorname{expit}(x)=(1+e^{-x})^{-1},
\]
using an overflow-safe logistic routine. The derivative and adjoint then
require matrix products and diagonal operations, without constructing
a $d\times d$ Jacobian. Map-only evaluations do not form $\Omega$.

\paragraph{LS scaling.}
For the LS forward map, solve $Cx=x^{-1}$ with $x>0$ and evaluate
$\log\{\diag(x)C\diag(x)\}$. This positive diagonal scaling to unit row sums
exists uniquely \citep{MarshallOlkin:1968,JohnsonReams:2009}.
The damped Newton step solves
$\{C+\diag(x^{-2})\}s=-(Cx-x^{-1})$, starting from $x=\one_n$ and
restricting the step to preserve positivity
\citep[Section~3.5]{Thanwerdas:2024}.

\paragraph{InvC inversion.}
The diagonal-completion formulation follows
\citet[Section~4.2]{AmendolaZwiernik:2021}; its interpretation as unrestricted
correlation coordinates is established by
\citet[Corollary~5.8]{Zwiernik:2025}.
For hollow precision coordinates $W$, set $K(v)=W+\diag(v)$ and
$C=K(v)^{-1}$. Recover the missing diagonal by minimizing
$\phi(v)=\one_n'v-\log\det K(v)$ over $K(v)\succ0$.
Its gradient is $\one_n-\diag(C)$ and its Hessian is $C\Had C\succ0$,
so the objective is strictly convex with first-order condition
$\diag(C)=\one_n$. With
$C^{(k)}=K(v^{(k)})^{-1}$, the Newton step solves
\[
 (C^{(k)}\Had C^{(k)})s^{(k)}
 =\diag(C^{(k)})-\one_n.
\]
Initialize $v_i^{(0)}=1+\sum_{j\ne i}|W_{ij}|$, which ensures positive
definiteness by strict diagonal dominance. Update
$v^{(k+1)}=v^{(k)}+\alpha_ks^{(k)}$, backtracking from
$\alpha_k=1$ to maintain positive definiteness and satisfy Armijo decrease
of $\phi$. Return $C^{(k)}$ when
$\|\diag(C^{(k)})-\one_n\|_\infty$ is below the chosen tolerance.

\section{Sparsity preservation: proofs and comparisons}
\label{app:sparsity-proofs}

This appendix supplies the proofs for Section~\ref{sec:high-dimensional-sparsity}
of the main paper, followed by calculations and counterexamples for GFT,
LS, and inverse-correlation (InvC) coordinates. The notation and factor assumptions are those of
\Cref{ass:sparse-factor}. No additional conditions are imposed unless stated.

\subsection{Proofs of the BCP results}

\begin{proof}[Proof of \Cref{thm:sparse-bft}]
Suppress the subscript $n$ and set
$M=I_{r_n}+\Lambda'\Omega^{-1}\Lambda$, so that
$M\succeq(1+\mu_n)I_{r_n}$.
Woodbury's identity gives
\begin{equation}\label{eq:sparse-woodbury}
 K=\Omega^{-1}-\Omega^{-1}\Lambda M^{-1}\Lambda'\Omega^{-1}.
\end{equation}
The loading bound and \eqref{eq:sparse-factor-bounds} imply
\begin{equation*}
 |K_{ij}|\leq h_n\quad(i\ne j),\qquad
 0\leq\omega_i^{-1}-K_{ii}\leq h_n.
\end{equation*}
For a correlation matrix, a Schur complement gives $K_{ii}\geq1$.
Equation \eqref{eq:sparse-woodbury} also gives
$K_{ii}\leq\omega_i^{-1}\leq a^{-1}$.
Since the derivative of $x^{1/4}$ is at most $1/4$ on $[1,\infty)$,
\begin{equation*}
 |\delta_i-\omega_i^{-1/4}|\leq h_n/4,\qquad
 |\delta_i\delta_j-(\omega_i\omega_j)^{-1/4}|
 \leq a^{-1/4}h_n/2.
\end{equation*}
Substitute these inequalities into \eqref{eq:sparse-bft-exact}, using
$|C_{ij}|\leq1$ and $\delta_i\delta_j\geq1$.
This proves \eqref{eq:sparse-bft-expansion}--\eqref{eq:sparse-bft-error}.
When $C_{ij}=0$, only the precision term remains, proving
\eqref{eq:sparse-null-bound}.
\end{proof}

\begin{proof}[Proof of \Cref{cor:sparse-support}]
Null coordinates satisfy \eqref{eq:sparse-null-bound}. On the support,
\Cref{thm:sparse-bft} gives
$|B_{ij,n}|\geq m_n/2-K_a h_n$, and the perturbation is smaller than
the positively scaled signal under the displayed conditions. The bounds
hold simultaneously over all pairs, proving support and sign recovery.
\end{proof}

\begin{proof}[Proof of \Cref{cor:sparse-estimated-support}]
Write $E_n=\max_{i\ne j}|\hat B_{ij,n}-c_nB_{ij,n}|$.
At null pairs, $|\hat B_{ij,n}|\leq c_nh_n/2+E_n=o_p(\tau_n)$.
At nonnull pairs,
$|\hat B_{ij,n}|\geq c_n(m_n/2-K_ah_n)-E_n$.
Since $c_n^{-1}=O_p(1)$, dividing the threshold and both error terms
by $c_nm_n$ gives quantities converging to zero in probability.
The lower bound therefore exceeds $\tau_n$ simultaneously over the
support with probability tending to one. The same bounds make the total
perturbation $o_p(c_nm_n)$, proving sign recovery because $c_n>0$.
\end{proof}

\begin{proof}[Proof of \Cref{prop:sparse-reconstruction}]
Suppress $n$, write $\mu=\mu_n$ and $r=r_n$, and use
$\norm{H}_{\max}=\max_{i,j}|H_{ij}|$.
Set $K=C^{-1}$, $\Delta=\diag(K_{ii}^{1/4})$, and
$W=\Omega^{-1}-K$. By \eqref{eq:sparse-woodbury},
$W\succeq0$, $\rank(W)\leq r$, $\norm{W}_2\leq a^{-1}$,
and $\max_i W_{ii}\leq h_n$.
Define
\begin{equation*}
 U=\frac{\Delta\Lambda}{\sqrt2},\qquad
 R=\frac12\Delta^{-1}W\Delta^{-1},\qquad
 D_0=\diag\!\left(\frac{q_i-q_i^{-1}}2\right),\qquad
 q_i=\omega_i\sqrt{K_{ii}}.
\end{equation*}
Since $1\leq K_{ii}\leq\omega_i^{-1}$, we have $a\leq q_i\leq1$,
so $D_0$ is uniformly bounded. Direct substitution gives
$B=D_0+UU'+R$. The loading and residual-variance bounds imply
\begin{equation*}
 \lambda_{\min}(U'U)\geq a\mu/2,\qquad
 \max_i\norm{U_{i\cdot}}_2^2\leq\frac{1-a}{2\sqrt a}.
\end{equation*}
Indeed, $U'U\succeq\Lambda'\Lambda/2
\succeq a\Lambda'\Omega^{-1}\Lambda/2$.
Moreover, $R\succeq0$, $\rank(R)\leq r$, and
\begin{equation*}
 \norm{R}_2\leq\frac1{2a},\qquad
 \norm{R}_{\max}\leq\max_iR_{ii}\leq\frac{h_n}{2},\qquad
 \max_i\norm{R_{i\cdot}}_2^2
 \leq\norm{R}_2\max_iR_{ii}\leq\frac{h_n}{4a}.
\end{equation*}

Let $T$ retain the diagonal of $R$ and its off-diagonal entries wherever
$C_{ij}\ne0$, setting the other entries to zero. At a correlation zero,
$(UU')_{ij}=0$, so $\tilde B=D_0+UU'+T$ exactly.
Masking entries cannot increase a row's Euclidean norm or the Frobenius
norm. Consequently, for either $J=R$ or $J=T$,
\begin{equation*}
 \norm{J}_{\max}=O(\mu^{-1}),\qquad
 \max_i\norm{J_{i\cdot}}_2=O(\mu^{-1/2}),\qquad
 \norm{J}_2=O(1),
\end{equation*}
where the last bound for $T$ follows from
$\norm{T}_2\leq\normF{T}\leq\normF{R}
\leq\sqrt r/(2a)$ and $\sup_n r_n<\infty$.

We next establish the entrywise bound needed for the inverse map.
Let $X=D_0+UU'+J$, $L=D_0+J$, $H=U'U$, and
$V=UH^{-1/2}$. Then $V'V=I_r$,
$X=L+VHV'$, and
$\max_i\norm{e_i'V}_2=O(\mu^{-1/2})$, where $e_i$ is the $i$th
coordinate vector. Choose a fixed $M\geq1$ with
$\norm{L}_2,\norm{D_0}_2\leq M$.
The min--max principle places $n-r$ eigenvalues of $X$ in $[-M,M]$
and its remaining $r$ eigenvalues in $[a\mu/2-M,\infty)$.
For sufficiently large $n$, the latter interval lies to the right of $M+1$.

Set $g(z)=\sqrt{1+z^2}-z$, taking the square-root branch that is
positive on the real axis; $g$ is analytic on $|\operatorname{Im}z|<1$.
Let $\Gamma$ be the positively oriented rectangle with vertical sides
$\operatorname{Re}z=\pm(M+1)$ and horizontal sides
$\operatorname{Im}z=\pm1/2$. On this fixed contour, the resolvents
\begin{equation*}
 \mathcal R(z)=(zI_n-L)^{-1},\qquad
 \mathcal R_0(z)=(zI_n-D_0)^{-1}
\end{equation*}
are uniformly bounded in spectral norm, and $\mathcal R_0(z)$ is
diagonal. The resolvent identities
\begin{align*}
 \mathcal R-\mathcal R_0
 &=\mathcal R_0J\mathcal R_0
   +\mathcal R_0J\mathcal R J\mathcal R_0,\\
 \mathcal RV&=\mathcal R_0V+\mathcal R_0J\mathcal RV
\end{align*}
and the entrywise and row-norm bounds above yield, uniformly on $\Gamma$,
\begin{equation*}
 \norm{\mathcal R-\mathcal R_0}_{\max}=O(\mu^{-1}),\qquad
 \max_i\norm{e_i'\mathcal RV}_2=O(\mu^{-1/2}).
\end{equation*}
For the quadratic resolvent term, each entry is bounded by a constant
times $\norm{J_{i\cdot}}_2\norm{J_{j\cdot}}_2$; for the second identity,
use $\norm{\mathcal RV}_2=O(1)$.

The compression $\mathcal K(z)=V'\mathcal R(z)V$ is uniformly invertible on
$\Gamma$. To see this, for every $\lambda\in[-M,M]$, the imaginary
part of $(z-\lambda)^{-1}$ has a common sign and magnitude bounded
away from zero on either horizontal side. The same is true of its real
part on either vertical side. The spectral theorem and $V'V=I_r$
therefore give $|x^*\mathcal K(z)x|\geq c\norm{x}_2^2$ for all complex $x$,
with a constant $c>0$ independent of $n$ and $z\in\Gamma$.
Since $\norm{H^{-1}}_2=O(\mu^{-1})$,
$(H^{-1}-\mathcal K(z))^{-1}$ is also uniformly bounded. Woodbury's identity gives
\begin{equation*}
 (zI_n-X)^{-1}-\mathcal R(z)
 =\mathcal R(z)V\{H^{-1}-\mathcal K(z)\}^{-1}V'\mathcal R(z).
\end{equation*}
The preceding row-norm bound, and the symmetry of the resolvent,
show that this difference is $O(\mu^{-1})$ in maximum-entry norm.

Let $\Pi$ be the spectral projection of $X$ onto the eigenvalues inside
$\Gamma$. Cauchy's integral formula now gives
\begin{equation*}
 g(X)\Pi-g(D_0)
 =\frac{1}{2\pi\mathrm{i}}\int_\Gamma
 g(z)\{(zI_n-X)^{-1}-\mathcal R_0(z)\}\,\mathrm dz,
\end{equation*}
whose maximum-entry norm is $O(\mu^{-1})$.
The remaining eigenvalues are at least $a\mu/2-M$, and
$g(t)=1/(\sqrt{1+t^2}+t)=O(t^{-1})$ as $t\to\infty$.
Thus $\norm{g(X)(I_n-\Pi)}_2=O(\mu^{-1})$, proving
\begin{equation*}
 \norm{g(X)-g(D_0)}_{\max}=O(\mu^{-1})
\end{equation*}
for both $X=B$ and $X=\tilde B$.

For $f(x)=x+\sqrt{1+x^2}=2x+g(x)$, it follows that
\begin{equation*}
 \norm{f(\tilde B)-f(B)}_{\max}=O(\mu^{-1}),
\end{equation*}
because $\norm{T-R}_{\max}=O(\mu^{-1})$.
Originally $A=f(B)=\Delta C\Delta$ satisfies
$1\leq A_{ii}\leq a^{-1/2}$ and
$|A_{ij}|\leq\sqrt{A_{ii}A_{jj}}\leq a^{-1/2}$.
The same diagonal bounds, up to an $O(\mu^{-1})$ perturbation, hold
for $\tilde A=f(\tilde B)$.
Normalizing each entry by the square root of its two diagonal entries
therefore preserves the uniform $O(\mu^{-1})$ error.
This proves the proposition, including the $O(n^{-1})$ rate under
pervasive factors.
\end{proof}

\paragraph{Extension to an increasing number of factors.}
The proof uses the bound on $r_n$ only to obtain $\norm{T_n}_2=O(1)$.
All subsequent constants are independent of $r_n$. Thus the same
reconstruction bound holds under \Cref{ass:sparse-factor} with an
increasing number of factors whenever the masked precision correction
$T_n$, defined in the proof, has uniformly bounded spectral norm.
This is a sufficient condition; entrywise smallness alone is not used
to infer a spectral-norm bound.

\subsection{Gaussian plug-in coordinates}
\label{app:sparse-plugin}

\begin{proposition}[Common scaling in Gaussian plug-in coordinates]
\label{prop:sparse-plugin}
Let $\hat C_n$ be the sample correlation matrix of $T$ independent,
identically distributed Gaussian vectors with positive-definite covariance
matrix $\Sigma_n$. Suppose its population correlation matrix satisfies
$\lambda_{\min}(C_n)\geq a>0$ uniformly in $n$.
As $n,T\rightarrow\infty$, assume $n/T\leq\bar\gamma<1$ and
$\log n=o(T)$. With $c_n=(1-n/T)^{-1/2}$,
\begin{equation*}
 \max_{i\ne j}\bigl|[B(\hat C_n)]_{ij}-c_n[B(C_n)]_{ij}\bigr|
 =O_p\bigl(\sqrt{\log n/T}\bigr).
\end{equation*}
No factor representation of $C_n$ is required.
\end{proposition}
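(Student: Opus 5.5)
The proof starts from the entrywise formula \eqref{eq:sparse-bft-exact}, applied to both $\hat C_n$ and $C_n$:
\[
 [B(\hat C_n)]_{ij}=\tfrac12\bigl(\hat\delta_i\hat\delta_j\hat C_{ij}-\hat K_{ij}/(\hat\delta_i\hat\delta_j)\bigr),\qquad \hat K=\hat C_n^{-1},\ \hat\delta_i=\hat K_{ii}^{1/4}.
\]
The aim is to show that, uniformly over entries,
\[
 \hat C_{ij}=C_{ij}+O_p(\sqrt{\log n/T}),\qquad
 \hat K_{ij}=\tfrac{T}{T-n}K_{ij}+O_p(\sqrt{\log n/T}),
\]
with the second estimate holding on and off the diagonal. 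Granting these, $\hat\delta_i=c_n^{1/2}\delta_i+O_p(\sqrt{\log n/T})$ uniformly. This uses $1\le K_{ii}\le a^{-1}$ (from $\lambda_{\min}(C_n)\ge a$) and the fact that $x\mapsto x^{1/4}$ is Lipschitz on $[1,\infty)$. Substituting then gives
\[
 \hat\delta_i\hat\delta_j\hat C_{ij}=c_n\delta_i\delta_jC_{ij}+O_p(\cdot),\qquad
 \frac{\hat K_{ij}}{\hat\delta_i\hat\delta_j}=\frac{c_n^2K_{ij}}{c_n\delta_i\delta_j}+O_p(\cdot)=\frac{c_nK_{ij}}{\delta_i\delta_j}+O_p(\cdot).
\]
Both terms carry the common factor $c_n$, which proves the claim. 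Here $c_n\le(1-\bar\gamma)^{-1/2}$ is bounded, and $K_{ij}$, $\delta_i$ are uniformly bounded.

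The correlation estimate is standard. By scale invariance of $\hat C_n$, I may take $\Sigma_n=C_n$. Each sample variance and covariance is an average of $T$ sub-exponential variables with uniformly bounded parameters. Bernstein's inequality plus a union bound over $n^2$ pairs, using $\log n=o(T)$, gives a maximal deviation of $O_p(\sqrt{\log n/T})$. The delta method for normalization then carries this over to correlations.

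The main obstacle is the precision matrix, since $n/T$ need not vanish and $\hat C_n^{-1}$ is not a consistent estimator of $K$. I would first treat the sample covariance inverse $\hat S^{-1}$, where $\hat S$ is Wishart-based. Write $\hat S=\Sigma^{1/2}W\Sigma^{1/2}/T'$ with $T'$ the effective degrees of freedom. Entries of the inverse Wishart have explicit representations: $1/(\hat S^{-1})_{ii}$ is a residual variance from regressing variable $i$ on the other $n-1$ variables. Its distribution is $K_{ii}^{-1}\chi^2_{T'-n+1}/T'$, which concentrates at $K_{ii}^{-1}(1-n/T)$ with relative error $O_p(\sqrt{\log n/T})$ uniformly via $\chi^2$ tail bounds. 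The off-diagonal entries follow in the same way from the bivariate conditional regression of a pair on the remaining variables. Equivalently, the $2\times2$ block $[(\hat S^{-1})_{\{i,j\}}]^{-1}$ is Wishart with $T'-n+2$ degrees of freedom and scale $[K_{\{i,j\}}]^{-1}$. This gives $(\hat S^{-1})_{ij}=\tfrac{T}{T-n}K_{ij}+O_p(\sqrt{\log n/T})$, where the $O(1)$ shift in degrees of freedom contributes only $O(1/T)$.

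Finally, $\hat K=\hat D^{1/2}\hat S^{-1}\hat D^{1/2}$ with $\hat D=\Diag(\hat S)$, whose entries are $1+O_p(\sqrt{\log n/T})$ uniformly. The precision estimate therefore transfers to $\hat K$, still with the common factor $T/(T-n)=c_n^2$.

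The delicate points are two. The first is keeping every constant uniform using only $\lambda_{\min}(C_n)\ge a$, $n/T\le\bar\gamma$, and bounded $K_{ii}$. The second is checking that the union bound over $n^2$ pairs of $2\times2$ Wishart blocks costs only a $\sqrt{\log n}$ factor.
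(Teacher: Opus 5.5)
Your proposal is correct and follows the paper's proof essentially step by step. Like the paper, you reduce to $\Sigma_n=C_n$, obtain uniform $O_p(\sqrt{\log n/T})$ bounds for $\hat C_{ij}$ and the sample variances by Gaussian concentration and a union bound, use the fact that each $2\times 2$ principal block of the inverse sample covariance is an inverse Wishart with $T-n+1$ degrees of freedom and scale $K_{JJ}^{-1}$ to get $\max_{i,j}|(S^{-1})_{ij}-c_n^2K_{ij}|=O_p(\sqrt{\log n/T})$, transfer this to $\hat K=D_S^{1/2}S^{-1}D_S^{1/2}$, and extract the common factor $c_n$ via the homogeneity of the entrywise BCP formula, with your explicit $\hat\delta_i=c_n^{1/2}\delta_i+O_p(\cdot)$ substitution being an unpacked form of the paper's homogeneity-plus-Lipschitz step.
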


\begin{proof}
Correlation is invariant to positive diagonal rescaling, so take
$\Sigma_n=C_n$. Suppress $n$, set $K=C^{-1}$ and
$e_n=\sqrt{\log n/T}$, and let $S$ be the centered sample covariance
with divisor $T$. Then $TS\sim W_n(C,T-1)$ and is positive definite
almost surely eventually. Gaussian concentration and a union bound give
\begin{equation*}
 \max_i|S_{ii}-1|=O_p(e_n),\qquad
 \max_{i\ne j}|\hat C_{ij}-C_{ij}|=O_p(e_n).
\end{equation*}
For each pair $J=\{i,j\}$, the principal block
$[(TS)^{-1}]_{JJ}$ is the inverse of a two-dimensional Wishart matrix
with $\nu=T-n+1$ degrees of freedom and scale $K_{JJ}^{-1}$
\citep[Section~3.2]{Muirhead:1982}. Thus its marginal distribution is that of
\begin{equation*}
 K_{JJ}^{1/2}(Z_J'Z_J)^{-1}K_{JJ}^{1/2},
\end{equation*}
where $Z_J$ has $\nu$ rows of independent standard bivariate normals.
Only these marginal tail bounds are needed; matrices for different
pairs need not be independent. Fixed-dimensional Wishart concentration,
inversion on a neighborhood of $I_2$, and a union bound over all pairs give
\begin{equation*}
 \max_J\norm{(Z_J'Z_J/\nu)^{-1}-I_2}_2
 =O_p\bigl(\sqrt{\log n/\nu}\bigr).
\end{equation*}
Here the maximum expresses the uniform tail bound under the joint
sample law. Since $\norm{K_{JJ}}_2\leq a^{-1}$,
$\nu\geq(1-\bar\gamma)T$, and $T/\nu=c_n^2+O(T^{-1})$, it follows that
\begin{equation*}
 \max_{i,j}|(S^{-1})_{ij}-c_n^2K_{ij}|=O_p(e_n).
\end{equation*}
Also $\hat K=\hat C^{-1}
=D_S^{1/2}S^{-1}D_S^{1/2}$, where $D_S=\Diag(S)$, so the same bound
holds with $S^{-1}$ replaced by $\hat K$.
Finally write the exact BCP entry as
\begin{equation*}
 g(x,p,q,r)=\tfrac12\{(pq)^{1/4}x-(pq)^{-1/4}r\}.
\end{equation*}
It satisfies $g(x,c^2p,c^2q,c^2r)=c\,g(x,p,q,r)$ and is Lipschitz on
$p,q\in[1,L]$, $|x|\leq1$, $|r|\leq L$ for each fixed $L$.
Precision diagonals of correlation matrices are at least one, while
$c_n^2\leq(1-\bar\gamma)^{-1}$ and $\norm{K}_2\leq a^{-1}$.
The sample and scaled population arguments therefore lie in such a
fixed set with probability tending to one. Combining the uniform bounds
and homogeneity proves the result.
\end{proof}

When $n/T$ has a positive limit, the common scale can differ persistently
from one. This need not prevent support recovery: under the factor
assumption, \Cref{cor:sparse-estimated-support} applies with
$e_n=\sqrt{\log n/T}$ when its threshold conditions hold.
The proposition concerns independent Gaussian sampling; it does not
supply a universal correction for dependent or heavy-tailed observations.

\subsection{GFT and InvC bounds and counterexample}
\label{app:sparse-ah}

Let $H_n=\Off(\log C_n)$, following
\citet{ArchakovHansen:Correlation}. The following lemma underlies the GFT and LS shrinkage bounds.

\begin{lemma}[Entrywise logarithm bound for a diagonal-plus-factor matrix]
\label{lem:sparse-log-bound}
Let $V_n=D_n+U_nU_n'$, where $U_n\in\mathbb R^{n\times r_n}$ and $D_n$ is diagonal with entries in a fixed
interval $[\omega_-,\omega_+]\subset(0,\infty)$ independent of $n$. Suppose the rows $u_i'$ of
$U_n$ have norm at most $L$, and
$g_n:=\lambda_{\min}(U_n'U_n)>0$.
Then
\begin{equation*}
 \max_{i,j}|[\log V_n-\log D_n]_{ij}|
 =O(\log(1+g_n)/g_n).
\end{equation*}
\end{lemma}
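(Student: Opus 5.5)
Our approach is to use the integral representation of the matrix logarithm,
\begin{equation*}
 \log V-\log D=\int_0^\infty\left\{(D+sI_n)^{-1}-(V+sI_n)^{-1}\right\}\dd s,
\end{equation*}
valid for SPD $V,D$. Since $D_n$ is diagonal, we write $D_s=D_n+sI_n$. Woodbury's identity gives
\begin{equation*}
 (D_s)^{-1}-(V_n+sI_n)^{-1}
 =D_s^{-1}U_n\bigl(I_{r_n}+U_n'D_s^{-1}U_n\bigr)^{-1}U_n'D_s^{-1}.
\end{equation*}
The $(i,j)$ entry of this matrix is
\begin{equation*}
 \frac{u_i'M_s^{-1}u_j}{(d_i+s)(d_j+s)},\qquad M_s=I+U_n'D_s^{-1}U_n .
\end{equation*}
Because $D_s^{-1}\succeq(\omega_++s)^{-1}I$, we have $M_s\succeq\{1+g_n/(\omega_++s)\}I$. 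Hence
\begin{equation*}
 \|M_s^{-1}\|_2\le\frac{\omega_++s}{\omega_++s+g_n}.
\end{equation*}
Combined with $\|u_i\|\le L$, this gives the entrywise bound
\begin{equation*}
 \bigl|[\cdot]_{ij}\bigr|\le\frac{L^2}{(\omega_-+s)^2}\cdot\frac{\omega_++s}{\omega_++s+g_n}.
\end{equation*}

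It remains to integrate this scalar bound over $s\in(0,\infty)$. We use $(\omega_++s)/(\omega_-+s)\le\omega_+/\omega_-=:\kappa$. The integral is then bounded by
\begin{equation*}
 \kappa L^2\int_0^\infty\frac{\dd s}{(\omega_-+s)(\omega_++s+g_n)}.
\end{equation*}
The remaining integral has the closed form
\begin{equation*}
 \frac{1}{\omega_+-\omega_-+g_n}\,\log\frac{\omega_++g_n}{\omega_-}.
\end{equation*}
For large $g_n$ this is $O(\log(1+g_n)/g_n)$. For bounded $g_n$ bounded away from zero, the same closed form is $O(1)$, which matches the claimed rate. For small $g_n$ the target $\log(1+g_n)/g_n\to1$ while our bound stays $O(1)$, so the rate is uniform over all $g_n>0$, with constants depending only on $L,\omega_\pm$.

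The only delicate point is that the bound must be entrywise and independent of $r_n$ and $n$. This is ensured because the operator bound on $M_s^{-1}$ and the row-norm bound on $u_i$ involve neither dimension. Convergence of the integral at $s\to\infty$ is automatic from the $s^{-2}$ decay. We expect the main obstacle to be justifying the integral representation and the interchange needed to take entries, but both are standard for SPD matrices, for example via the spectral decomposition and the scalar identity $\log x=\int_0^\infty\{(1+s)^{-1}-(x+s)^{-1}\}\dd s$ applied to the difference. Thus the whole argument reduces to the Woodbury computation and an elementary integral.
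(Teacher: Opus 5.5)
Your proof is correct and follows the paper's argument step for step: the integral representation of the logarithm, Woodbury's identity for the resolvent difference, the bound $\|M_s^{-1}\|_2\le(\omega_++s)/(\omega_++s+g_n)$ combined with the row-norm bound, and integration of the resulting scalar majorant. The only cosmetic difference is that you evaluate $\int_0^\infty\{(\omega_-+s)(\omega_++s+g_n)\}^{-1}\,\dd s$ in closed form, whereas the paper first dominates the integrand by $c_1/\{(1+t)(1+t+g_n)\}$, which integrates directly to $c_1\log(1+g_n)/g_n$.
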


\begin{proof}
The integral representation of the matrix logarithm and Woodbury's
identity give, with $R_t=(D_n+tI_n)^{-1}$,
\begin{equation}\label{eq:sparse-log-integral}
 \log V_n-\log D_n
 =\int_0^\infty R_tU_n
 (I_{r_n}+U_n'R_tU_n)^{-1}U_n'R_t\,\dd t.
\end{equation}
For every pair, the absolute value of the integrand is at most
\begin{equation*}
 \frac{L^2}{(\omega_-+t)^2}
 \frac{\omega_++t}{\omega_++t+g_n}
 \leq\frac{c_1}{(1+t)(1+t+g_n)},
\end{equation*}
where $c_1$ depends only on $\omega_-,\omega_+,L$.
Integration yields $c_1\log(1+g_n)/g_n$, with absolute convergence
entrywise.
\end{proof}

\begin{proposition}[Uniform GFT shrinkage]\label{prop:sparse-ah-shrinkage}
Under \Cref{ass:sparse-factor},
\begin{equation*}
 \max_{i\ne j}|H_{ij,n}|=O(\log(1+\mu_n)/\mu_n).
\end{equation*}
More explicitly,
\begin{equation*}
 \max_{i\ne j}|H_{ij,n}|
 \leq\frac{1-a}{a}\frac{\log(1+\mu_n)}{\mu_n}.
\end{equation*}
These bounds apply to all pairs, including correlations bounded away from zero.
\end{proposition}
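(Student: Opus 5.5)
We apply \Cref{lem:sparse-log-bound} with $V_n=C_n$, $D_n=\Omega_n$ and $U_n=\Lambda_n$. Since $\log\Omega_n$ is diagonal, $\Off(\log C_n)=\Off(\log C_n-\log\Omega_n)$, so every off-diagonal GFT coordinate is an off-diagonal entry of the difference controlled by the lemma. The lemma's hypotheses hold under \Cref{ass:sparse-factor}. The residual variances lie in $[a,1]$, so $\omega_-=a$ and $\omega_+=1$. The rows satisfy $\norm{\lambda_{i,n}}^2=1-\omega_{i,n}\le1-a$, so $L^2=1-a$. Finally, since $\Omega_n\preceq I_n$,
\[
 \Lambda_n'\Lambda_n\succeq\Lambda_n'\Omega_n\Lambda_n\ldots
\]
is not the right direction. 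Instead we use $\Omega_n^{-1}\preceq a^{-1}I_n$, which gives $\Lambda_n'\Omega_n^{-1}\Lambda_n\preceq a^{-1}\Lambda_n'\Lambda_n$. Hence $g_n=\lambda_{\min}(\Lambda_n'\Lambda_n)\ge a\mu_n$. The $O$-statement then follows because $x\mapsto\log(1+x)/x$ is decreasing, and $\log(1+a\mu_n)/(a\mu_n)\le a^{-1}\log(1+\mu_n)/\mu_n$.

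To get the explicit constant $(1-a)/a$, I would redo the integral bound rather than cite the lemma's unspecified $c_1$. This also allows a sharper use of $\mu_n$ itself in place of $g_n$. Write $R_t=(\Omega_n+tI_n)^{-1}$ and $N_t=I+\Lambda_n'R_t\Lambda_n$. For $t\ge0$, each diagonal entry satisfies $(\omega_i+t)^{-1}\ge\omega_i^{-1}/(1+t)$, because $\omega_i\le1$ gives $\omega_i+t\le\omega_i(1+t)/\omega_i\cdot\omega_i$; more simply, $(\omega_i+t)\le(1+t)$ and $\omega_i\le1$ give $(\omega_i+t)^{-1}\ge \omega_i^{-1}\cdot\omega_i/(\omega_i+t)\ge\omega_i^{-1}a/(1+t)$. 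It follows that $\Lambda_n'R_t\Lambda_n\succeq \mu_n a/(1+t)$, or, using $\omega_i\le1$ directly, $\succeq\mu_n\min_i\omega_i/(\omega_i+t)$.

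By Cauchy--Schwarz, the $(i,j)$ integrand is bounded by
\[
 \norm{R_t^{1/2}\ldots}\le \frac{\norm{\lambda_i}\norm{\lambda_j}}{(\omega_i+t)(\omega_j+t)}\,\norm{N_t^{-1}}_2 .
\]
I would choose the lower bound on $N_t$ so that the integral becomes $\int_0^\infty\frac{1-a}{(a+t)^2}\cdot\frac{1}{1+\mu_n c(t)}\,\dd t$. The target is the form $\int_0^\infty\frac{\kappa}{(1+s)(1+s+\mu_n)}\,\dd s=\kappa\log(1+\mu_n)/\mu_n$ with $\kappa=(1-a)/a$, reached after the substitution $t=as$ or a similar one.

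The main obstacle is this constant bookkeeping: matching the resolvent factor and the lower bound on $N_t$ so that the combined integrand is dominated by exactly $\frac{(1-a)/a}{(1+s)(1+s+\mu_n)}$. I would first try the bounds $(\omega_i+t)^{-1}\le (a+t)^{-1}$ for the numerator and $\Lambda_n'R_t\Lambda_n\succeq\Lambda_n'\Omega_n^{-1}\Lambda_n\cdot\frac{a}{a+t}$ for $N_t$, the latter since $\omega_i/(\omega_i+t)\ge a/(a+t)$. After the substitution $t=as$, these give the integrand $\frac{1-a}{a}\frac{1}{(1+s)^2}\frac{1+s}{1+s+\mu_n}$, which is exactly the target. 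The stated bound follows by integrating, and it applies uniformly over all pairs, including those with correlations bounded away from zero.
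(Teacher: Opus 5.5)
Your proposal is correct and follows the paper's proof. You get the $O$-rate from \Cref{lem:sparse-log-bound} using $\Lambda_n'\Lambda_n\succeq a\mu_n I_{r_n}$, and the explicit constant from $R_t\succeq\{a/(a+t)\}\Omega_n^{-1}$ together with $\norm{\lambda_{i,n}}^2\le 1-a$, which yields the same integrand bound $\frac{1-a}{(a+t)^2}\frac{a+t}{a+t+a\mu_n}$ and integral $(1-a)\log(1+\mu_n)/(a\mu_n)$. The false starts in your write-up (the wrong-direction Gram inequality and the garbled bound on $(\omega_i+t)^{-1}$) are not used in the final argument and should simply be deleted.
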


\begin{proof}
Since $\Omega_n^{-1}\preceq a^{-1}I_n$,
$\Lambda_n'\Lambda_n\succeq a\mu_n I_{r_n}$.
Apply \Cref{lem:sparse-log-bound} with $D_n=\Omega_n$, $U_n=\Lambda_n$,
and $L=1$; $\log\Omega_n$ is diagonal.
For the explicit bound, set $R_t=(\Omega_n+tI_n)^{-1}$. Since
$R_t\succeq\{a/(a+t)\}\Omega_n^{-1}$ and
$\norm{\lambda_i}^2\leq1-a$, the integrand in
\eqref{eq:sparse-log-integral} has entries bounded in absolute value by
\[
 \frac{1-a}{(a+t)^2}
 \frac{a+t}{a+t+a\mu_n}.
\]
Its integral is $(1-a)\log(1+\mu_n)/(a\mu_n)$.
\end{proof}

Vanishing of every coordinate does not by itself preclude support
recovery, since a common rescaling leaves their ordering unchanged. The
failure in general is that the leading GFT term can have a different zero
pattern.

\begin{proposition}[Leading GFT term with homogeneous residuals]
\label{prop:sparse-ah-leading}
Under \Cref{ass:sparse-factor}, suppose $\Omega_n=\omega I_n$ with
fixed $0<\omega<1$ and $\mu_n\geq\kappa n$. Set
$G_n=\Lambda_n'\Lambda_n$ and $Q_n=G_n/n$. Uniformly for $i\ne j$,
\begin{align}
 H_{ij,n}
 &=\lambda_{i,n}'G_n^{-1}\log(I_{r_n}+G_n/\omega)\lambda_{j,n}
 \label{eq:sparse-ah-exact}\\
 &=\frac{\log n}{n}\lambda_{i,n}'Q_n^{-1}\lambda_{j,n}+O(n^{-1}).
 \label{eq:sparse-ah-leading}
\end{align}
\end{proposition}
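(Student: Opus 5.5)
The plan is to obtain \eqref{eq:sparse-ah-exact} exactly, with no approximation, by a spectral-calculus identity for a scalar function of the low-rank term. Then \eqref{eq:sparse-ah-leading} follows by expanding the $r_n\times r_n$ matrix logarithm, using that all eigenvalues of $G_n$ are of exact order $n$.

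\emph{Exact formula.} Suppress $n$. Here $C=\omega I_n+\Lambda\Lambda'$, and $G=\Lambda'\Lambda$ is invertible because the columns of $\Lambda$ are linearly independent (\Cref{ass:sparse-factor}). Set $f(x)=\log(1+x/\omega)$, so that
\[
 \log C=(\log\omega)I_n+f(\Lambda\Lambda').
\]
Since $f(0)=0$, write $f(x)=x\,h(x)$ with $h$ analytic on $[0,\infty)$. The claim is
\[
 f(\Lambda\Lambda')=\Lambda h(G)\Lambda'.
\]
I would verify it from a thin SVD $\Lambda=USV'$. Then $\Lambda\Lambda'=US^2U'$ acts as zero on the orthogonal complement of $\operatorname{range}(U)$, where $f$ also vanishes. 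On the range, $\Lambda h(G)\Lambda'=USV'\,Vh(S^2)V'\,VSU'=Uf(S^2)U'$. Since $\log\omega\, I_n$ is diagonal, taking off-diagonal entries gives
\[
 H_{ij}=\lambda_i'G^{-1}\log(I_r+G/\omega)\lambda_j,
\]
which is \eqref{eq:sparse-ah-exact}.

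\emph{Spectral bounds on $Q_n$.} Because $\Omega=\omega I$, we have $\mu_n=\lambda_{\min}(G)/\omega$. Hence $\lambda_{\min}(Q_n)\ge\omega\kappa$. Also $\lambda_{\max}(Q_n)\le\tr(G)/n=n^{-1}\sum_i\norm{\lambda_i}^2\le1-\omega$. So $Q_n$ and $Q_n^{-1}$ are uniformly bounded in spectral norm. In addition, $r_n\,\omega\kappa n\le\tr G\le n$, so $r_n$ is bounded, although the argument below only uses spectral-norm bounds.

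\emph{Expansion.} All matrices involved are functions of $Q_n$ and so commute. Write $I+G/\omega=(nQ_n/\omega)(I+\omega G^{-1})$. Then
\[
 \log(I+G/\omega)=(\log n)I+\log(Q_n/\omega)+\log(I+\omega G^{-1}),
\]
where the last term has spectral norm $O(n^{-1})$ because $\norm{G^{-1}}_2\le(\omega\kappa n)^{-1}$. Multiplying by $G^{-1}=n^{-1}Q_n^{-1}$ gives
\[
 G^{-1}\log(I+G/\omega)=\frac{\log n}{n}Q_n^{-1}+E_n,
\]
with $\norm{E_n}_2\le n^{-1}\norm{Q_n^{-1}}_2\bigl(\norm{\log(Q_n/\omega)}_2+O(n^{-1})\bigr)=O(n^{-1})$. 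The term $\norm{\log(Q_n/\omega)}_2$ is bounded because the eigenvalues of $Q_n/\omega$ lie in the fixed interval $[\kappa,(1-\omega)/\omega]$.

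\emph{Conclusion.} Finally, $|\lambda_i'E_n\lambda_j|\le\norm{E_n}_2\norm{\lambda_i}\norm{\lambda_j}\le\norm{E_n}_2$, uniformly over $i\ne j$, which gives \eqref{eq:sparse-ah-leading}. The only delicate step is the exact low-rank functional-calculus identity. Everything else is uniform spectral bookkeeping, and the main care needed is to keep every constant free of $n$ and $r_n$.
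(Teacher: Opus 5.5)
Your proof is correct and follows essentially the same route as the paper: spectral calculus on the low-rank term gives the exact formula, and the two-sided eigenvalue bounds $\omega\kappa\le\lambda(Q_n)\le 1$ give $\log(I_{r_n}+nQ_n/\omega)=(\log n)I_{r_n}+O(1)$ in operator norm. Combined with bounded loading rows, this yields the leading term. Your thin-SVD check of the functional-calculus identity and your explicit commuting factorization of the logarithm fill in details that the paper's short proof leaves implicit.
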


\begin{proof}
Spectral calculus gives
$\log C_n=(\log\omega)I_n+
\Lambda_nG_n^{-1}\log(I_{r_n}+G_n/\omega)\Lambda_n'$.
The eigenvalues of $Q_n$ are bounded above by $\tr(G_n)/n\leq1$
and below by $\omega\kappa$. Thus
$\log(I_{r_n}+nQ_n/\omega)=(\log n)I_{r_n}+O(1)$ in operator norm,
which proves \eqref{eq:sparse-ah-leading} using bounded loading rows.
\end{proof}

\begin{example}[Failure of common-threshold separation for GFT and InvC]
\label{ex:sparse-ah-failure}
Let $n=4m+3$, $\Omega_n=\tfrac12I_n$, and take the first three loading
rows as
\[
 \lambda_1=2^{-1/2}(1,0)',\quad
 \lambda_2=2^{-1/2}(0,1)',\quad
 \lambda_3=10^{-1/2}(1,2)'.
\]
Add $3m$ rows $(1/2,1/2)$ and $m$ rows $(1/2,-1/2)$. All row norms
squared equal $1/2$, and
\[
 G_n=nQ+O(1),\qquad
 Q=\begin{pmatrix}1/4&1/8\\1/8&1/4\end{pmatrix},\qquad
 Q^{-1}=\begin{pmatrix}16/3&-8/3\\-8/3&16/3\end{pmatrix}.
\]
The eigenvalues of $Q$ are $3/8$ and $1/8$, so the factor assumptions
hold. Direct calculation gives $C_{12,n}=0$ and
$C_{13,n}=1/(2\sqrt5)$, while \eqref{eq:sparse-ah-exact} and
Woodbury's identity give
\[
 \begin{aligned}
 H_{12,n}&\sim-\frac43\frac{\log n}{n},&
 nH_{13,n}&\rightarrow\frac{2\log3}{\sqrt5},\\
 K_{12,n}&=\frac8{3n}+O(n^{-2}),&
 K_{13,n}&=O(n^{-2}).
 \end{aligned}
\]
Indeed, $\lambda_1'Q^{-1}\lambda_2=-4/3$ and
$\lambda_1'Q^{-1}\lambda_3=0$. The next GFT term is
$\lambda_1'Q^{-1}\log(2Q)\lambda_3/n
=2\log3/(n\sqrt5)$, and
$K_{ij,n}=-2\lambda_i'Q^{-1}\lambda_j/n+O(n^{-2})$.
In both coordinate systems, a common magnitude threshold retaining the
nonnull pair $(1,3)$ therefore also retains the null pair $(1,2)$,
even after common rescaling. The finitely many loading types have a
fixed positive minimum nonzero correlation.
\end{example}

For completeness, InvC's $O(\mu_n^{-1})$ shrinkage follows directly from
\eqref{eq:sparse-woodbury}.

\subsection{LS bounds and counterexample}
\label{app:sparse-ls}

Use $S_n$ and $D_n$ defined in \Cref{sec:sparsity-comparators} of the main
paper. Equivalently, $D_n=\diag(x)$ with $C_nx=x^{-1}$ componentwise.
The transformed matrix has zero row sums, and its inverse is $\Cor(e^{S_n})$.

\begin{proposition}[LS shrinkage with comparable scaling factors]
\label{prop:sparse-ls-regular}
Suppose \Cref{ass:sparse-factor} holds and
\begin{equation}\label{eq:sparse-ls-scale-condition}
 \sup_n\frac{\max_i(D_n)_{ii}}{\min_i(D_n)_{ii}}<\infty.
\end{equation}
Then $\max_{i\ne j}|S_{ij,n}|=O(\log(1+\mu_n)/\mu_n)$, including at fixed
nonzero correlations.
\end{proposition}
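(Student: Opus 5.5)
We will reduce the statement to \Cref{lem:sparse-log-bound}. Write $D_n=\diag(x)$ with $x>0$ and $C_nx=x^{-1}$. Then
\[
 D_nC_nD_n=D_n\Omega_nD_n+(D_n\Lambda_n)(D_n\Lambda_n)'=:\tilde D_n+U_nU_n',
\]
where $\tilde D_n=D_n^2\Omega_n$ is diagonal. Because $\log\tilde D_n$ is diagonal, $S_{ij,n}=[\log(\tilde D_n+U_nU_n')-\log\tilde D_n]_{ij}$ for $i\ne j$. The lemma then gives $\max_{i\ne j}|S_{ij,n}|=O(\log(1+g_n)/g_n)$ with $g_n=\lambda_{\min}(U_n'U_n)$, provided three things hold uniformly in $n$: the entries of $\tilde D_n$ lie in a fixed compact subinterval of $(0,\infty)$, the rows of $U_n$ are bounded, and $g_n\gtrsim\mu_n$. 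Since $\log(1+g)/g$ is decreasing, $g_n\geq c\mu_n$ upgrades the rate to $O(\log(1+\mu_n)/\mu_n)$.

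The main step, and the only real obstacle, is to show that the scaling factors are bounded above and below in absolute terms. The hypothesis \eqref{eq:sparse-ls-scale-condition} only bounds their ratio, so we also need control of their overall level. We get it from the defining equations.

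First, the quadratic form. Taking the inner product of $C_nx=x^{-1}$ with $x$ gives $x'C_nx=n$.

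Second, a lower bound on the diagonal. Let $\rho$ denote the bound on the ratio and set $x_{\min}=\min_i x_i$. Since $C_n\succeq\Omega_n\succeq aI_n$, we have $n=x'C_nx\geq a\|x\|^2\geq an\,x_{\min}^2$. Hence $x_{\min}\leq a^{-1/2}$, and by the ratio condition $\max_ix_i\leq\rho a^{-1/2}$.

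Third, an upper bound. Average the equations over $i$:
\[
 n^{-1}\sum_i x_i^{-1}=n^{-1}\one'C_nx\leq n^{-1}\|\one\|\,\|C_nx\|.
\]
A cruder route is more convenient: $x_i^{-1}=(C_nx)_i\leq\sum_j|C_{ij}|x_j$, which is not obviously bounded since the row sums of $C_n$ grow with the factors. Instead we use the quadratic form again, $n=x'C_nx\leq\lambda_{\max}(C_n)\|x\|^2$. Because $\lambda_{\max}$ grows like $\mu_n$, this does not bound $x$ from below. So we argue directly. If $x_{\max}\to0$ along a subsequence, then every $x_i^{-1}\to\infty$. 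But $x_i^{-1}=(C_nx)_i\leq x_{\max}\sum_j|C_{ij}|$, and $\sum_j|C_{ij}|\leq1+\|\lambda_i\|\sum_j\|\lambda_j\|\leq1+n$. This fails to rule the scenario out.

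We therefore replace it with the averaged identity
\[
 \sum_ix_i^{-1}=\one'C_nx=x'C_n\one .
\]
Combined with comparability, $x_i\asymp\bar x$ for a common level $\bar x$, and this yields
\[
 n\bar x^{-1}\asymp\bar x\,\one'C_n\one\geq\bar x\,a n .
\]
Hence $\bar x^2\lesssim a^{-1}$. The lower bound on $\bar x$ must come from $x'C_nx=n$ together with $x'\Omega_nx\leq n\bar x^2$, and we also need control of $\|\Lambda_n'x\|^2$. This is where we expect to work hardest. Our first attempt is to show $\one'C_n\one\lesssim n(1+\text{factor part})$, so that $\bar x^2\asymp n/\one'C_n\one$. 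If the factor part makes $\bar x\to0$, we instead rescale: replace $\tilde D_n$ by $\tilde D_n/\bar x^2$ and $U_n$ by $U_n/\bar x$. This changes $\log$ only by the scalar $\log\bar x^2\,I_n$, so off-diagonal entries are unaffected. Because of that invariance, only the ratio condition is actually needed. Under the rescaling, $\tilde D_n/\bar x^2$ has entries in $[a\rho^{-2},\rho^2]$, the rows of $U_n/\bar x$ have norm at most $\rho$, and
\[
 g_n/\bar x^2\geq\rho^{-2}\lambda_{\min}(\Lambda_n'\Lambda_n)\geq\rho^{-2}a\mu_n ,
\]
using $\Lambda_n'\Lambda_n\succeq a\Lambda_n'\Omega_n^{-1}\Lambda_n$ as in the proof of \Cref{prop:sparse-ah-shrinkage}. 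Applying \Cref{lem:sparse-log-bound} to the rescaled pair then completes the argument.

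In the end, the scale-invariance observation removes the apparent obstacle. The remaining care is in checking that the constants in the lemma depend only on $a$, $\rho$, and the row-norm bound.
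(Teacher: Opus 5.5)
Your final argument is correct and is essentially the paper's proof. The paper rescales by $x_{\min}$, so that the scaled diagonal lies in $[1,L]$, and notes that the scalar factor only adds $2(\log x_{\min})I_n$ to the logarithm. It then applies \Cref{lem:sparse-log-bound} with residual diagonal in $[a,L^2]$, bounded loading rows, and $\Lambda_n'\Lambda_n\succeq a\mu_nI_{r_n}$. This is exactly what you do with a general level $\bar x\in[x_{\min},\max_ix_i]$. The earlier attempts to bound $x$ in absolute terms should be deleted, because the level genuinely tends to zero (for equicorrelation, $x=(1+(n-1)\rho)^{-1/2}\one$), and some of those steps, such as replacing $x'C_n\one$ by $\bar x\,\one'C_n\one$ up to constants, are not justified when $C_n\one$ has negative entries.
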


\begin{proof}
Let $x_{\min}=\min_ix_i$ and $\tilde D_n=D_n/x_{\min}$.
The entries of $\tilde D_n$ lie in $[1,L]$ for a fixed $L$.
The matrix
\begin{equation*}
 \tilde D_n C_n\tilde D_n
 =\tilde D_n\Omega_n\tilde D_n+
 (\tilde D_n\Lambda_n)(\tilde D_n\Lambda_n)'
\end{equation*}
satisfies \Cref{lem:sparse-log-bound}, with residual diagonal in
$[a,L^2]$, bounded loading rows, and loading Gram matrix at least
$\Lambda_n'\Lambda_n\succeq a\mu_nI_{r_n}$.
The scalar rescaling contributes only $2(\log x_{\min})I_n$ to the logarithm,
which has no off-diagonal entries.
\end{proof}

Condition \eqref{eq:sparse-ls-scale-condition} is sufficient, not
necessary. It is not implied by \Cref{ass:sparse-factor}, and the next
example shows why no uniform LS sparsity guarantee follows from the
original assumptions alone.

\begin{example}[LS fills a null pair and shrinks a nonnull pair]
\label{ex:sparse-ls-failure}
Let $n=8m+4$, $\Omega_n=\tfrac12I_n$, and $s=1/\sqrt2$.
Take the first four loading rows as
$(0,s,0),(0,0,s),(0,-s,0),(0,0,-s)$, followed by $3m$ copies each of
$(1/2,s/2,s/2)$ and $(1/2,-s/2,-s/2)$, and $m$ copies each of
$(1/2,s/2,-s/2)$ and $(1/2,-s/2,s/2)$.
All row norms squared equal $1/2$. The loading Gram eigenvalues are
$2m$, $1+3m/2$, and $1+m/2$, so the factor assumptions hold.

Here $C_{12,n}=0$ and direct substitution into
$D_nC_nD_n\one_n=\one_n$ gives
\[
 D_n=\diag\{\sqrt2 I_4,(2m+1/2)^{-1/2}I_{8m}\}.
\]
Set $\epsilon_m=(4m+1)^{-1}$,
$t_{+,m}=3m/(4m+1)$, and $t_{-,m}=m/(4m+1)$.
Rotating the special-variable contrasts
$(e_1-e_3)/\sqrt2,(e_2-e_4)/\sqrt2$ to their sum and difference,
and pairing them with the corresponding normalized bulk contrasts,
reduces the relevant scaled covariance to
\[
 K_{\pm,m}=\begin{pmatrix}
 3&\sqrt{2t_{\pm,m}}\\
 \sqrt{2t_{\pm,m}}&t_{\pm,m}+\epsilon_m
 \end{pmatrix}.
\]
The special-variable sums have eigenvalue one. Consequently, with
$f(t)=[\log\left(\begin{smallmatrix}3&\sqrt{2t}\\\sqrt{2t}&t\end{smallmatrix}\right)]_{11}$,
\[
 S_{12,n}=\tfrac14\{[\log K_{+,m}]_{11}-[\log K_{-,m}]_{11}\}
 \rightarrow\tfrac14\{f(3/4)-f(1/4)\}<0.
\]
The strict inequality follows analytically from the resolvent derivative
\[
 f'(t)=-\int_0^\infty
 \frac{2z}{\{t(1+z)+z(3+z)\}^2}\,\dd z<0.
\]
For two distinct variables in the first bulk group, $C_{56,n}=1/2$.
Let $J$ contain the eight normalized group indicators and
$A_8=J'D_nC_nD_nJ$. Block functional calculus gives
$S_{56,n}=\{[\log A_8]_{55}-\log\epsilon_m\}/(3m)=O(\log n/n)$:
the scaled residual variances are at least $\epsilon_m$, and both the
scaled residual diagonal and loading Gram matrix are uniformly bounded.
Thus $S_{56,n}\rightarrow0$ while the null-pair coordinate
has a nonzero limit. No common magnitude threshold separates the two,
although the finitely many loading types have nonzero correlations
bounded away from zero.
\end{example}

\subsection{Finite-dimensional qualifications}
\label{app:sparse-qualifications}

\begin{remark}[Dimension alone is insufficient]\label{rem:sparse-exact-zeros}
By \eqref{eq:sparse-bft-exact}, at a correlation zero $B_{ij}=0$ if and
only if $K_{ij}=0$. The positive-definite sequence
\[
 C_n=\begin{pmatrix}1&0&1/2\\0&1&1/2\\1/2&1/2&1\end{pmatrix}
 \oplus I_{n-3}
\]
has $C_{12,n}=0$ but $B_{12,n}=-1/(2\sqrt6)$ for every $n$.
Its smallest eigenvalue is $1-1/\sqrt2>0$; increasing dimension and
a lower eigenvalue bound alone therefore do not ensure preservation.
\end{remark}

The partial-correlation row of \Cref{tab:method-comparison} concerns
marginal support. To see the distinction, take loading rows
$(1/\sqrt2,0),(0,1/\sqrt2),(1/2,1/2),(\eta,0)$ with
$0<\eta<2/7$, and residual variances giving unit diagonals.
Then $C_{12}=0$ but $\rho_{12\mid3}=-1/7$, whereas $C_{34}=\eta/2<1/7$.
In a sequential parametrization with variable 3 first, both are
coordinates. Adding positive proportions of the two coordinate loading
directions makes the factors pervasive without altering these values.
No common threshold, before or after Fisher's scalar transformation,
retains the latter coordinate while excluding the former.

\section{Empirical implementation and inverse diagnostics}
\label{app:empirical-implementation}

\paragraph{Data and sample construction.}
We match CRSP daily returns for all available tickers in 2011--2025 to
Compustat eight-digit GICS subindustry codes (\texttt{gsubind}), retaining
all matched observations. Securities are identified by PERMNO, without
issuer aggregation or additional exchange, share-type, capitalization,
price, volume, or index-membership screens. The return is CRSP's
\texttt{DlyRetx}, a price return excluding dividends, in decimal units;
the risk-free rate is not subtracted. The FF5 and momentum predictors
are matched by date and converted from percentages to decimals;
\texttt{Mkt-RF} is the market predictor.

Of 8,670 securities, 1,765 have finite returns on all 3,772 dates from
January 3, 2011 to December 31, 2025. No returns are imputed or winsorized.
The 911 duplicate PERMNO--date records in the candidate balanced panel
are collapsed after checking agreement of returns, factors, and
classification fields; no conflicts occur. The figures use final
available GICS subindustry codes, with ticker and PERMNO breaking ties.

\paragraph{Volatility standardization and inference.}
Each full-period FF5-plus-momentum OLS regression includes an intercept.
The zero-mean GJR-GARCH(1,1) fit uses Gaussian quasi-maximum likelihood,
residuals multiplied by 100 for numerical scaling, and an exponentially
weighted backcast with decay 0.94 over the first 75 observations.
Variances are subsequently restored to return units. The restrictions are
$\omega_i>0$, $\alpha_i\geq0$, $\alpha_i+\gamma_i\geq0$, $\beta_i\geq0$,
and $\alpha_i+\gamma_i/2+\beta_i\leq1-10^{-6}$. Fitted residuals and variances are denoted $u_{it}$
and $h_{it}$, and $z_{it}=u_{it}/\sqrt{h_{it}}$. For the Gaussian working scores, the shocks are centered and scaled
using second moments with divisor $T$. Bartlett HAC tests use nine lags
and the same normalization, treating OLS and volatility filters as fixed;
first-stage parameter uncertainty is not propagated. BIC counts the
retained correlation coordinates, with those filters common to all columns.

\paragraph{Restricted inverse maps.}
The common 5\% mask leaves retained off-diagonal coordinates unchanged;
no maximum-likelihood refit is performed. BCP uses its explicit inverse.
For restricted GFT coordinates $H$, the diagonal iteration is
$v\leftarrow v-\log\{\diag[\exp(H+\diag(v))]\}$, initialized at the
diagonal of $\log\hat C$. It converges in 64 iterations to a maximum
absolute log-diagonal error of $9.45\times10^{-13}$.
For LS, reset the diagonal to minus the off-diagonal row sums before
exponentiation and correlation normalization.
The InvC Newton scheme in \Cref{app:algorithms} attains a maximum
unit-diagonal error of $5.84\times10^{-14}$.
All four reconstructed matrices are positive definite, and the maximum
absolute coordinate recovery error is below $4\times10^{-12}$.

\end{document}